\documentclass[12pt,draftcls,journal,onecolumn]{IEEEtran}

\usepackage{graphicx}
\usepackage[cmex10]{amsmath}
\usepackage{amsfonts}
\usepackage{amssymb}
\usepackage{mathrsfs}
\usepackage{bm}
\usepackage{algorithmic}
\usepackage{algorithm}
\usepackage{array}
\usepackage{mdwmath}
\usepackage{mdwtab}
\usepackage[caption=false,font=footnotesize]{subfig}
\usepackage{fixltx2e}
\usepackage{stfloats}
\usepackage{footnote}
\usepackage{tabularx}
\usepackage{multirow}
\usepackage{color}
\usepackage{textcomp}



\newtheorem{definition}{Definition}
\newtheorem{proposition}{Proposition}
\newtheorem{theorem}{Theorem}

\begin{document}

\title{Socially-Optimal Design of Service Exchange Platforms with Imperfect Monitoring}

\author{Yuanzhang~Xiao and~Mihaela~van~der~Schaar
\thanks{We would like to thank Yu Zhang for very insightful initial discussions on rating mechanisms,
and Prof. William Zame and Simpson Zhang (Department of Economics, UCLA) for their helpful comments that lead to the improvement of this paper.}}

\maketitle

\begin{abstract}
We study the design of service exchange platforms in which long-lived anonymous users exchange services with each other. The users are randomly
matched into pairs of clients and servers repeatedly, and each server can choose whether to provide high-quality or low-quality services to the
client with whom it is matched. Since the users are anonymous and incur high costs (e.g. exert high effort) in providing high-quality services, it is
crucial that the platform incentivizes users to provide high-quality services. Rating mechanisms have been shown to work effectively as incentive
schemes in such platforms. A rating mechanism labels each user by a rating, which summarizes the user's past behaviors, recommends a desirable
behavior to each server (e.g., provide higher-quality services for clients with higher ratings), and updates each server's rating based on the
recommendation and its client's report on the service quality. Based on this recommendation, a low-rating user is less likely to obtain high-quality
services, thereby providing users with incentives to obtain high ratings by providing high-quality services.

However, if monitoring or reporting is imperfect -- clients do not perfectly assess the quality or the reports are lost -- a user's rating may not be
updated correctly. In the presence of such errors, existing rating mechanisms cannot achieve the social optimum. In this paper, we propose the first
rating mechanism that does achieve the social optimum, even in the presence of monitoring or reporting errors. On one hand, the socially-optimal
rating mechanism needs to be complicated enough, because the optimal recommended behavior depends not only on the current rating distribution, but
also (necessarily) on the history of past rating distributions in the platform. On the other hand, we prove that the social optimum can be achieved
by ``simple'' rating mechanisms that use \emph{binary} rating labels and a \emph{small} set of (three) recommended behaviors. We provide design
guidelines of socially-optimal rating mechanisms, and a low-complexity online algorithm for the rating mechanism to determine the optimal recommended
behavior.
\end{abstract}


\section{Introduction}\label{sec:intro}
Service exchange platforms have proliferated as the medium that allows the users to exchange services valuable to each other. For instance, emerging
new service exchange platforms include crowdsourcing systems (e.g. in Amazon Mechanical Turk and CrowdSource) in which the users exchange labor
\cite{Kittur_08_Crowdsourcing}\cite{YuMihaela_TEAC}, online question-and-answer websites (e.g. in Yahoo! Answers and Quora) in which the users
exchange knowledge \cite{YuMihaela_TEAC}, peer-to-peer (P2P) networks in which the users exchange files/packets
\cite{YuMihaela_P2P}\cite{Blanc}\cite{Feldman_04_P2P}, and online trading platforms (e.g. eBay) where the users exchange goods \cite{Dellarocas}. In
a typical service exchange platform, a user plays a dual role: as a \emph{client}, who requests services, and as a \emph{server}, who chooses to
provide high-quality or low-quality services. Common features of many service exchange platforms are: the user population is large and users are
anonymous. In other words, each user interacts with a randomly-matched partner without knowing its partner's identity (However, the platform does
know the identify of the interacting users.). The absence of a fixed partner and the anonymity of the users create incentive problems -- namely the
users tend to ``free-ride'' (i.e., receive high-quality services from others as a client, while providing low-quality services as a server). In
addition, a user generally may not be able to perfectly monitor\footnote{The monitoring discussed throughout this paper is a user's observation on
its current partner's actions. Each user knows nothing about the ongoing interactions among the other pairs of users.} its partner's action, which
makes it even harder to incentivize the users to provide high-quality services.

An important class of incentive mechanisms for service exchange platforms are the rating mechanisms\footnote{Note that the rating mechanisms studied
in this paper focus on dealing with moral hazard problems, namely the server's quality of service is not perfectly observable. They are different
from the rating mechanisms dealing with adverse selection problems, namely the problems of identifying the users' types. See
\cite[Sec.~I]{Dellarocas} for detailed discussions on the above two classes of rating mechanisms.} \cite{YuMihaela_TEAC}--\cite{Ellison_SocialNorm},
in which each user is labeled with a rating based on its past behaviors in the system. A rating mechanism consists of a rating update rule and a
recommended strategy\footnote{Different terminologies have been used in the existing literature. For example,
\cite{Dellarocas}\cite{Kandori_SocialNorm} used ``reputation'' for ``rating'', and \cite{Kandori_SocialNorm} used ``social norm'' for ``recommended
strategy''.}. The recommended strategy specifies what is the desirable behavior under the current system state (e.g. the current rating profile of
the users or the current rating distribution). For example, the rating mechanism may recommend providing high-quality services for all the users when
the majority of users have high ratings, while recommending to provide high-quality services only to high-rating users when the majority have low
ratings. Then, based on each client's report on the quality of service, the rating mechanism revises each server's rating according to the rating
update rule. Generally speaking, the ratings of the users who comply with (resp. deviate from) the recommended behaviors go up (resp. down). Hence,
each user's rating summarizes its past behavior in the system. By keeping track of all the users' ratings and recommending them to reward (resp.
punish) the users with high (resp. low) ratings, the rating mechanism gives incentives to the users to obtain high ratings by rewarding them
indirectly, through recommending other users to provide them with high-quality services.

Existing rating mechanisms have been shown to work well when monitoring and reporting are perfect. However, when monitoring and reporting are subject
to errors, existing rating mechanisms cannot achieve the social optimum \cite{YuMihaela_TEAC}--\cite{Ellison_SocialNorm}. The errors, which are often
encountered in practice, may arise either from the client's own incapability of accurate assessment (for instance, the client, who wants to translate
some sentences into a foreign language, cannot accurately evaluate the server's translation), or from some system errors (for example, the client's
report on the server's service quality is missing due to network errors)\footnote{Note that the errors in this paper are not caused by the strategic
behaviors of the users. In other words, the clients report the service quality truthfully, and do not misreport intentionally to manipulate the
rating mechanism for their own interests. If the clients may report strategically, the mechanism can let the platform to assess the service quality
(still, with errors) to avoid strategic reporting.}. In the presence of errors, the server's rating may be wrongly updated. Hence, even if the users
follow the recommended desirable behavior, the platform may still fall into some ``bad'' states in which many users have low ratings due to erroneous
rating updates. In these bad states, the users with low ratings receive low-quality services, resulting in large performance loss compared to the
social optimum. This performance loss in the bad states is the major reason for the inefficiency of the existing rating mechanisms.

In this paper, we propose the first rating mechanisms that can achieve the social optimum even under imperfect monitoring. A key feature of the
proposed rating mechanism is the \emph{nonstationary} recommended strategy, which recommends different behaviors under the same system state,
depending on when this state occurs (for example, the rating mechanism may not always recommend punishing users with low ratings in the bad states).
Note, importantly, that the rating mechanism does not just randomize over different behaviors with a fixed probability in a state. Instead, it
recommends different behaviors in the current state based on the history of past states. We design the recommended strategy carefully, such that the
punishments happen frequently enough to provide sufficient incentives for the users, but not too frequently to reduce the performance loss incurred
in the bad states. The more patient the users are (i.e. the larger discount factor they have), the less frequent are the punishments. As a result,
the designed rating mechanism can asymptotically achieve the social optimum as the users become increasingly patient (i.e. as the discount factor
approaches $1$). This is in contrast with the existing rating mechanisms with \emph{stationary} recommended strategies, whose performance loss does
not vanish even as the users' patience increases. Another key feature of the proposed rating mechanism is the use of differential punishments that
punish users with different ratings differently. In Section~\ref{sec:Inefficiency}, we show that the absence of any one of these two features in our
mechanism will result in performance loss that does not vanish even when the users are arbitrarily patient.

We prove that the social optimum can be achieved by simple rating mechanisms, which assign \emph{binary} ratings to the users and recommend a small
set of \emph{three} recommended behaviors. We provide design guidelines of the rating update rules in socially-optimal rating mechanisms, and a
low-complexity online algorithm to construct the nonstationary recommended strategies. The algorithm essentially solves a system of two linear
equations with two variables in each period, and can be implemented with a memory of a fixed size (although by the definition of nonstationary
strategies, it appears that we may need a memory growing with time to store the history of past states), because we can appropriately summarize the
history of past states (by the solution to the above linear equations).

The rest of the paper is organized as follows. In Section~\ref{sec:related}, we discuss the differences between our work and related works. In
Section~\ref{sec:model}, we describe the model of service exchange systems with rating mechanisms. Then we design the optimal rating mechanisms in
Section~\ref{sec:OptimalMechanism}. Simulation results in Section~\ref{sec:Simulation} demonstrate the performance improvement of the proposed rating
mechanism. Finally, Section~\ref{sec:Conclusion} concludes the paper.

\section{Related Works}\label{sec:related}
\subsection{Related Works on Rating Protocols}
\begin{table}
\centering
\renewcommand{\arraystretch}{1.0}
\caption{Related Works on Rating Protocols.} \label{table:RelatedWork}
\begin{tabular}{|c|c|c|c|c|}
\hline
 & Rating update error & Recommended strategy & Discount factor & Performance loss \\
\hline
\cite{YuMihaela_TEAC}\cite{YuMihaela_P2P} & $\rightarrow0$ & Stationary & $<1$ & Yes \\
\hline
\cite{Blanc}\cite{Feldman_04_P2P} & $>0$ & Stationary & $<1$ & Yes \\
\hline
\cite{Dellarocas} & $>0$ & Stationary/Nonstationary & $<1$ & Yes \\
\hline
\cite{Kandori_SocialNorm}--\cite{Deb} & $=0$ & Stationary & $\rightarrow1$ & Yes \\
\hline
\cite{Ellison_SocialNorm} & $\rightarrow0$ & Stationary & $\rightarrow1$ & Yes \\
\hline
This work & $>0$ & Nonstationary & $<1$ & No \\
\hline
\end{tabular}
\end{table}

Rating mechanisms were originally proposed by \cite{Kandori_SocialNorm} for a large anonymous society, in which users are repeatedly randomly matched
to play the Prisoners' dilemma game. Assuming perfect monitoring, \cite{Kandori_SocialNorm} proposed a simple rating mechanism that can achieve the
social optimum: any user who has defected will be assigned with the lowest rating forever and will be punished by its future partners. Subsequent
research has been focusing on extending the results to more general games (see \cite{Postlewaite}\cite{DalBo_2007}\cite{Hasker_2007}\cite{Deb}), or
on discovering alternative mechanisms (for example, \cite{Takahashi_2010} showed that cooperation can be sustained if each user can observe its
partner's past actions). However, all these works assumed perfect monitoring and were aimed at dealing with the incentive problems caused by the
anonymity of users and the lack of fixed partnership; they did not study the impact of imperfect monitoring. Under imperfect observation/reporting,
the system will collapse under their rating mechanisms because all the users will eventually end up with having low ratings forever due to errors.

Some works \cite{YuMihaela_TEAC}\cite{YuMihaela_P2P}\cite{Ellison_SocialNorm} assumed imperfect monitoring, but focused on the limit case when the
monitoring tends to be perfect. The conclusion of these works is that the social optimum can be achieved in the limit case when the monitoring
becomes ``almost perfect'' (i.e., when the rating update error goes to zero).

Only a few works \cite{Blanc}--\cite{Dellarocas} analyzed rating mechanisms under imperfect monitoring with \emph{fixed nonzero} monitoring errors.
For a variety of rating mechanisms studied in \cite{Blanc}--\cite{Dellarocas}, the performance loss with respect to the social optimum is quantified
in terms of the rating update error. These results confirm that existing rating mechanisms suffer from (severe) performance loss under rating update
errors. Note that the model in \cite{Dellarocas} is fundamentally different than ours. In \cite{Dellarocas}, there is only a single long-lived seller
(server), while all the buyers (clients) are short-lived. Under this model, it is shown in \cite{Dellarocas} that the rating mechanism is bounded
away from social optimum even when nonstationary strategies are used. In contrast, we show that under our model with long-lived servers and clients,
we can achieve the social optimum by nonstationary strategies with differential punishments. In the following, we discuss the intuitions of how to
achieve the social optimum under our model.

There are two sources of inefficiency. One source of inefficiency comes from the stationary recommended strategies, which recommends the same
behavior under the same state \cite{YuMihaela_TEAC}--\cite{Feldman_04_P2P}\cite{Kandori_SocialNorm}--\cite{Ellison_SocialNorm}. As we have discussed
earlier, the inefficiency of the existing rating mechanisms comes from the punishments triggered in the ``bad'' states. Specifically, to give
incentives for the users to provide high-quality services, the rating mechanism \emph{must} punish the low-rating users under certain rating
distributions (i.e. under certain ``bad'' states). When the users are punished (i.e. they are provided with low-quality services), the average
payoffs in these states are far below the social optimum. In the presence of rating update errors, the bad states happen with a probability bounded
above zero (the lower bound depends only on the rating update error). As a result, the low payoffs occur with a frequency bounded above zero, which
incurs an efficiency loss that cannot vanish unless the rating update error goes to zero.

Another source of inefficiency is the lack of differential punishments. As will be proved in Section~\ref{sec:Inefficiency}, the rating mechanisms
with no differential punishment have performance loss even when nonstationary recommended strategies are used.

This paper is the first to propose a class of rating mechanisms that achieve the social optimum even when update errors do not tend to zero. Our
mechanisms rely on (explicitly-constructed) \emph{nonstationary} strategies with differential punishments. The key intuitions of why the proposed
mechanism achieves social optimum are as follows. First, nonstationary strategies punish the users in the bad states only when necessary, depending
on the history of past states. In this way, nonstationary strategies can lower the frequency of punishment in the bad states to a level just enough
to provide sufficient incentives for the users to provide high-quality services. In addition, differential punishment further reduces the loss in
social welfare by transferring payoffs from low-rating users to high-rating users, instead of lowering everyone's payoff with non-differential
punishment.

In Table~\ref{table:RelatedWork}, we compare the proposed work with existing rating mechanisms.

\subsection{Related Works in Game Theory Literature}
Our results are related to folk theorem results for repeated games \cite{FLM1994} and stochastic games \cite{HornerSugayaTakahashi}. However, these
existing folk theorem results \cite{FLM1994}\cite{HornerSugayaTakahashi} cannot be directly applied to our model. First, the results in
\cite{FLM1994} are derived for repeated games, in which every stage game is the same. Our system is modeled as a stochastic game, in which the stage
games may be different because of the rating distributions.

Second, there do exist folk theorems for stochastic games \cite{HornerSugayaTakahashi}, but they also do not apply to our model. The folk theorems
\cite{HornerSugayaTakahashi} apply to standard stochastic games, in which the state must satisfy the following properties: 1) the state, together
with the plan profile, uniquely determines the stage-game payoff, and 2) the state is known to all the users. In our model, since each user's stage
game payoff depends on its own rating, each user's rating must be included in the state and be known to all the users. In other words, if we model
the system as a standard stochastic game in order to apply the folk theorems, we need to define the state as the rating profile of all the users (not
just the rating distribution). Then, the folk theorem states that the social optimum can be asymptotically achieved by strategies that depend on the
history of rating profiles. However, in our model, the players do not know the full rating profile, but only know the rating distribution. Hence, the
strategy can use only the information of rating distributions.\footnote{We insist on restricting to strategies that depend only on the history of
rating distributions because in practice, 1) the platform may not publish the full rating profile due to informational and privacy constraints, and
2) even if the platform does publish such information, it is impractical to assume that the users can keep track of it.} Whether such strategies can
achieve the social optimum is not known according to the folk theorems; we need to prove the existence of socially optimal strategies that use only
the information of rating distributions.

In addition, our results are fundamentally different from the folk theorem results \cite{FLM1994}\cite{HornerSugayaTakahashi} in nature. First,
\cite{FLM1994}\cite{HornerSugayaTakahashi} focus on the limit case when the discount factor goes to one, which is not realistic because the users are
not sufficiently patient. More importantly, the results in \cite{FLM1994}\cite{HornerSugayaTakahashi} are not constructive. They focus on \emph{what}
payoff profiles are achievable, but cannot show \emph{how} to achieve those payoff profiles. They do not determine a lower bound on discount factors
that admit equilibrium strategy profiles yielding the target payoff profile, and hence cannot construct equilibrium strategy profiles. By contrast,
we do determine a lower bound on discount factors that admit equilibrium strategy profiles yielding the target payoff profile, and do construct
equilibrium strategy profiles.

\subsection{Related Mathematical Frameworks}

Rating mechanisms with stationary recommended strategies can be designed by extending Markov decision processes (MDPs) in two important and
non-trivial ways \cite{YuMihaela_TEAC}\cite{YuMihaela_P2P}\cite{Izhak-RatzinParkvanderSchaar}\cite{ParkvanderSchaar_TSP}: 1) since there are multiple
users, the value of each state is a vector of all the users' values, instead of a scalar in standard MDPs, and 2) the incentive compatibility
constraints of self-interested users need to be fulfilled (e.g., the values of ``good'' states, in which most users have high ratings, should be
sufficiently larger than those of ``bad'' states, such that users are incentivized to obtain high ratings), while standard MDPs do not impose such
constraints.

In this paper, we make a significant step forward with respect to the state-of-the-art rating mechanisms with stationary strategies: we design rating
mechanisms where the recommended strategies can be nonstationary. The proposed design leads to significant performance improvements, but is also
significantly more challenging from a theoretical perspective. The key challenge is that nonstationary strategies may choose different actions under
the same state, resulting in possibly different current payoffs in the same state. Hence, the value function under nonstationary strategies are
\emph{set-valued}, which significantly complicates the analysis, compared to \emph{single-valued} value functions under stationary
strategies\footnote{In randomized stationary strategies, although different actions may be taken in the same state \emph{after randomization}, the
probability of actions chosen is fixed. In the Bellman equation, we need to use the \emph{expected payoff before randomization} which is fixed in the
same state, instead of the realized payoffs after randomization. Hence, the value function is still single-valued.}.
\begin{table}
\centering
\renewcommand{\arraystretch}{1.0}
\caption{Related Mathematical Frameworks.} \label{table:RelatedWork_MathematicalFramework}
\begin{tabular}{|c|c|c|c|c|}
\hline
 & Standard MDP & Extended MDP \cite{YuMihaela_TEAC}\cite{YuMihaela_P2P}\cite{Izhak-RatzinParkvanderSchaar}\cite{ParkvanderSchaar_TSP} & Self-generating sets \cite{APS1990}--\cite{HornerSugayaTakahashi} & This work \\
\hline
\# of users & Single & Multiple & Multiple & Multiple \\
\hline
Value function & Single-valued & Single-valued & Set-valued & Set-valued \\
\hline
Incentive constraints & No & Yes & Yes & Yes \\
\hline
Strategies & Stationary & Stationary & Nonstationary & Nonstationary \\
\hline
Discount factor & $<1$ & $<1$ & $\rightarrow1$ & $<1$ \\
\hline
Constructive & Yes & Yes & No & Yes \\
\hline
\end{tabular}
\end{table}

The mathematical framework of analyzing nonstationary strategies with set-valued value functions was proposed as a theory of self-generating sets in
\cite{APS1990}. It was widely used in game theory to prove folk theorems in repeated games \cite{FLM1994} and stochastic games
\cite{HornerSugayaTakahashi}. We have discussed our differences from the folk theorem results \cite{FLM1994}\cite{HornerSugayaTakahashi} in the
previous subsection.

In Table~\ref{table:RelatedWork_MathematicalFramework}, we compare our work with existing mathematical frameworks.

\section{System Model and Problem Formulation}\label{sec:model}

\begin{figure}
\centering
\includegraphics[width =5.0in]{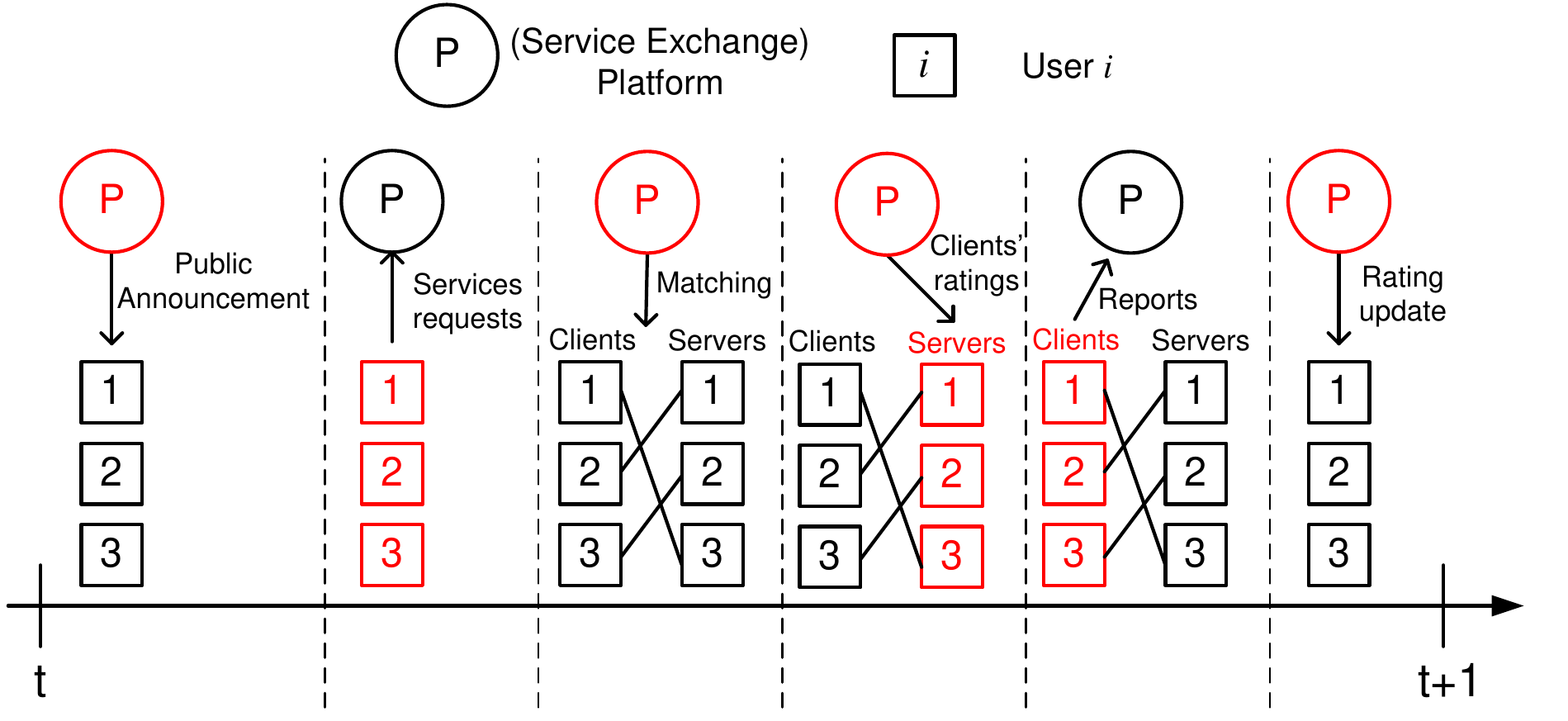}
\caption{Illustration of the rating mechanism in one period.} \label{fig:RatingProtocol}
\end{figure}

\subsection{System Model}
\subsubsection{The Rating Mechanism}
We consider a service exchange platform with a set of $N$ users, denoted by $\mathcal{N}=\{1,\ldots,N\}$. Each user can provide some services (e.g.
data in P2P networks, labor in Amazon Mechanic Turk) valuable to the other users. The rating mechanism assigns each user $i$ a binary label
$\theta_i\in\Theta\triangleq\{0,1\}$, and keep record of the \emph{rating profile} $\bm{\theta}=(\theta_1,\ldots,\theta_N)$. Since the users usually
stay in the platform for a long period of time, we divide time into periods indexed by $t=0,1,2,\ldots$. In each period, the rating mechanism
operates as illustrated in Fig.~\ref{fig:RatingProtocol}, which can be roughly described as follows:
\begin{itemize}
\item Each user requests services as a \emph{client}.
\item Each user, as a \emph{server}, is matched to another user (its client) based on a matching rule.
\item Each server chooses to provide high-quality or low-quality services.
\item Each client reports its assessment of the service quality to the rating mechanism, who will update the server's rating based on the report.
\end{itemize}
Next, we describe the key components in the rating mechanism in details.

\emph{Public announcement:} At the beginning of each period, the platform makes public announcement to the users. The public announcement includes
the rating distribution and the recommended plan in this period. The \emph{rating distribution} indicates how many users have rating 1 and rating 0,
respectively. Denote the rating distribution by $\bm{s}(\bm{\theta})=(s_0(\bm{\theta}),s_1(\bm{\theta}))$, where
$s_1(\bm{\theta})=\sum_{i\in\mathcal{N}} \theta_i$ is the number of users with rating $1$, and $s_0(\bm{\theta})=N-s_1(\bm{\theta})$ is the number of
users with rating $0$. Denote the set of all possible rating distributions by $S$. Note that the platform does not disclose the rating profile
$\bm{\theta}$ for privacy concerns. The platform also recommends a desired behavior in this period, called \emph{recommended plan}. The recommended
plan is a contingent plan of which service quality the server should choose based on its own rating and its client's rating. Formally, the
recommended plan, denoted by $\alpha_0$, is a mapping $\alpha_0:\Theta\times\Theta\rightarrow \{0,1\}$, where $0$ and $1$ represent ``low-quality
service'' and ``high-quality service'', respectively. Then $\alpha_0(\theta_c,\theta_s)$ denotes the recommended service quality for a server with
rating $\theta_s$ when it is matched to a client with rating $\theta_c$. We write the set of recommended plans as
$\mathcal{A}=\{\alpha|\alpha:\Theta\times\Theta\rightarrow \{0,1\}\}$. We are particularly interested in the following three plans. The plan
$\alpha^{\rm a}$ is the \emph{altruistic} plan:
\begin{eqnarray}
\alpha^{\rm a}(\theta_c,\theta_s)=1, \forall \theta_c,\theta_s\in\{0,1\},
\end{eqnarray}
where the server provides high-quality service regardless of its own and its client's ratings. The plan $\alpha^{\rm f}$ is the \emph{fair} plan:
\begin{eqnarray}
\alpha^{\rm f}(\theta_c,\theta_s)=\left\{\begin{array}{ll} 0 & \theta_s>\theta_c \\ 1 & \theta_s\leq\theta_c \end{array}\right.,
\end{eqnarray}
where the server provides high-quality service when its client has higher or equal ratings. The plan $\alpha^{\rm s}$ is the \emph{selfish} plan:
\begin{eqnarray}
\alpha^{\rm s}(\theta_c,\theta_s)=0, \forall \theta_c,\theta_s\in\{0,1\},
\end{eqnarray}
where the server provides low-quality service regardless of its own and its client's ratings. Note that we can consider the selfish plan as a
non-differential punishment in which everyone receives low-quality services, and consider the fair plan as a differential punishment in which users
with different ratings receive different services.

\emph{Service requests:} The platform receives service requests from the users. We assume that there is no cost in requesting services, and that each
user always have demands for services. Hence, all the users will request services.

\emph{Matching:} The platform matches each user $i$, as a client, to another user $m(i)$ who will serve $i$, where $m$ is a matching
$m:\mathcal{N}\rightarrow\mathcal{N}$. Since the platform cannot match a user to itself, we write the set of all possible matchings as
$M=\left\{m:m~{\rm bijective},~m(i)\neq i,\forall i\in\mathcal{N}\right\}$. The mechanism defines a random matching rule, which is a probability
distribution $\mu$ on the set of all possible matchings $M$. In this paper, we focus on the uniformly random matching rule, which chooses every
possible matching $m\in M$ with the same probability. The analysis can be easily generalized to the cases with non-uniformly random matching rules,
as long as the matching rules do not distinguish users with the same rating.

\emph{Clients' ratings:} The platform will inform each server of its client's rating, such that each server can choose its service quality based on
its own and its client's ratings.

\emph{Reports:} After the servers serve their clients, the platform elicits reports from the clients about their service quality. However, the report
is \emph{inaccurate}, either by the client's incapability of accurate assessment (for instance, the client, who wants to translate some sentences
into a foreign language, cannot accurately evaluate the server's translation) or by some system error (for example, the data/file sent by the server
is missing due to network errors). We characterize the erroneous report by a mapping $R:\{0,1\}\rightarrow\Delta(\{0,1\})$, where $\Delta(\{0,1\})$
is the probability distribution over $\{0,1\}$. For example, $R(1|q)$ is the probability that the client reports ``high quality'' given the server's
actual service quality $q$. In this paper, we focus on reports of the following form
\begin{eqnarray}\label{eqn:Report}
R(r|q)=\left\{\begin{array}{ll}1-\varepsilon, & r=q \\ \varepsilon, & r\neq q \end{array} \right.,~\forall r,q\in\{0,1\},
\end{eqnarray}
where $\varepsilon\in[0,0.5)$ is the report error probability.\footnote{We confine the report error probability $\varepsilon$ to be smaller than
$0.5$. If the error probability $\varepsilon$ is $0.5$, the report contains no useful information about the service quality. If the error probability
is larger than $0.5$, the rating mechanism can use the opposite of the report as an indication of the service quality, which is equivalent to the
case with the error probability smaller than $0.5$.} Note, however, that reporting is not strategic: the client reports truthfully, but with errors.
If the clients report strategically, the mechanism can let the platform to assess the service quality (still, with errors) to avoid strategic
reporting. For simplicity, we assume that the report error is symmetric, in the sense that reporting high and low qualities have the same error
probability. Extension to asymmetric report errors is straightforward.

\emph{Rating update:} Given the clients' reports, the platform updates the servers' ratings according to the \emph{rating update rule}, which is
defined as a mapping $\tau:\Theta\times\Theta\times \{0,1\} \times \mathcal{A}\rightarrow\Delta(\Theta)$. For example,
$\tau(\theta_s^\prime|\theta_c,\theta_s,r,\alpha_0)$ is the probability of the server's updated rating being $\theta_s^\prime$, given the client's
rating $\theta_c$, the server's own rating $\theta_s$, the client's report $r$, and the recommended plan $\alpha_0$. We focus on the following class
of rating update rules (see Fig.~\ref{fig:Automata_RatingUpdateRule} for illustration):
\begin{eqnarray}\label{eqn:ReputationUpdateRule}
\tau(\theta_s^\prime|\theta_c,\theta_s,r,\alpha_0)
= \left\{\begin{array}{ll} \beta_{\theta_s}^+, & \theta_s^\prime=1, r\geq\alpha_0(\theta_c,\theta_s) \\
                             1-\beta_{\theta_s}^+, & \theta_s^\prime=0, r\geq\alpha_0(\theta_c,\theta_s) \\
                             1-\beta_{\theta_s}^-, & \theta_s^\prime=1, r<\alpha_0(\theta_c,\theta_s) \\
                             \beta_{\theta_s}^-, & \theta_s^\prime=0, r<\alpha_0(\theta_c,\theta_s) \\ \end{array}\right..
\nonumber
\end{eqnarray}
In the above rating update rule, if the reported service quality is not lower than the recommended service quality, a server with rating $\theta_s$
will have rating $1$ with probability $\beta_{\theta_s}^+$; otherwise, it will have rating $0$ with probability $\beta_{\theta_s}^-$. Other more
elaborate rating update rules may be considered. But we show that this simple one is good enough to achieve the social optimum.

\begin{figure}
\centering
\includegraphics[width =3.0in]{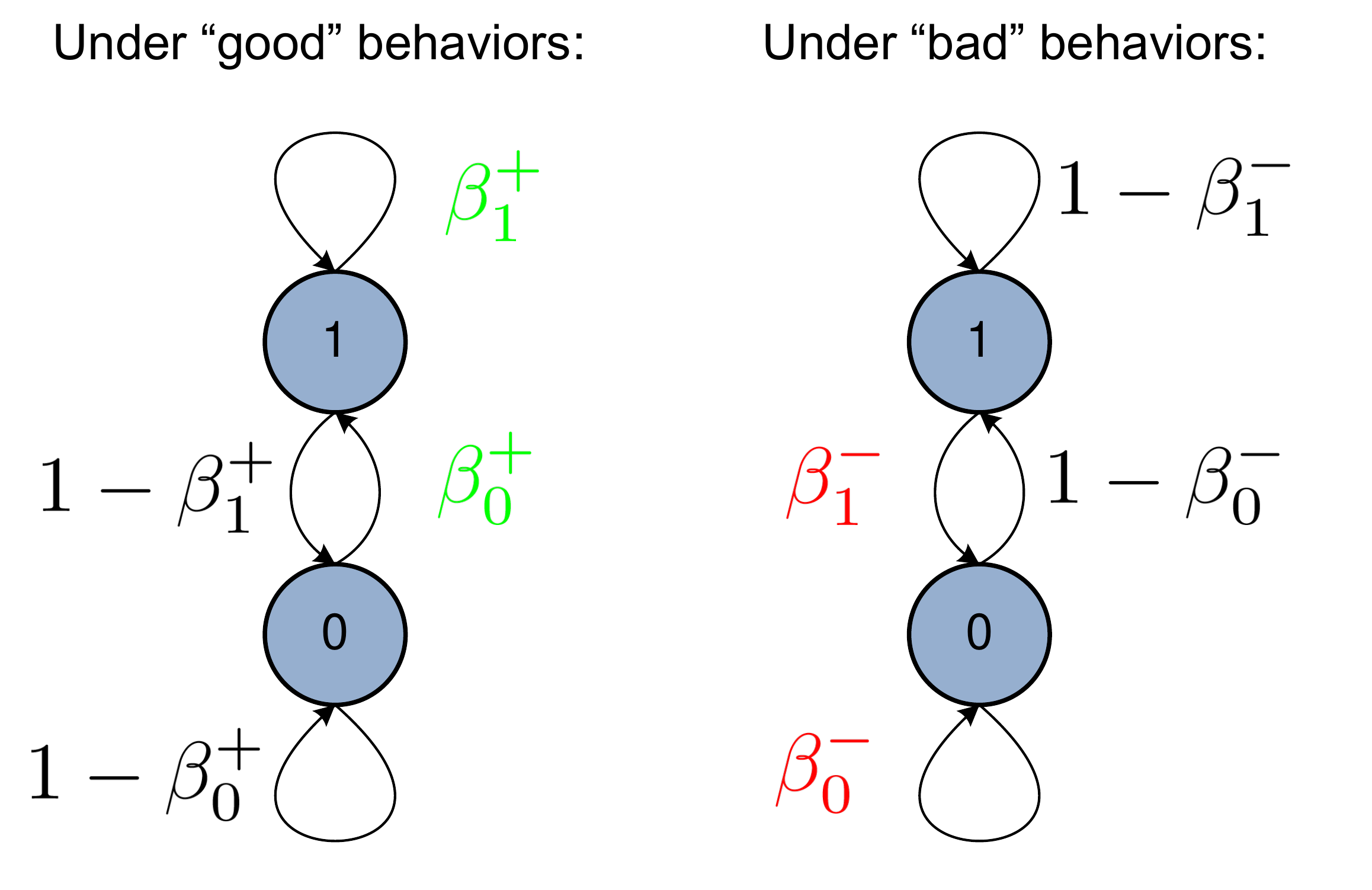}
\caption{Illustration of the rating update rule. The circle denotes the rating, and the arrow denotes the rating update with corresponding
probabilities.} \label{fig:Automata_RatingUpdateRule}
\end{figure}

\emph{Recommended strategy:} The final key component of the rating mechanism is the recommended \emph{strategy}, which determines what recommended
plan should be announced in each period. In each period $t$, the mechanism keeps track of the history of rating distributions, denoted by
$\bm{h}^t=(\bm{s}^0,\ldots,\bm{s}^t)\in S^{t+1}$, and chooses the recommended plan based on $\bm{h}^t$. In other words, the recommended strategy is a
mapping from the set of histories to its plan set, denoted by $\pi_0:\cup_{t=0}^\infty S^{t+1} \rightarrow \mathcal{A}$. Denote the set of all
recommended strategies by $\Pi$. Note that although the rating mechanism knows the rating profile, it determines the recommended plan based on the
history of rating distributions, because 1) this reduces the computational and memory complexity of the protocol, and 2) it is easy for the users to
follow since they do not know the rating profile. Moreover, since the plan set $\mathcal{A}$ has $16$ elements, the complexity of choosing the plan
is large. Hence, we consider the strategies that choose plans from a subset $\mathcal{B}\subseteq \mathcal{A}$, and define $\Pi(\mathcal{B})$ as the
set of strategies restricted on the subset $\mathcal{B}$ of plans.

In summary, the rating mechanism can be represented by the design parameters: the rating update rule and the recommended strategy, and can be denoted
by the tuple $(\tau,\pi_0)$.

\subsubsection{Payoffs} Once a server and a client are matched, they play the gift-giving game in Table~\ref{table:GiftGivingGame}
\cite{YuMihaela_TEAC}--\cite{Kandori_SocialNorm}\cite{Takahashi_2010}\cite{Ellison_SocialNorm}, where the row player is the client and the column
player is the server. We normalize the payoffs received by the client and the server when a server provides low-quality services to $0$. When a
server provides high-quality services, the client gets a benefit of $b>0$ and the worker incurs a cost of $c\in(0,b)$. In the unique Nash equilibrium
of the gift-giving game, the server will provide low-quality services, which results in a zero payoff for both the client and the server. Note that
as in \cite{YuMihaela_TEAC}--\cite{Kandori_SocialNorm}\cite{Takahashi_2010}\cite{Ellison_SocialNorm}, we assume that the same gift-giving game is
played for all the client-server pairs. This assumption is reasonable when the number of users is large. Since $b$ can be considered as a user's
expected benefit across different servers, and $c$ as its expected cost of high-quality service across different clients, the users' expected
benefits/costs should be approximately the same when the number of users is large. This assumption is also valid when the users have different
realized benefit and cost in each period but the same expected benefit $b$ and expected cost $c$ across different periods.

\begin{table}
\renewcommand{\arraystretch}{1.3}\centering
\caption{Gift-Giving Game Between A Client and A Server.}{
\begin{tabular}{r|c|c|}
\multicolumn{1}{r}{}
 &  \multicolumn{1}{c}{high-quality}
 & \multicolumn{1}{c}{low-quality} \\
\cline{2-3}
request & $(b, -c)$ & $(0, 0)$ \\
\cline{2-3}
\end{tabular}}
\label{table:GiftGivingGame}
\end{table}

\emph{Expected payoff in one period:} Based on the gift-giving game, we can calculate each user's expected payoff obtained in one period. A user's
expected payoff in one period depends on its own rating, the rating distribution, and the users' plans. We write user $i$'s plan as
$\alpha_i\in\mathcal{A}$, and the plan profile of all the users as $\bm{\alpha}=(\alpha_1,\ldots,\alpha_N)$. Then user $i$'s expected payoff in one
period is $u_i(\theta_i,\bm{s},\bm{\alpha})$. For illustration, we calculate the users' expected payoffs under several important scenarios, assuming
that all the users follow the recommended plan (i.e. $\alpha_i=\alpha_0,~\forall i\in\mathcal{N}$). When the altruistic plan $\alpha^{\rm a}$ is
recommended, all the users receive the same expected payoff in one period as
$$
u_i(\theta_i,\bm{s},\alpha^{\rm a}\cdot\bm{1}_N)=b-c,~\forall i,\theta_i,\bm{s},
$$
where $\alpha\cdot\bm{1}_N$ is the plan profile in which every user chooses plan $\alpha$. Similarly, when the selfish plan
$\alpha^{\rm s}$ is recommended, all the users receive zero expected payoff in one period, namely
$$
u_i(\theta_i,\bm{s},\alpha^{\rm s}\cdot\bm{1}_N)=0,~\forall i,\theta_i,\bm{s}.
$$
When the fair plan $\alpha^{\rm f}$ is recommended, the users receive expected payoffs in one period
as follows
\begin{eqnarray}
u_i(\theta_i,\bm{s},\alpha^{\rm f}\cdot\bm{1}_N) = \left\{\begin{array}{ll} \frac{s_0-1}{N-1}\cdot b-c, & \theta_i=0 \\
 b-\frac{s_1-1}{N-1}\cdot c, & \theta_i=1\end{array}\right..
\end{eqnarray}
Under the fair plan, the users with rating 1 receive a payoff higher than $b-c$, because they get high-quality services from everyone and provide
high-quality services only when matched to clients with rating 1. In contrast, the users with rating 0 receive a payoff lower than $b-c$. Hence, the
fair plan $\alpha^{\rm f}$ can be considered as a differential punishment.

\emph{Discounted average payoff:} Each user $i$ has its own strategy $\pi_i\in\Pi$. Write the joint strategy profile of all the users as
$\bm{\pi}=(\pi_1,\ldots,\pi_N)$. Then given the initial rating profile $\bm{\theta}^0$, the recommended strategy $\pi_0$ and the joint strategy
profile $\bm{\pi}$ induce a probability distribution over the sequence of rating profiles $\bm{\theta}^1,\bm{\theta}^2,\ldots$. Taking the
expectation with respect to this probability distribution, each user $i$ receives a discounted average payoff $U_i(\bm{\theta}^0,\pi_0,\bm{\pi})$
calculated as
\begin{eqnarray}\label{eqn:RepeatedGamePayoff}
U_i(\bm{\theta}^0,\pi_0,\bm{\pi})=\mathbb{E}_{\bm{\theta}^1,\bm{\theta}^2,\ldots} \left\{(1-\delta)\sum_{t=0}^\infty \delta^t
u_i\left(\theta_i^t,\bm{s}(\bm{\theta}^t),\bm{\pi}(\bm{s}(\bm{\theta}^0),\ldots,\bm{s}(\bm{\theta}^t)\right)\right\} \nonumber
\end{eqnarray}
where $\delta\in[0,1)$ is the common discount factor of all the users. The discount factor $\delta$ is the rate at which the users discount future
payoffs, and reflects the patience of the users. A more patient user has a larger discount factor. Note that the recommended strategy $\pi_0$ does
affect the users' discounted average payoffs by affecting the evolution of the rating profile (i.e. by affecting the expectation operator
$\mathbb{E}_{\bm{\theta}^1,\bm{\theta}^2,\ldots}$).

\subsubsection{Definition of The Equilibrium} The platform adopts \emph{sustainable rating mechanisms}, which specifies a tuple of
rating update rule and recommended strategy $(\tau,\pi_0)$, such that the users find it in their self-interests to follow the recommended strategy.
In other words, the recommended strategy should be an equilibrium.

Note that the interaction among the users is neither a repeated game \cite{FLM1994} nor a standard stochastic game \cite{HornerSugayaTakahashi}. In a
repeated game, every stage game is the same, which is clearly not true in the platform because users' stage-game payoff
$u_i(\theta_i,\bm{s},\bm{\alpha})$ depends on the rating distribution $\bm{s}$. In a standard stochastic game, the state must satisfy: 1) the state
and the plan profile uniquely determines the stage-game payoff, and 2) the state is known to all the users. In the platform, the user's stage-game
payoff $u_i(\theta_i,\bm{s},\bm{\alpha})$ depends on its own rating $\theta_i$, which should be included in the state and be known to all the users.
Hence, if we were to model the interaction as a standard stochastic game, we need to define the state as the rating profile $\bm{\theta}$. However,
the rating profile is not known to the users in our formulation.

To reflect our restriction on recommended strategies that depend only on rating distributions, we define the equilibrium as public announcement
equilibrium (PAE), since the strategy depends on the publicly announced rating distributions. Before we define PAE, we need to define the
continuation strategy $\pi|_{\bm{h}^t}$, which is a mapping $\pi|_{\bm{h}^t}:\cup_{k=0}^\infty \mathcal{H}^k\rightarrow A$ with
$\pi|_{\bm{h}^t}(\bm{h}^k)=\pi(\bm{h}^t\bm{h}^k)$, where $\bm{h}^t\bm{h}^k$ is the concatenation of $\bm{h}^t$ and $\bm{h}^k$.

\begin{definition}\label{def:PPE}
A pair of a recommended strategy and a symmetric strategy profile $(\pi_0,\pi_0\cdot\bm{1}_N)$ is a PAE, if for all $t\geq0$, for all
$\bm{\tilde{h}}^t\in\mathcal{H}^t$, and for all $i\in\mathcal{N}$, we have
\begin{eqnarray}
U_i(\bm{\tilde{\theta}}^t,\pi_0|_{\bm{\tilde{h}}^t},\pi_0|_{\bm{\tilde{h}}^t}\cdot\bm{1}_N) \geq
U_i(\bm{\tilde{\theta}}^t,\pi_0|_{\bm{\tilde{h}}^t},(\pi_i|_{\bm{\tilde{h}}^t},\pi_0|_{\bm{\tilde{h}}^t}\cdot\bm{1}_{N-1})),~\forall
\pi_i|_{\bm{\tilde{h}}^t}\in\Pi, \nonumber
\end{eqnarray}
where $(\pi_i|_{\bm{\tilde{h}}^t},\pi_0|_{\bm{\tilde{h}}^t}\cdot\bm{1}_{N-1})$ is the continuation strategy profile in which user $i$ deviates to
$\pi_i|_{\bm{\tilde{h}}^t}$ and the other users follow the strategy $\pi_0|_{\bm{\tilde{h}}^t}$.
\end{definition}

Note that in the definition, we allow a user to deviate to any strategy $\pi_i\in\Pi$, even if the recommended strategy $\pi_0$ is restricted to a
subset $\mathcal{B}$ of plans. Hence, the rating mechanism is robust, in the sense that a user cannot gain even when it uses more complicated
strategies. Note also that although a rating mechanism can choose the initial rating profile $\bm{\theta}^0$, we require a recommended strategy to
fulfill the incentive constraints under all the initial rating profiles. This adds to the flexibility in choosing the initial rating profile.

PAE is stronger than the Nash equilibrium (NE), because PAE requires the users to not deviate following \emph{any} history, while NE requires the
users to not deviate following the histories that happen in the equilibrium. In this sense, PAE can be considered as a special case of public perfect
equilibrium (PPE) in standard repeated and stochastic games, where the strategies depend only on the rating distribution.

\subsection{The Rating Protocol Design Problem}
The goal of the rating mechanism designer is to choose a rating mechanism $(\tau,\pi_0)$, such that the social welfare at the equilibrium is
maximized in the worst case (with respect to different initial rating profiles). Maximizing the worst-case performance gives us a much stronger
performance guarantee than maximizing the performance under a given initial rating profile. Given the rating update error $\varepsilon$, the discount
factor $\delta$, and the subset $\mathcal{B}$ of plans, the rating mechanism design problem is formulated as:
\begin{eqnarray}\label{eqn:DesignProblem}
W(\varepsilon,\delta,\mathcal{B})=&\displaystyle\max_{\tau,\pi_0\in\Pi(\mathcal{B})}& \min_{\bm{\theta}^0\in \Theta^N} \frac{1}{N} \sum_{i\in\mathcal{N}} U_i(\bm{\theta}^0,\pi_0,\pi_0\cdot\bm{1}_{N}) \nonumber\\
& s.t. & (\pi_0,\pi_0\cdot\bm{1}_N)~\mathrm{is~a~PAE}.
\end{eqnarray}

Note that $W(\varepsilon,\delta,\mathcal{B})$ is strictly smaller than the social optimum $b-c$ for any $\varepsilon$, $\delta$, and $\mathcal{B}$.
This is because to exactly achieve $b-c$, the protocol must recommend the altruistic plan $\alpha^{\rm a}$ all the time (even when someone shirks),
which cannot be an equilibrium. However, we can design rating mechanisms such that for any fixed $\varepsilon\in[0,0.5)$,
$W(\varepsilon,\delta,\mathcal{B})$ can be arbitrarily close to the social optimum. In particular, such rating mechanisms can be simple, in that
$\mathcal{B}$ can be a small subset of three plans (i.e. $\mathcal{B}=A^{\rm afs}\triangleq\{\alpha^{\rm a},\alpha^{\rm f},\alpha^{\rm s}\}$).

\section{Sources of Inefficiency}\label{sec:Inefficiency}
To illustrate the importance of designing optimal, yet simple rating schemes, as well as the challenges associated with determining such a design, in
this section, we discuss several simple rating mechanisms that appear to work well intuitively, and show that they are actually bounded away from the
social optimum even when the users are arbitrarily patient. We will illustrate why they are inefficient, which gives us some insights on how to
design socially-optimal rating mechanisms.

\subsection{Stationary Recommended Strategies}
\subsubsection{Analysis} We first consider rating mechanisms with stationary recommended strategies, which determine the recommended plan solely based on the
current rating distribution. Since the game is infinitely-repeatedly played, given the same rating distribution, the continuation game is the same
regardless of when the rating distribution occurs. Hence, similar to MDP, we can assign value functions $V_\theta^{\pi_0}: S\rightarrow
\mathbb{R},~\forall \theta$ for a stationary strategy $\pi_0$, with $V_\theta^{\pi_0}(\bm{s})$ being the continuation payoff of a user with rating
$\theta$ at the rating distribution $\bm{s}$. Then, we have the following set of equalities that the value function needs to satisfy:
\begin{eqnarray}\label{eqn:ValueIteration}
V_{\theta_i}^{\pi_0}(\bm{s}) &=& (1-\delta)\cdot u_i(\pi_0(\bm{s}),\pi_0(\bm{s})\cdot \bm{1}_N) \\
& + &\delta\cdot\sum_{\theta_i^\prime,\bm{s}^\prime}\Pr(\theta_i^\prime,\bm{s}^\prime|\theta_i,\bm{s},\pi_0(\bm{s}),\pi_0(\bm{s})\cdot\bm{1}_N)\cdot
V_{\theta_i^\prime}^{\pi_0}(\bm{s}^\prime), \forall i\in\mathcal{N}, \nonumber
\end{eqnarray}
where $\Pr(\theta_i^\prime,\bm{s}^\prime|\theta_i,\bm{s},\pi_0(\bm{s}),\pi_0(\bm{s})\cdot\bm{1}_N)$ is the transition probability. We can solve for
the value function from the above set of equalities, which are similar to the Bellman equations in MDP. However, note that obtaining the value
function is not the final step. We also need to check the incentive compatibility (IC) constraints. For example, to prevent user $i$ from deviating
to plan $\alpha^\prime$, the following inequality has to be satisfied:
\begin{eqnarray}\label{eqn:Stationary_IC}
V_{\theta_i}^{\pi_0}(\bm{s}) &\geq& (1-\delta)\cdot u_i(\pi_0(\bm{s}),(\alpha^\prime, \pi_0(\bm{s})\cdot \bm{1}_{N-1})) \\
& + &\delta\cdot\sum_{\theta_i^\prime,\bm{s}^\prime}\Pr(\theta_i^\prime,\bm{s}^\prime|\theta_i,\bm{s},\pi_0(\bm{s}),(\alpha^\prime,
\pi_0(\bm{s})\cdot \bm{1}_{N-1}))\cdot V_{\theta_i^\prime}^{\pi_0}(\bm{s}^\prime), \forall i\in\mathcal{N}. \nonumber
\end{eqnarray}

Given a rating mechanism with a stationary recommended strategy $\pi_0$, if its value function satisfies all the IC constraints, we can determine the
social welfare of the rating mechanism. For example, suppose that all the users have an initial rating of 1. Then, all of them achieve the expected
payoff $V_1^{\pi_0}(0,N)$, which is the social welfare achieved under this rating mechanism.

Note that given a recommended strategy $\pi_0$, it is not difficult to compute the value function by solving the set of linear equations in
\eqref{eqn:ValueIteration} and check the IC constraints according to the set of linear inequalities in \eqref{eqn:Stationary_IC}. However, it is
difficult to derive structural results on the value function (e.g. whether the state with more rating-1 users has a higher value), and thus difficult
to know the structures of the optimal recommended strategy (e.g. whether the optimal recommended strategy is a threshold strategy). The difficulty
mainly comes from the complexity of the transition probabilities
$\Pr(\theta_i^\prime,\bm{s}^\prime|\theta_i,\bm{s},\pi_0(\bm{s}),\pi_0(\bm{s})\cdot\bm{1}_N)$. For example, assuming $\pi_0(\bm{s})=\alpha^{\rm a}$,
we have
\begin{eqnarray}
\begin{array}{l} \Pr(1,\bm{s}^\prime|1,\alpha^{\rm a},\alpha^{\rm a}\cdot\bm{1}_N) = x_1^+ \cdot \\
\sum_{k=\max\{0,s_1^\prime-1-(N-s_1)\}}^{\min\{s_1-1,s_1^\prime-1\}} {s_1-1 \choose k} (x_1^+)^k (1-x_1^+)^{s_1-1-k} {N-s_1 \choose s_1^\prime-1-k}
(x_0^+)^{s_1^\prime-1-k}(1-x_0^+)^{N-s_1-(s_1^\prime-1-k)}\end{array}, \nonumber
\end{eqnarray}
where $x_1^+ \triangleq (1-\varepsilon)\beta_1^+ + \varepsilon(1-\beta_1^-)$ is the probability that a rating-1 user's rating remains to be 1, and
$x_0^+ \triangleq (1-\varepsilon)\beta_0^+ + \varepsilon(1-\beta_0^-)$ is the probability that a rating-0 user's rating goes up to 1. We can see that
the transition probability has combinatorial numbers in it and is complicated. Hence, although the stationary strategies themselves are simpler than
the nonstationary strategies, they are \emph{harder to compute}, in the sense that it is difficult to derive structural results for rating mechanisms
with stationary recommended strategy. In contrast, we are able to develop a unified design framework for socially-optimal rating mechanisms with
nonstationary recommended strategies.

\subsubsection{Inefficiency}
We measure the efficiency of the rating mechanisms with stationary recommended strategies using the ``price of stationarity'' ($\mathrm{PoStat}$),
defined as
\begin{eqnarray}
\mathrm{PoStat}(\varepsilon,\mathcal{B}) = \frac{\lim_{\delta\rightarrow1} W^s(\varepsilon,\delta,\mathcal{B})}{b-c},
\end{eqnarray}
where $W^s(\varepsilon,\delta,\mathcal{B})$ is the optimal value of a modified optimization problem \eqref{eqn:DesignProblem} with an additional
constraint that $\pi_0$ is stationary.

Note that $\mathrm{PoStat}(\varepsilon,\mathcal{B})$ measures the efficiency of a class of rating mechanisms (not a specific rating mechanism),
because we optimize over all the rating update rules and stationary recommended strategies restricted on $\mathcal{B}$. $\mathrm{PoStat}$ is a number
between 0 and 1. A small $\mathrm{PoStat}$ indicates a low efficiency.

Through simulation, we can compute $\mathrm{PoStat}(0.1,A^{\rm afs})=0.720$. In other words, even with differential punishment $\alpha^{\rm f}$, the
performance of stationary strategies is bounded away from social optimum. We compute the $\mathrm{PoStat}$ in a platform with $N=5$ users, the
benefit $b=3$, the cost $c=1$, and $\varepsilon=0.1$. Under each discount factor $\delta$, we assign values between 0 and 1 with a 0.1 grid to
$\beta_\theta^+, \beta_\theta^-$ in the rating update rule, namely we try $11^4$ rating update rules to select the optimal one. For each rating
update rule, we try all the $3^{N+1}=729$ stationary recommended strategies restricted on $A^{\rm afs}$. In Table~\ref{table:ThresholdSocialWelfare},
we list normalized social welfare under different discount factors.

\begin{table}
\renewcommand{\arraystretch}{1.1} \centering
\caption{Normalized social welfare of stationary strategies restricted on $A^{\rm afs}$.} \label{table:ThresholdSocialWelfare}
\begin{tabular}{|c|c|c|c|c|c|c|}
\hline
$\delta$ & $0.7$ & $0.8$ & $0.9$ & $0.99$ & $0.999$ & $0.9999$ \\
\hline
Normalized welfare & 0.690 & 0.700 & 0.715 & 0.720 & 0.720 & 0.720 \\
\hline
\end{tabular}
\end{table}

As mentioned before, the inefficiency of stationary strategies is due to the punishment exerted under certain rating distributions. For example, the
optimal recommended strategies discussed above recommend the selfish or fair plan when at least one user has rating 0, resulting in performance loss.
One may think that when the users are more patient (i.e. when the discount factor is larger), we can use milder punishments by lowering the
punishment probabilities $\beta_1^-$ and $\beta_0^-$, such that the rating distributions with many low-rating users happen with less frequency.
However, simulations on the above strategies show that, to fulfill the IC constraints, the punishment probabilities cannot be made arbitrarily small
even when the discount factor is large. For example, Table~\ref{table:ThresholdMinimumPunishment} shows the minimum punishment probability
$\beta_1^-$ (which is smaller than $\beta_0^-$) of rating mechanisms restricted on $A^{\rm afs}$ under different discount factors. In other words,
the rating distributions with many low-rating users will happen with some probabilities bounded above zero, with a bound independent of the discount
factor. Hence, the performance loss is bounded above zero regardless of the discount factor. Note that in a nonstationary strategy, we could choose
whether to punish in rating distributions with many low-rating users, depending on the history of past rating distributions. This adaptive adjustment
of punishments allows nonstationary strategies to potentially achieve the social optimum.

\begin{table}
\renewcommand{\arraystretch}{1.1} \centering
\caption{Minimum punishment probabilities of rating mechanisms restricted on $A^{\rm afs}$ when $\varepsilon=0.1$.}
\label{table:ThresholdMinimumPunishment}
\begin{tabular}{|c|c|c|c|c|c|c|c|}
\hline
$\delta$ & $0.7$ & $0.8$ & $0.9$ & $0.99$ & $0.999$ & $0.9999$ & $0.99999$ \\
\hline
Minimum $\beta_1^-$ & 0.8 & 0.8 & 0.6 & 0.6 & 0.3 & 0.3 & 0.3 \\
\hline
\end{tabular}
\end{table}

\subsection{Lack of Differential Punishments}\label{subsec:Analysis_InefficientStrategies}
We have discussed in the previous subsection the inefficiency of stationary strategies. Now we consider a class of nonstationary strategies
restricted on the subset of plans $A^{\rm as}$. Under this class of strategies, all the users are rewarded (by choosing $\alpha^{\rm a}$) or punished
(by choosing $\alpha^{\rm s}$) simultaneously. In other words, there is no differential punishment that can ``transfer'' some payoff from low-rating
users to high-rating users. We quantify the performance loss of this class of nonstationary strategies restricted on $A^{\rm as}$ as follows.

\begin{proposition}\label{proposition:RestrictedStrategies}
For any $\varepsilon>0$, we have
\begin{eqnarray}\label{eqn:BoundedAway_StronglySymmetric}
\lim_{\delta\rightarrow1} W(\varepsilon,\delta,A^{\rm as})\leq b-c-\zeta(\varepsilon),
\end{eqnarray}
where $\zeta(\varepsilon)>0$ for any $\varepsilon>0$.
\end{proposition}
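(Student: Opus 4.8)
The plan is to bound the best achievable strongly-symmetric value and show it falls short of $b-c$ by a fixed amount. The starting observation is that, because every plan in $A^{\rm as}=\{\alpha^{\rm a},\alpha^{\rm s}\}$ is constant in $(\theta_c,\theta_s)$, a user's stage payoff never depends on ratings; consequently, along any recommended strategy restricted to $A^{\rm as}$, every user shares a common continuation value that depends only on the public history of rating distributions, and a user's own rating is payoff-irrelevant. I would therefore write the continuation value entering a period with $s_1'$ rating-$1$ users as a single scalar $W(s_1')$ with $0\le W(\cdot)\le \bar v$, where $\bar v$ denotes the supremum of common continuation values over all PAE in $\Pi(A^{\rm as})$ and all histories. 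The social welfare in \eqref{eqn:DesignProblem} is exactly this common value, so $W(\varepsilon,\delta,A^{\rm as})\le\bar v$, and it suffices to bound $\bar v$.

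Next I would set up the recursion at the top of the value set, in the spirit of self-generation. At a history whose value is arbitrarily close to $\bar v$, the recommended plan must be $\alpha^{\rm a}$, since any period playing $\alpha^{\rm s}$ yields current payoff $0$ and hence value at most $\delta\bar v<\bar v$. This gives the value equation $\bar v=(1-\delta)(b-c)+\delta\,\mathbb{E}_{P_0}[W]$, where $P_0$ is the law of the next rating distribution under full compliance. Rearranging yields the identity $\bar v-\mathbb{E}_{P_0}[W]=\frac{1-\delta}{\delta}\big[(b-c)-\bar v\big]$, which converts a bound on the one-period expected value drop into a bound on the efficiency loss.

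I would then extract the needed value drop from a single server's incentive constraint. Consider any user $i$ recommended to provide high quality under $\alpha^{\rm a}$: shirking saves the cost, a one-shot gain of $(1-\delta)c$, and its only effect on the public signal is through $i$'s own rating transition, which lowers the probability that $i$ ends at rating $1$ from $x^+$ to some $\tilde x^+<x^+$, with gap $p\triangleq x^+-\tilde x^+=(1-2\varepsilon)(\beta^++\beta^--1)$ (here $\beta^\pm$ and $x^+$ are the update parameters for $i$'s current rating). Writing $\hat W(\theta_i')=\mathbb{E}[W\mid\theta_i']$ and $D=\hat W(1)-\hat W(0)$, incentive compatibility forces $(1-\delta)c\le\delta p D$, i.e.\ $D\ge\frac{(1-\delta)c}{\delta p}$. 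Because $\hat W(1)\le\bar v$, I get $\hat W(0)\le\bar v-D$ and hence $\mathbb{E}_{P_0}[W]=\hat W(0)+x^+ D\le\bar v-(1-x^+)D$. Combining this with the identity above and cancelling the common factor $\frac{1-\delta}{\delta}$ gives the $\delta$-free bound $(b-c)-\bar v\ge\frac{(1-x^+)c}{p}$.

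Finally I would show the right-hand side is bounded below by a positive constant uniformly over all rating update rules the designer may pick. Substituting $1-x^+=(1-\varepsilon)(1-\beta^+)+\varepsilon\beta^-$ and $p=(1-2\varepsilon)(\beta^++\beta^--1)$ and minimizing the ratio over $\beta^+,\beta^-\in[0,1]$ with $\beta^++\beta^->1$, the minimum is attained as $\beta^+\to1$ and equals $\frac{\varepsilon}{1-2\varepsilon}$; rules with $\beta^++\beta^-\le1$ make $p\le0$, cannot satisfy the incentive constraint at all, and only lower $\bar v$. Thus $\zeta(\varepsilon)=\min\{\frac{c\varepsilon}{1-2\varepsilon},\,b-c\}>0$ works, and taking $\delta\to1$ (noting $\zeta$ is independent of $\delta$) finishes the proof. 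The main obstacle I anticipate is the first step, rigorously justifying that the continuation value is common across users and a one-dimensional function of the rating-distribution history so that the scalar recursion and the monotonicity bound $\mathbb{E}_{P_0}[W]\le\bar v-(1-x^+)D$ are valid; the incentive-constraint and optimization steps are then routine.
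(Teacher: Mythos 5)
Your proof is correct, and it takes a genuinely different route from the paper's. The paper likewise collapses the problem to a common scalar value $\bar v$ (using that stage payoffs under $A^{\rm as}$ are rating-independent) and shows the altruistic plan must be played at the top of the value set, but it then restricts attention to continuation payoff functions taking only two values on a partition $\{S_A,S_B\}$ of the set of rating distributions, writes out the IC constraints separately for the three partial-shirking deviations, obtains the loss $c\,\rho(\bm{\theta},\alpha_0,S_B)\sum_{\bm{s}'\in S_B}q(\bm{s}'|\cdot)$, optimizes over $(\alpha_0,S_B)$, and defers the claim that general continuation functions cannot do better to ``the same logic as Proposition~6 of \cite{Dellarocas}.'' You instead condition on the deviator's own next rating, set $D=\hat{W}(1)-\hat{W}(0)$, and use the fact that a unilateral deviation perturbs the law of the next public state only through the deviator's own transition probability (gap $p=(1-2\varepsilon)(\beta^++\beta^--1)$); this handles arbitrary continuation functions in one stroke, needs only the full-shirk deviation, and delivers the closed form $\zeta(\varepsilon)=\min\{c\varepsilon/(1-2\varepsilon),\,b-c\}$ instead of an implicitly defined optimum --- arguably a cleaner and more self-contained argument. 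The reduction you flag as the main obstacle (a common continuation value measurable with respect to the history of rating distributions) does hold and is exactly what the paper verifies via $v_i=v_j$; it follows because realized stage payoffs are identical across users in every period and rating transitions are conditionally independent across users under $\alpha^{\rm a}\cdot\bm{1}_N$. Two minor imprecisions, neither fatal: update rules with $\beta^++\beta^-<1$ can in principle satisfy IC with $D<0$, so they are not ruled out ``at all''; but the mirror-image bound $\mathbb{E}[W]\leq\bar v-x^+|D|$ gives the same constant $c\varepsilon/(1-2\varepsilon)$, so your conclusion is unaffected. Likewise, since the designer chooses $(\beta_1^\pm)$ and $(\beta_0^\pm)$ separately, you should note that the per-deviator bound holds for each rating class present and that your single-pair minimization lower-bounds both; again this changes nothing.
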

\begin{proof}
The proof is similar to the proof of \cite[Proposition~6]{Dellarocas}; see Appendix~\ref{Proof:RestrictedStrategies}.
\end{proof}

The above proposition shows that the maximum social welfare achievable by $(\pi_0,\pi\cdot\bm{1}_N)\in \Pi(A^{\rm as})\times\Pi^N(A^{\rm as})$ at the
equilibrium is bounded away from the social optimum $b-c$, unless there is no rating update error. Note that the performance loss is independent of
the discount factor. In contrast, we will show later that, if we can use the fair plan $\alpha^{\rm f}$, the social optimum can be asymptotically
achieve when the discount factor goes to $1$. Hence, the differential punishment introduced by the fair plan is crucial for achieving the social
optimum.

\section{Socially Optimal Design}\label{sec:OptimalMechanism}
In this section, we design rating mechanisms that asymptotically achieve the social optimum at the equilibrium, even when the rating update rule
$\varepsilon>0$. In our design, we use the APS technique, named after the authors of the seminal paper \cite{APS1990}, which is also used to prove
the folk theorem for repeated games in \cite{FLM1994} and for stochastic games in \cite{HornerSugayaTakahashi}. We will briefly introduce the APS
technique first. Meanwhile, more importantly, we will illustrate why we \emph{cannot} use APS in our setting in the same way as \cite{FLM1994} and
\cite{HornerSugayaTakahashi} did. Then, we will show how we use APS in a different way in our setting, in order to design the optimal rating
mechanism and to construct the equilibrium strategy. Finally, we analyze the performance of a class of simple but suboptimal strategies, which sheds
light on why the proposed strategy can achieve the social optimum.

\subsection{The APS Technique}\label{subsec:APS}
APS \cite{APS1990} provides a characterization of the set of PPE payoffs. It builds on the idea of \emph{self-generating sets} described as follows.
Note that APS is used for standard stochastic games, and recall from our discussion in Section~\ref{sec:related} that the state of the standard
stochastic game is the rating profile $\bm{\theta}$. Then define a set $\mathcal{W}^{\bm{\theta}}\subset \mathbb{R}^N$ for every state
$\bm{\theta}\in\Theta^N$, and write $(\mathcal{W}^{\bm{\theta}^\prime})_{\bm{\theta}^\prime\in\Theta^N}$ as the collection of these sets. Then we
have the following definitions \cite{APS1990}\cite{HornerSugayaTakahashi}\cite{MS2006}. First, we say a payoff profile
$\bm{v}(\bm{\theta})\in\mathbb{R}^N$ is \emph{decomposable} on $(\mathcal{W}^{\bm{\theta}^\prime})_{\bm{\theta}^\prime\in\Theta^N}$ given
$\bm{\theta}$, if there exists a recommended plan $\alpha_0$, an plan profile $\bm{\alpha}^*$, and a \emph{continuation payoff function}
$\bm{\gamma}:\Theta^N\rightarrow \cup_{\bm{\theta}^\prime\in\Theta^N} \mathcal{W}^{\bm{\theta}^\prime}$ with
$\bm{\gamma}(\bm{\theta}^\prime)\in\mathcal{W}^{\bm{\theta}^\prime}$, such that for all $i\in\mathcal{N}$ and for all $\alpha_i\in A$,
\begin{eqnarray}\label{eqn:Decomposability}
v_i &=& (1-\delta)u_i(\bm{\theta},\alpha_0,\bm{\alpha}^*) + \delta \sum_{\bm{\theta}^\prime} \gamma_i(\bm{\theta}^\prime)
q(\bm{\theta}^\prime|\bm{\theta},\alpha_0,\bm{\alpha}^*) \\
&\geq& (1-\delta)u_i(\bm{\theta},\alpha_0,\alpha_i,\bm{\alpha}_{-i}^*) + \delta \sum_{\bm{\theta}^\prime} \gamma_i(\bm{\theta}^\prime)
q(\bm{\theta}^\prime|\bm{\theta},\alpha_0,\alpha_i,\bm{\alpha}_{-i}^*). \nonumber
\end{eqnarray}
Then, we say a set $(\mathcal{W}^{\bm{\theta}})_{\bm{\theta}\in\Theta^N}$ is a self-generating set, if for any $\bm{\theta}$, every payoff profile
$\bm{v}(\bm{\theta})\in\mathcal{W}^{\bm{\theta}}$ is decomposable on $(\mathcal{W}^{\bm{\theta}^\prime})_{\bm{\theta}^\prime\in\Theta^N}$ given
$\bm{\theta}$. The important property of self-generating sets is that any self-generating set is a set of PPE payoffs
\cite{APS1990}\cite{HornerSugayaTakahashi}\cite{MS2006}.

Based on the idea of self-generating sets, \cite{FLM1994} and \cite{HornerSugayaTakahashi} proved the folk theorem for repeated games and stochastic
games, respectively. However, we cannot use APS in the same way as \cite{FLM1994} and \cite{HornerSugayaTakahashi} did for the following reason. We
assume that the users do not know the rating profile of every user, and restrict our attention on symmetric PA strategy profiles. This requires that
each user $i$ cannot use the continuation payoff function $\gamma_i(\bm{\theta})$ directly. Instead, each user $i$ should assign the same
continuation payoff for the rating profiles that have the same rating distribution, namely $\gamma_i(\bm{\theta})=\gamma_i(\bm{\theta}^\prime)$ for
all $\bm{\theta}$ and $\bm{\theta}^\prime$ such that $\bm{s}(\bm{\theta})=\bm{s}(\bm{\theta}^\prime)$.

\subsection{Socially Optimal Design}
As mentioned before, the social optimum $b-c$ can be exactly achieved only by servers providing high-quality service all the time, which is not an
equilibrium. Hence, we aim at achieving the social optimum $b-c$ asymptotically. We define the asymptotically socially optimal rating mechanisms as
follows.

\begin{definition}[Asymptotically Socially Optimal Rating Mechanisms]\label{def:AsymptoticallyOptimal}
Given a rating update error $\varepsilon\in[0,0.5)$, we say a rating mechanism $(\tau(\varepsilon),\pi_0(\varepsilon,\xi,\delta)\in\Pi)$ is
asymptotically socially optimal under $\varepsilon$, if for any small performance loss $\xi>0$, we can find a $\underline{\delta}(\xi)$, such that
for any discount factor $\delta>\underline{\delta}(\xi)$, we have
\begin{itemize}
\item $(\pi_0(\xi,\delta),\pi_0(\xi,\delta)\cdot\bm{1}_N)$ is a PAE;
\item $U_i(\bm{\theta}^0,\pi_0,\pi_0\cdot\bm{1}_N)\geq b-c-\xi,~\forall i\in\mathcal{N},~\forall \bm{\theta}^0$.
\end{itemize}
\end{definition}

Note that in the asymptotically socially optimal rating mechanism, the rating update rule depends only on the rating update error, and works for any
tolerated performance loss $\xi$ and for any the discount factor $\delta>\underline{\delta}$. The recommended strategy $\pi_0$ is a class of
strategies parameterized by $(\varepsilon, \xi, \delta)$, and works for any $\varepsilon\in[0,0.5)$, any $\xi>0$ and any discount factor
$\delta>\underline{\delta}$ under the rating update rule $\tau(\varepsilon)$.

First, we define a few auxiliary variables first for better exposition of the theorem. Define $\kappa_1\triangleq\frac{b}{\frac{N-2}{N-1}b-c}-1$ and
$\kappa_2\triangleq 1+\frac{c}{(N-1)b}$. In addition, we write the probability that a user with rating $1$ has its rating remain at $1$ if it follows
the recommended altruistic plan $\alpha^{\rm a}$ as:
$$
x_1^+\triangleq(1-\varepsilon)\beta_1^+ + \varepsilon(1-\beta_1^-).
$$
Write the probability that a user with rating $1$ has its rating remain at $1$ if it follows the recommended fair plan $\alpha^{\rm f}$ as:
$$
x_{s_1(\bm{\theta})}\triangleq \left[(1-\varepsilon)\frac{s_1(\bm{\theta})-1}{N-1}+\frac{N-s_1(\bm{\theta})}{N-1}\right] \beta_1^+ +
\left(\varepsilon\frac{s_1(\bm{\theta})-1}{N-1}\right) (1-\beta_1^-).
$$
Write the probability that a user with rating $0$ has its rating increase to $1$ if it follows the recommended plan $\alpha^{\rm a}$ or $\alpha^{\rm
f}$:
$$
x_0^+\triangleq(1-\varepsilon)\beta_0^+ + \varepsilon(1-\beta_0^-).
$$

\begin{theorem}\label{theorem:AchieveSocialOptimum}
Given any rating update error $\varepsilon\in[0,0.5)$,
\begin{itemize}
\item \emph{(Design rating update rules):} A rating update rule $\tau(\varepsilon)$ that satisfies
\begin{itemize}
\item Condition~1 (following the recommended plan leads to a higher rating): $$\beta_1^+>1-\beta_1^-~\mathrm{and}~\beta_0^+>1-\beta_0^-,$$
\item Condition~2 (Enough ``reward'' for users with rating 1): $$x_1^+=(1-\varepsilon)\beta_1^+ + \varepsilon(1-\beta_1^-) > \frac{1}{1+\frac{c}{(N-1)b}},$$
\item Condition~3 (Enough ``punishment'' for users with rating 0): $$x_0^+=(1-\varepsilon)\beta_0^+ + \varepsilon(1-\beta_0^-) < \frac{1-\beta_1^+}{\frac{c}{(N-1)b}},$$
\end{itemize}
can be the rating update rule in a asymptotically socially-optimal rating mechanism.
\item \emph{(Optimal recommended strategies):} Given the rating update rule $\tau(\varepsilon)$ that satisfies the above conditions, any small performance loss $\xi>0$,
and any discount factor $\delta\geq\underline{\delta}(\varepsilon,\xi)$ with $\underline{\delta}(\varepsilon,\xi)$ defined in
Appendix~\ref{appendix:LowerBoundDiscountFactor}, the recommended strategy $\pi_0(\varepsilon,\xi,\delta)\in\Pi(A^{\rm afs})$ constructed by
Table~\ref{table:EquilibriumStrategy} is the recommended strategy in a asymptotically socially-optimal rating mechanism.
\end{itemize}
\end{theorem}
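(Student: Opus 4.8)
The plan is to prove both parts simultaneously by exhibiting a self-generating collection of payoff pairs — adapted to the restriction that continuation payoffs may depend only on the rating distribution — and then bounding the resulting welfare. Because the recommended strategy is symmetric and each user's stage payoff depends only on its own rating, I would track, for each rating distribution $\bm{s}$, a pair $(v_0,v_1)$ of continuation payoffs for a rating-$0$ and a rating-$1$ user. The first step is to define a candidate target set $\mathcal{W}$ of such pairs lying in a small neighborhood of $(b-c,b-c)$, with the gap $v_1-v_0$ confined to an interval dictated by Conditions~2 and~3, and to show that every pair in $\mathcal{W}$ is decomposable in the sense of \eqref{eqn:Decomposability} using only the three plans in $A^{\rm afs}$ and continuation payoffs again drawn from $\mathcal{W}$. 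By the self-generation property quoted in Section~\ref{subsec:APS}, this immediately yields that $(\pi_0,\pi_0\cdot\bm{1}_N)$ is a PAE in the sense of Definition~\ref{def:PPE}.

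The incentive-compatibility verification is the heart of the argument. For each of the three plans I would write out the decomposition equality together with the deviation inequality in \eqref{eqn:Decomposability}, where the only profitable one-shot deviation for a server is to shirk (save the cost $c$ at the price of a rating downgrade governed by $\beta_{\theta_s}^-$). Condition~1 guarantees that compliance raises the rating in expectation, so shirking strictly lowers the probability of reaching rating~$1$; the deterrent is then proportional to the continuation gap $v_1-v_0$ scaled by $\delta$. I would show that Conditions~2 and~3 — expressed through $x_1^+$ and $x_0^+$ — are exactly what is needed to make the required gap $v_1-v_0$ feasible inside $\mathcal{W}$ while keeping both $v_0$ and $v_1$ near $b-c$. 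The crucial point is that the fair plan $\alpha^{\rm f}$ implements a \emph{differential} punishment: it transfers payoff from rating-$0$ to rating-$1$ users rather than destroying welfare, so the IC gap can be maintained at a cost to social welfare that shrinks as $\delta\to1$, whereas the selfish plan $\alpha^{\rm s}$ is invoked only as an occasional non-differential correction.

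With decomposability and IC in hand, the second step is the welfare bound. I would argue that the construction of Table~\ref{table:EquilibriumStrategy} recommends the altruistic plan $\alpha^{\rm a}$ in almost every period, triggering $\alpha^{\rm f}$ or $\alpha^{\rm s}$ only when the running continuation target would otherwise leave $\mathcal{W}$, and that the frequency of such corrections is bounded by a quantity proportional to $1-\delta$. Summing the $(1-\delta)\sum_t\delta^t$-weighted stage payoffs then gives each user at least $b-c-\xi$ for every initial profile $\bm{\theta}^0$, provided $\delta$ exceeds a threshold obtained by solving this frequency bound for $\delta$; this supplies the explicit $\underline{\delta}(\varepsilon,\xi)$ promised in Appendix~\ref{appendix:LowerBoundDiscountFactor} and certifies Definition~\ref{def:AsymptoticallyOptimal}.

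The main obstacle I anticipate is establishing self-consistency of the online construction: I must show that the two linear equations solved in each period — which pin down the continuation pair $(v_0',v_1')$ matching the decomposition of the current target under the chosen plan — always return a pair that again lies in $\mathcal{W}$, simultaneously for every reachable rating distribution and for the worst-case initial profile. This is delicate because restricting continuation payoffs to depend only on $\bm{s}$ (not on the full profile $\bm{\theta}$) removes the freedom exploited in the standard folk-theorem constructions of \cite{FLM1994,HornerSugayaTakahashi}, so I cannot appeal to their arguments directly; instead I would verify invariance of $\mathcal{W}$ directly, using the monotonicity from Condition~1 together with the interval bounds from Conditions~2 and~3 to show that the linear map defined by the transition probabilities keeps the target inside $\mathcal{W}$, and by choosing $\delta$ large enough that the per-period displacement of the target is small.
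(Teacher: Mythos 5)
Your plan follows essentially the same route as the paper's proof: a self-generating collection $\{\mathcal{W}(\bm{s})\}_{\bm{s}}$ of payoff pairs $(v_0,v_1)$ near $(b-c,b-c)$, decomposition by the three plans in $A^{\rm afs}$ with continuation payoffs that depend only on the user's next-period rating, reduction of the one-shot deviation constraints to inequalities of the form $(1-2\varepsilon)\left[\beta_\theta^+-(1-\beta_\theta^-)\right](\gamma^1-\gamma^0)\geq\frac{1-\delta}{\delta}c$, and a direct verification that the linear map sending the current target to the continuation pair keeps it inside the set --- which, as you correctly anticipate, is the crux and is where Conditions~1--3 and the bound $\underline{\delta}(\varepsilon,\xi)$ come from (the conditions on $x_1^+$ and $x_0^+$ eliminate the would-be upper bounds on $\delta$, leaving a finite maximum of explicitly computable lower bounds). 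One cosmetic point: the deviations to check are the three partial plans (refuse rating-0 clients, refuse rating-1 clients, refuse everyone), not just full shirking, though in the paper they all collapse to the same inequality.

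The one substantive divergence is your final welfare step. Bounding welfare by arguing that $\alpha^{\rm f}$ or $\alpha^{\rm s}$ is triggered with frequency proportional to $1-\delta$ is both unnecessary and, as stated, not provable. Once the target $(b-c-\epsilon_0,b-c-\epsilon_1)$ with $\epsilon_\theta\leq\xi$ lies in the self-generating set, the strategy built by repeated decomposition attains \emph{exactly} that payoff --- the decomposition relation is an equality in every period --- so $U_i\geq b-c-\xi$ is immediate and no frequency accounting is needed. Moreover, for fixed $\xi$ the discounted frequency of non-altruistic plans does not vanish as $\delta\to1$; it is governed by $\xi$, $\varepsilon$, and the per-period welfare loss of the fair plan, so a bound proportional to $1-\delta$ is the wrong quantity and trying to establish it would stall. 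The role of a large discount factor here is solely to make the set self-generating (to keep the continuation pair inside the triangle while meeting the IC constraints), not to thin out the punishments.
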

\begin{proof}
See Appendix~\ref{Proof:AchieveSocialOptimum} for the entire proof. We provide a proof sketch here.

The proof builds on the theory of self-generating sets \cite{APS1990}, which can be considered as the extension of Bellman equations in dynamic
programming to the cases with multiple self-interested users using nonstationary strategies. We can decompose each user $i$'s discounted average
payoff into the current payoff and the continuation payoff as follows:
\begin{eqnarray}
&& U_i(\bm{\theta}^0,\pi_0,\bm{\pi}) \nonumber \\
&=& \mathbb{E}_{\bm{\theta}^1,\ldots} \!\left\{\!(1-\delta)\sum_{t=0}^\infty \delta^t
u_i\left(\theta_i^t,\bm{s}(\bm{\theta}^t),\bm{\pi}(\bm{s}(\bm{\theta}^0),\ldots,\bm{s}(\bm{\theta}^t)\right)\!\right\} \nonumber \\
         &=& (1-\delta) \!\cdot \!\underbrace{u_i\left(\theta_i^0,\bm{s}(\bm{\theta}^0),\bm{\pi}(\bm{s}(\bm{\theta}^0))\right)}_{\mathrm{current~payoff~at~}t=0} + \delta \cdot \underbrace{\mathbb{E}_{\bm{\theta}^2,\ldots} \!\!\left\{\!\!(1-\delta)\!\sum_{t=1}^\infty \delta^{t-1}
u_i\left(\theta_i^t,\bm{s}(\bm{\theta}^t),\bm{\pi}(\bm{s}(\bm{\theta}^0),\ldots,\bm{s}(\bm{\theta}^t))\right)\!\!\right\}}_{\mathrm{continuation~payoff~starting~from~}t=1}.
\nonumber
\end{eqnarray}
We can see that the continuation payoff starting from $t=1$ is the discounted average payoff as if the system starts from $t=1$. Suppose that the
users follow the recommended strategy. Since the recommended strategy and the rating update rule do not differentiate users with the same rating, we
can prove that the users with the same rating have the same continuation payoff starting from any point. Hence, given $\pi_0$ and
$\bm{\pi}=\pi_0\cdot\bm{1}_N$, the decomposition above can be simplified into
\begin{eqnarray}\label{eqn:DecompositionPayoff_Equality}
v_\theta^{\pi_0}(\bm{s}) = (1-\delta) \cdot u\left(\theta,\bm{s},\alpha_0\cdot\bm{1}_N\right) + \delta \cdot \sum_{\theta^\prime=0}^1
\sum_{\bm{s}^\prime} q(\theta^\prime,\bm{s}^\prime|\theta,\bm{s},\alpha_0\cdot\bm{1}_N) \cdot v_{\theta^\prime}^{\pi_0}(\bm{s}^\prime),
\end{eqnarray}
where $q(\theta^\prime,\bm{s}^\prime|\theta,\bm{s},\alpha\cdot\bm{1}_N)$ is the probability that the user has rating $\theta^\prime$ and the rating
distribution is $\bm{s}^\prime$ in the next period given the user's current rating $\theta$, the current rating distribution $\bm{s}$, and the action
profile $\alpha\cdot\bm{1}_N$, and $v_\theta^{\pi_0}(\bm{s})$ is the continuation payoff of the users with rating $\theta$ starting from the initial
rating distribution $\bm{s}$.

The differences between \eqref{eqn:DecompositionPayoff_Equality} and the Bellman equations are 1) the ``value'' of state $\bm{s}$ in
\eqref{eqn:DecompositionPayoff_Equality} is a vector comprised of rating-$1$ and rating-$0$ users' values, compared to scalar values in Bellman
equations, and 2) the value of state $\bm{s}$ is not fixed in \eqref{eqn:DecompositionPayoff_Equality}, because the action $\alpha_0$ taken under
state $\bm{s}$ is not fixed in a nonstationary strategy (this is also the key difference from the analysis of stationary strategies; see
\eqref{eqn:ValueIteration} where the action taken at state $\bm{s}$ is fixed to be $\pi_0(\bm{s})$). In addition, for an equilibrium recommended
strategy, the decomposition needs to satisfy the following incentive constraints: for all $\alpha$,
\begin{eqnarray}\label{eqn:DecompositionPayoff_IC}
v_\theta^{\pi_0}(\bm{s}) \geq (1-\delta) \cdot u\left(\theta,\bm{s},(\alpha, \alpha_0\cdot\bm{1}_{N-1})\right) + \delta \cdot
\sum_{\theta^\prime=0}^1 \sum_{\bm{s}^\prime} \rho(\theta^\prime,\bm{s}^\prime|\theta,\bm{s},(\alpha, \alpha_0\cdot\bm{1}_{N-1})) \cdot
v_{\theta^\prime}^{\pi_0}(\bm{s}^\prime).
\end{eqnarray}

\begin{figure}
\centering
\includegraphics[width =5.0in]{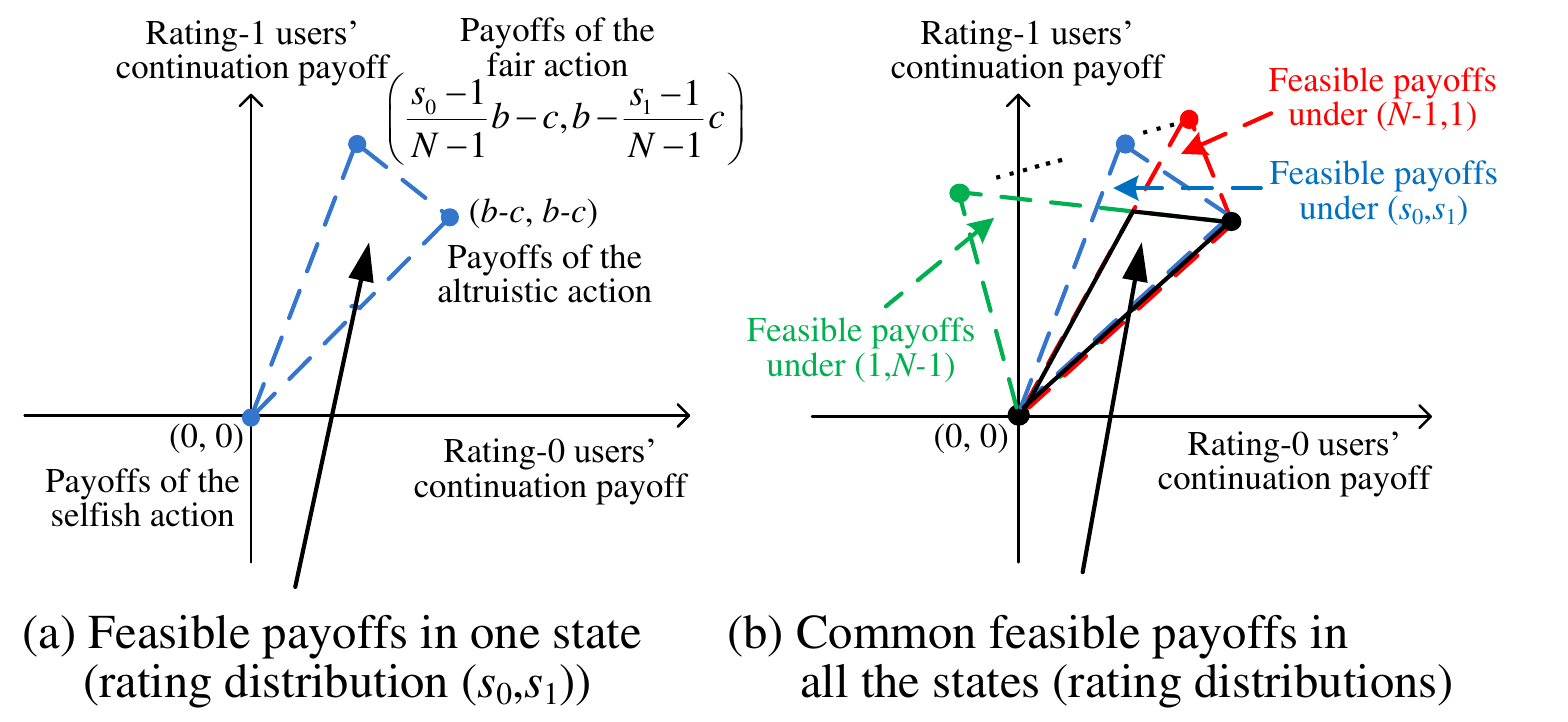}
\caption{Illustration of how to build the self-generating set. The left figure shows the set of feasible payoffs in one state (i.e. under the rating
distribution $(s_0,s_1)$). The right figure shows the sets of feasible payoffs in different states (i.e. rating distributions) and their
intersection, namely the set of common feasible payoffs in all the states (i.e. under all the rating distributions).}
\label{fig:BuildSelfGeneratingSet}
\end{figure}

\begin{figure}
\centering
\includegraphics[width =3.2in]{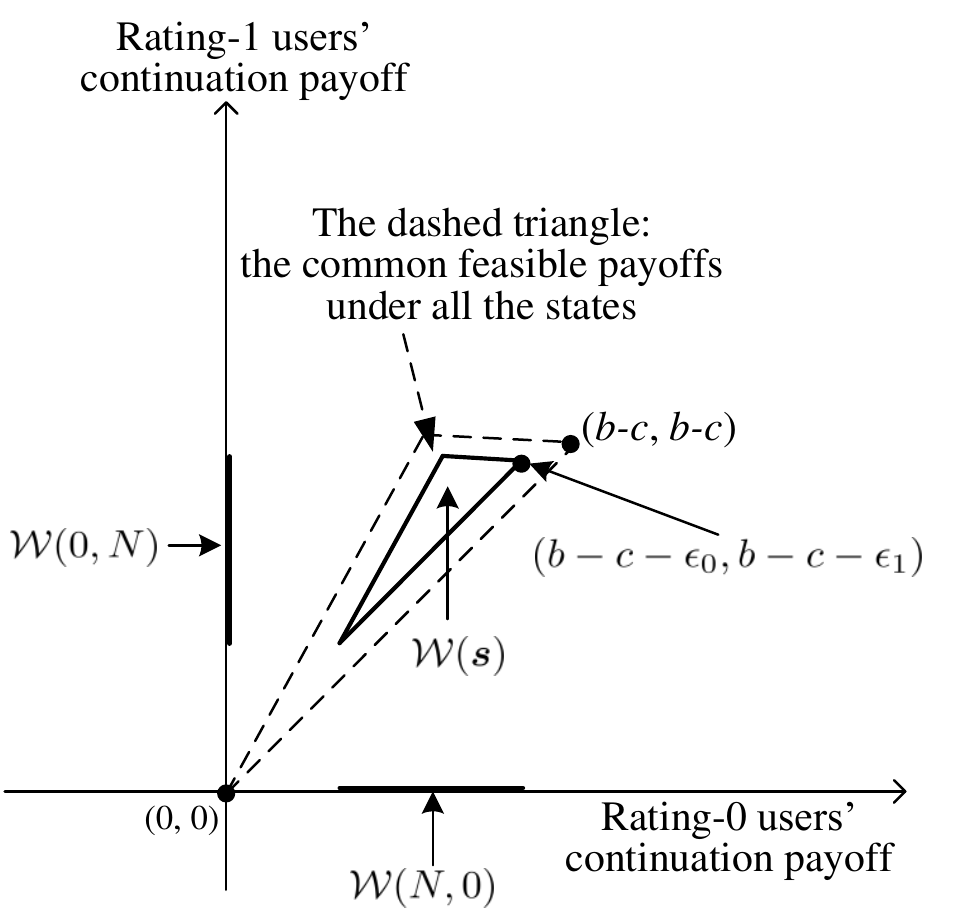}
\caption{Illustration of the self-generating set, which is a triangle within the set of common feasible payoffs in
Fig.~\ref{fig:BuildSelfGeneratingSet}.} \label{fig:SelfGeneratingSet}
\end{figure}

To analyze nonstationary strategies, we use the theory of self-generating sets. Note, however, that \cite{APS1990} does not tell us how to construct
a self-generating set, which is exactly the major difficulty to overcome in our proof. In our proof, we construct the following self-generating set.
First, since the strategies depend on rating distributions only, we let $\mathcal{W}(\bm{\theta})=\mathcal{W}(\bm{\theta}^\prime)$ for any
$\bm{\theta}$ and $\bm{\theta}$ that have the same rating distribution. Hence, in the rest of the proof sketch, we write the self-generating set as
$\{\mathcal{W}(\bm{s})\}_{\bm{s}}$, which is illustrated in Fig.~\ref{fig:BuildSelfGeneratingSet} and Fig.~\ref{fig:SelfGeneratingSet}.
Fig.~\ref{fig:BuildSelfGeneratingSet} shows how to construct the self-generating set. The left of Fig.~\ref{fig:BuildSelfGeneratingSet} shows the
feasible payoffs in one state, and the right shows the common feasible payoffs in all the states (we consider the common feasible payoffs such that
we can use the same $\mathcal{W}(\bm{s})$ under all the states $\bm{s}$). The self-generating set is a subset of the common feasible payoffs, as
illustrated in Fig.~\ref{fig:SelfGeneratingSet}. When the users have different ratings (i.e. $1\leq s_0\leq N-1$), the set $\mathcal{W}(\bm{s})$ is
the triangle shown in Fig.~\ref{fig:SelfGeneratingSet}, which is congruent to the triangle of common feasible payoffs (shown in dashed lines), and
has the upper right vertex at $(b-c-\epsilon_0,b-c-\epsilon_1)$ with $\epsilon_0,\epsilon_1\leq\xi$. We have the analytical expression for the
triangle in Appendix~\ref{Proof:AchieveSocialOptimum}. When all the users have the same rating (i.e. $s_0=0$ or $s_0=N$), only one component in
$\bm{v}(\bm{s})$ is relevant. Hence, the sets $\mathcal{W}((N,0))$ and $\mathcal{W}((0,N))$ are line segments determined by the ranges of $v_0$ and
$v_1$ in the triangle, respectively.

In addition, we simplify the decomposition \eqref{eqn:DecompositionPayoff_Equality} and \eqref{eqn:DecompositionPayoff_IC} by letting the
continuation payoffs $v_{\theta^\prime}^{\pi_0}(\bm{s}^\prime)=v_{\theta^\prime}^{\pi_0}$ for all $\bm{s}^\prime$. Hence, for a given $\bm{s}$ and a
payoff vector $\bm{v}(\bm{s})$, the continuation payoffs $\bm{v}^\prime=(v_0^\prime,v_1^\prime)$ can be determined by solving the following two
linear equations:
\begin{eqnarray}\label{eqn:DecompositionPayoffSimplified_Equality}
\!\left\{\begin{array}{l} v_0(\bm{s}) \!=\! (1-\delta) u\left(0,\bm{s},\alpha_0\bm{1}_N\right) \!+\! \delta \sum_{\theta^\prime=0}^1
q(\theta^\prime|0,\alpha_0\bm{1}_N) v_{\theta^\prime}^{\prime} \\
v_1(\bm{s}) \!= \!(1-\delta) u\left(1,\bm{s},\alpha_0\bm{1}_N\right) \!+\! \delta \sum_{\theta^\prime=0}^1 q(\theta^\prime|1,\alpha_0\bm{1}_N)
v_{\theta^\prime}^{\prime}\end{array}\right.\!\!\!\!\!\!\!\!
\end{eqnarray}
where $q(\theta^\prime|\theta,\alpha_0)$ is the probability that the next rating is $\theta^\prime$ for a user with rating $\theta$ under the plan
profile $\alpha_0\cdot\bm{1}_N$.

Based on the above simplification, the collection of sets $\{\mathcal{W}(\bm{s})\}_{\bm{s}}$ in Fig.~\ref{fig:SelfGeneratingSet} is a self-generating
set, if for any $\bm{s}$ and any payoff vector $\bm{v}(\bm{s})\in\mathcal{W}(\bm{s})$, we can find a plan $\alpha_0$ such that the continuation
payoffs $\bm{v}^\prime$ calculated from \eqref{eqn:DecompositionPayoffSimplified_Equality} lie in the triangle and satisfy the incentive constraints
in \eqref{eqn:DecompositionPayoff_IC}.

In summary, we can prove that the collection of sets $\{\mathcal{W}(\bm{s})\}_{\bm{s}}$ illustrated in Fig.~\ref{fig:SelfGeneratingSet} is a
self-generating set under certain conditions. Specifically, given a performance loss $\xi$, we construct the corresponding
$\{\mathcal{W}(\bm{s})\}_{\bm{s}}$, and prove that it is a self-generating set under the following conditions: 1) the discount factor
$\delta\geq\underline{\delta}(\varepsilon,\xi)$ with $\underline{\delta}(\varepsilon,\xi)$ defined in
Appendix~\ref{appendix:LowerBoundDiscountFactor}, and 2) the three conditions on the rating update rule in
Theorem~\ref{theorem:AchieveSocialOptimum}. This proves the first part of Theorem~\ref{theorem:AchieveSocialOptimum}.

\begin{figure}
\centering
\includegraphics[width =6.0in]{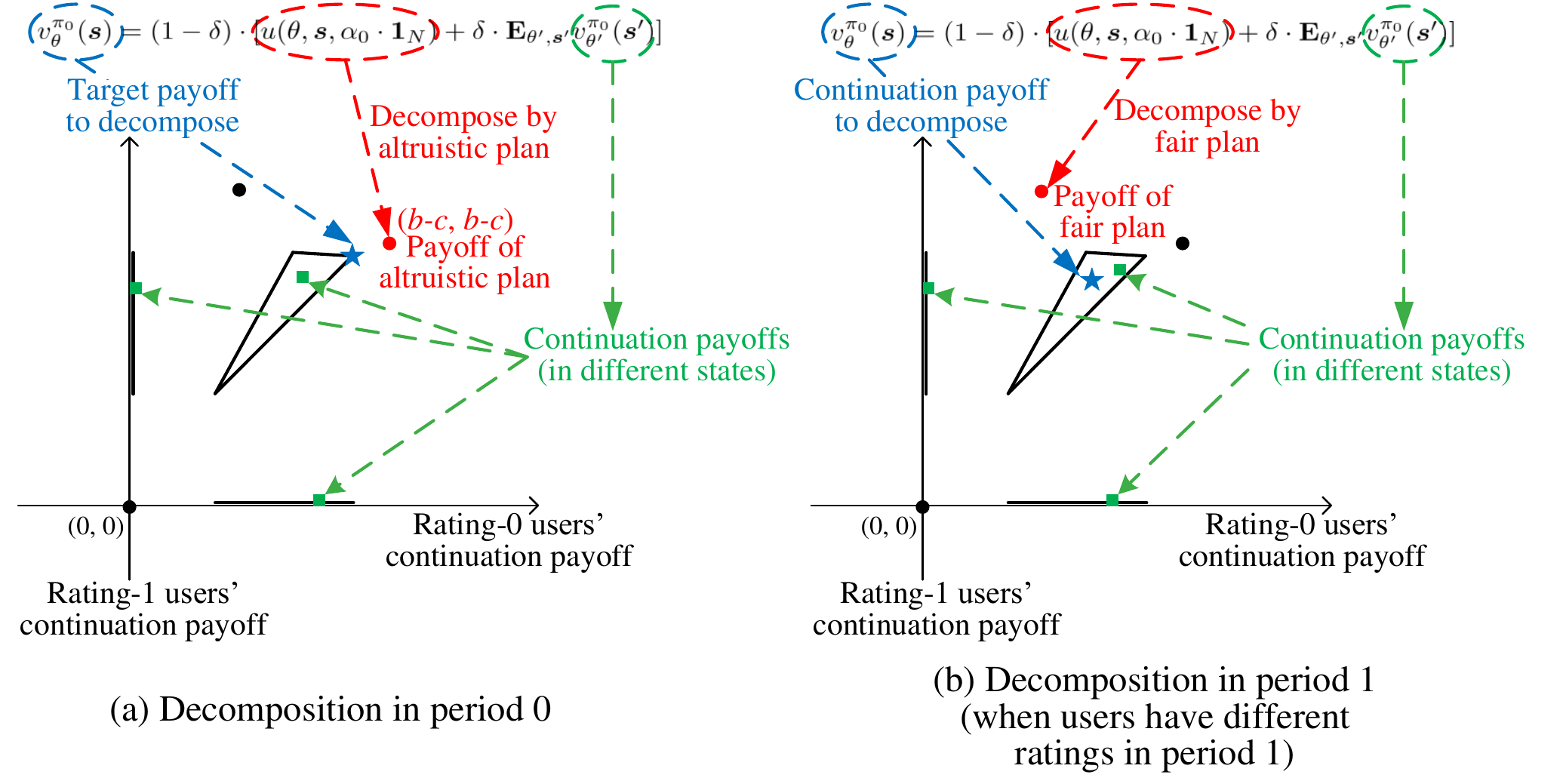}
\caption{The decomposition of payoffs. The left figure shows the decomposition in period $0$, when the payoff to decompose is the target payoff
$(b-c-\epsilon_0,b-c-\epsilon_1)$; the right figure shows the decomposition in period $1$, when the payoff to decompose is the continuation payoff
starting from period $1$ and when the users have different ratings.} \label{fig:Decomposition}
\end{figure}

The corresponding recommended strategy can be determined based on the decomposition of payoffs. Specifically, given the current rating distribution
$\bm{s}$ and the current expected payoffs $\bm{v}(\bm{s})\in\mathcal{W}(\bm{s})$, we find a plan $\alpha_0$ such that the continuation payoffs
$\bm{v}^\prime$ calculated from \eqref{eqn:DecompositionPayoffSimplified_Equality} lie in the triangle and satisfy the incentive constraints. The
decomposition is illustrated in Fig.~\ref{fig:Decomposition}. One important issue in the decomposition is which plan should be used to decompose the
payoff. We prove that we can determine the plan in the following way (illustrated in Fig.~\ref{fig:AlgorithmIllustration}). When the users have
different ratings, choose the altruistic plan $\alpha^{\rm a}$ when $\bm{v}(\bm{s})$ lies in the region marked by ``a'' in the triangle in
Fig.~\ref{fig:AlgorithmIllustration}-(b), and choose the fair plan $\alpha^{\rm f}$ otherwise. When the users have the same rating $0$ or $1$, choose
the altruistic plan $\alpha^{\rm a}$ when $v_0(\bm{s})$ or $v_1(\bm{s})$ lies in the region marked by ``a'' in the line segment in
Fig.~\ref{fig:AlgorithmIllustration}-(a) or Fig.~\ref{fig:AlgorithmIllustration}-(c), and choose the selfish plan $\alpha^{\rm s}$ otherwise. Note
that we can analytically determine the line that separates the two regions in the triangle and the threshold that separates the two regions in the
line segment (analytical expressions are omitted due to space limitation; see Appendix~\ref{EquilibriumStrategyComplete} for details). The above
decomposition is repeated, and is used to determine the recommended plan in each period based on the current rating distribution $\bm{s}$ and the
current expected payoffs to achieve $\bm{v}(\bm{s})$. The procedure described above is exactly the algorithm to construct the recommended strategy,
which is described in Table~\ref{table:EquilibriumStrategy}. Due to space limitation, Table~\ref{table:EquilibriumStrategy} is illustrative but not
specific. The detailed table that describes the algorithm can be found in Appendix~\ref{EquilibriumStrategyComplete}.
\end{proof}

\begin{figure}
\centering
\includegraphics[width =5.0in]{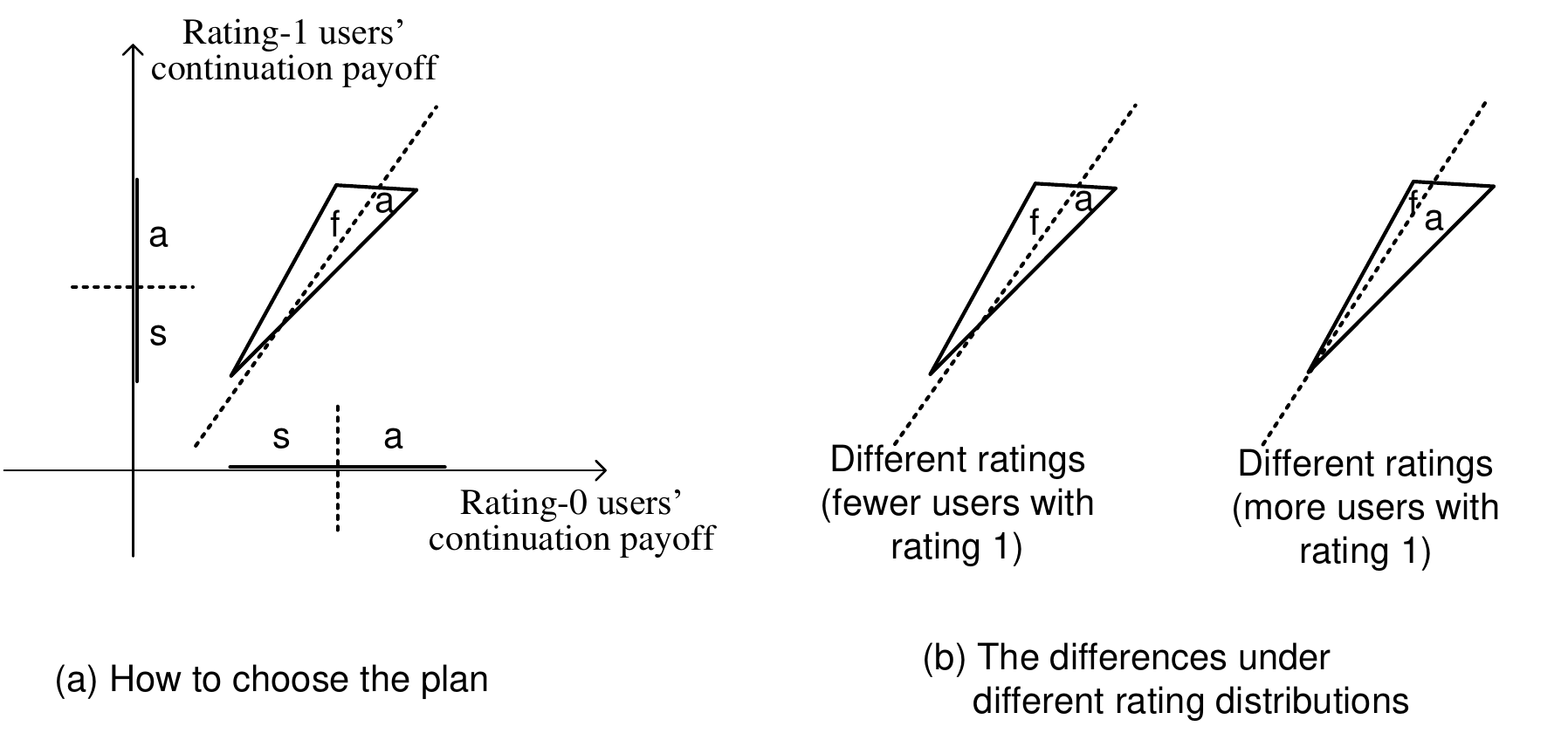}
\caption{Illustration of how to choose the plan in order to decompose a given payoff. Each self-generating set is partitioned into two parts. In each
period, a recommended plan (the altruistic plan ``a'', the fair plan ``f'', or the selfish plan ``s'') is chosen, depending on which part of the
self-generating set the expected payoffs fall into.} \label{fig:AlgorithmIllustration}
\end{figure}

Theorem~\ref{theorem:AchieveSocialOptimum} proves that for any rating update error $\varepsilon\in[0,0.5)$, we can design an asymptotically optimal
rating mechanism. The design of the asymptotically optimal rating mechanism consists of two parts. The first part is to design the rating update
rule. First, we should give incentives for the users to provide high-quality service, by setting $\beta_{\theta}^+$, the probability that the rating
goes up when the service quality is not lower than the recommended quality, to be larger than $1-\beta_{\theta}^-$, the probability that the rating
goes up when the service quality is lower than the recommended quality. Second, for a user with rating 1, the expected probability that its rating
goes up when it complies should be larger than the threshold specified in Condition~2 ($x_{s_1}^+>x_1^+$ implies that $x_{s_1}^+$ is larger than the
threshold, too). This gives users incentives to obtain rating $1$. Meanwhile, for a user with rating 0, the expected probability that its rating goes
up when it complies, $x_0^+$, should be smaller than the threshold specified in Condition~3. This provides necessary punishment for a user with
rating 0. Note that Conditions~2 and 3 imply that $x_1^+>x_0^+$. In this way, a user will prefer to have rating 1.

The second part is to construct the equilibrium recommended strategy. Theorem~\ref{theorem:AchieveSocialOptimum} proves that for any feasible
discount factor $\delta$ no smaller than the lower-bound discount factor $\underline{\delta}(\varepsilon,\xi)$ defined in
Appendix~\ref{appendix:LowerBoundDiscountFactor}, we can construct the corresponding recommended strategy such that each user can achieve an
discounted average payoff of at least $b-c-\xi$. Now we show how to construct the recommended strategy. Note that determining the lower-bound
discount factor $\underline{\delta}(\varepsilon,\xi)$ analytically is important for constructing the equilibrium $(\pi_0,\pi_0\cdot\bm{1}_N)$,
because a feasible discount factor is needed to determine the strategy. In \cite{FLM1994} and \cite{HornerSugayaTakahashi}, the lower bound for the
discount factor cannot be obtained analytically. Hence, their results are not constructive.

\begin{table}
\centering
\renewcommand{\arraystretch}{1.0} \caption{Algorithm to construct recommended strategies.}{
\begin{tabular}{l}
\hline \hline
\textbf{Require:} $b$, $c$, $\varepsilon$, $\xi$; $\tau(\varepsilon)$, $\delta\geq\underline{\delta}(\varepsilon,\xi)$; $\bm{\theta}^0$ \\
\hline
\textbf{Initialization:} $t=0$, $\epsilon_0=\xi$, $\epsilon_1=\epsilon_0/(1+\frac{\kappa_2}{\kappa_1})$, $v^\theta=b-c-\epsilon_\theta$, $\bm{\theta}=\bm{\theta}^0$. \\
\hline
\textbf{repeat} \\
~~~~\textbf{if} $s_1(\bm{\theta})=0$ \textbf{then}  \\
~~~~~~~~\textbf{if} $(v^0,v^1)$ lies in region ``a'' of the horizontal line segment in Fig.~\ref{fig:AlgorithmIllustration}-(a) \\
~~~~~~~~~~~~choose recommended plan $\alpha^{\rm a}$ \\
~~~~~~~~\textbf{else} \\
~~~~~~~~~~~~choose recommended plan $\alpha^{\rm s}$ \\
~~~~~~~~\textbf{end} \\
~~~~\textbf{elseif} $s_1(\bm{\theta})=N$ \textbf{then}  \\
~~~~~~~~\textbf{if} $(v^0,v^1)$ lies in region ``a'' of the vertical line segment in Fig.~\ref{fig:AlgorithmIllustration}-(c) \\
~~~~~~~~~~~~choose recommended plan $\alpha^{\rm a}$ \\
~~~~~~~~\textbf{else} \\
~~~~~~~~~~~~choose recommended plan $\alpha^{\rm s}$ \\
~~~~~~~~\textbf{end} \\
~~~~\textbf{else}  \\
~~~~~~~~\textbf{if} $(v^0,v^1)$ lies in region ``a'' of the triangle in Fig.~\ref{fig:AlgorithmIllustration}-(b) \\
~~~~~~~~~~~~choose recommended plan $\alpha^{\rm a}$ \\
~~~~~~~~\textbf{else} \\
~~~~~~~~~~~~choose recommended plan $\alpha^{\rm f}$ \\
~~~~~~~~\textbf{end} \\
~~~~\textbf{end} \\
~~~~determine the continuation payoffs $(v_0^\prime,v_1^\prime)$ according to \eqref{eqn:DecompositionPayoffSimplified_Equality} \\
~~~~$t\leftarrow t+1$, determine the rating profile $\bm{\theta}^t$, set $\bm{\theta}\leftarrow\bm{\theta}^t$, $(v_0,v_1)\leftarrow(v_0^\prime,v_1^\prime)$ \\
\textbf{until} $\varnothing$ \\
\hline \hline
\end{tabular}}
\label{table:EquilibriumStrategy}
\end{table}

The algorithm in Table~\ref{table:EquilibriumStrategy} that constructs the optimal recommended strategy works as follows. In each period, the
algorithm updates the continuation payoffs $(v_0,v_1)$, and determines the recommended plan based on the current rating distribution and the
continuation payoffs. In Fig.~\ref{fig:AlgorithmIllustration}, we illustrate which plan to recommend based on where the continuation payoffs locate
in the self-generating sets. Specifically, each set $\mathcal{W}(\bm{s})$ is partitioned into two parts (the partition lines can determined
analytically; see Appendix~\ref{EquilibriumStrategyComplete} for the analytical expressions). When all the users have rating 0 (or 1), we recommend
the altruistic plan $\alpha^{\rm a}$ if the continuation payoff $v_0$ (or $v_1$) is large, and the selfish plan $\alpha^{\rm s}$ otherwise. When the
users have different ratings, we recommend the altruistic plan $\alpha^{\rm a}$ when $(v_0,v_1)$ lies in the region marked by ``a'' in the triangle
in Fig.~\ref{fig:AlgorithmIllustration}-(b), and the fair plan $\alpha^{\rm f}$ otherwise. Note that the partition of $\mathcal{W}(\bm{s})$ is
different under different rating distributions (e.g., the region in which the altruistic plan is chosen is larger when more users have rating $1$).
Fig.~\ref{fig:AlgorithmIllustration} also illustrates why the strategy is nonstationary: the recommended plan depends on not only the current rating
distribution $\bm{s}$, but also which region of $\mathcal{V}(\bm{s})$ the continuation payoffs $(v_0,v_1)$ lie in.

\emph{\textbf{Complexity:}} Although the design of the optimal recommended strategy is complicated, the implementation is simple. The computational
complexity in each period comes from 1) identifying which region the continuation payoffs lie in, which is simple because the regions are divided by
a straight line that is analytically determined, and 2) updating the continuation payoffs $(v_0,v_1)$ by
\eqref{eqn:DecompositionPayoffSimplified_Equality}, which can be easily done by solving a set of two linear equations with two variables. The memory
complexity is also low: because we summarize the history of past states by the continuation payoffs $(v_0,v_1)$, the protocol does not need to store
all the past states.

\subsection{Whitewashing-Proofness}
An important issue in rating mechanisms is whitewashing, namely users with low ratings can register as a new user to clear its history of bad
behaviors. We say a rating mechanism is whitewashing-proof, if the cost of whitewashing (e.g. creating a new account) is higher than the benefit from
whitewashing. The benefit from whitewashing is determined by the difference between the current continuation payoff of a low-rating user and the
target payoff of a high-rating user. Since this difference is relatively small under the proposed rating mechanism, the proposed mechanism is robust
to whitewashing.

\begin{proposition}\label{theorem:WhitewashingProof}
Given the performance loss tolerance $\xi>0$, the proposed rating mechanism is whitewashing-proof if the cost of whitewashing is larger than
$\left(1-\frac{1}{\kappa_1}-\frac{1}{\kappa_2}\right)\cdot\xi$.
\end{proposition}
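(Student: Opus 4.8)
The plan is to reduce whitewashing-proofness to a single scalar bound: the net payoff gain that a rating-$0$ user can obtain by discarding its account and re-entering as a fresh user. First I would make this gain precise. A rating-$0$ user that whitewashes can, in the most favorable case, be treated subsequently as a rating-$1$ user and thereby secure the largest continuation payoff the mechanism ever promises such a user, namely the target value $b-c-\epsilon_1$ at the upper-right vertex of the self-generating triangle $\mathcal{W}(\bm{s})$ constructed in the proof of Theorem~\ref{theorem:AchieveSocialOptimum}. What it forgoes by abandoning its history is the first coordinate $v_0$ of its current payoff vector $\bm{v}(\bm{s})\in\mathcal{W}(\bm{s})$, whose worst case is the smallest abscissa $v_0^{\min}$ attained over the family $\{\mathcal{W}(\bm{s})\}_{\bm{s}}$ (equivalently, over the segment $\mathcal{W}((N,0))$). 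Hence, maximizing over all users and all reachable histories, the largest benefit from whitewashing is
\begin{eqnarray}
G \;=\; \big(b-c-\epsilon_1\big)\;-\;v_0^{\min}, \nonumber
\end{eqnarray}
and the proposition follows once I show $G=\big(1-\tfrac{1}{\kappa_1}-\tfrac{1}{\kappa_2}\big)\xi$, so that any account-creation cost exceeding this quantity renders whitewashing strictly unprofitable at every history (in particular, when the bound is non-positive the mechanism is robust even at zero cost).

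Next I would evaluate the two terms from the explicit geometry of the self-generating set. The initialization in Table~\ref{table:EquilibriumStrategy} fixes $\epsilon_0=\xi$ and $\epsilon_1=\epsilon_0/(1+\kappa_2/\kappa_1)=\xi\,\kappa_1/(\kappa_1+\kappa_2)$, which already pins down the first term as $b-c-\xi\,\kappa_1/(\kappa_1+\kappa_2)$. For $v_0^{\min}$ I would use the analytic description of the triangle derived in Appendix~\ref{Proof:AchieveSocialOptimum}: its upper-right vertex is the target $(b-c-\epsilon_0,b-c-\epsilon_1)$, and its two remaining edges carry exactly the slopes of the common feasible region, $-(\kappa_2-1)=-\tfrac{c}{(N-1)b}$ for the edge trading payoff toward rating-$1$ users and $\kappa_1+1=\tfrac{b}{\frac{N-2}{N-1}b-c}$ for the edge emanating from the ``selfish'' corner. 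Intersecting these two edges locates the leftmost vertex and expresses $v_0^{\min}$ as an explicit affine function of $\epsilon_0$ and $\epsilon_1$.

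Finally I would substitute and simplify. Writing $G=(v_0^{\max}-v_0^{\min})+(\epsilon_0-\epsilon_1)$ with $v_0^{\max}=b-c-\epsilon_0$ and $\epsilon_0-\epsilon_1=\xi\,\kappa_2/(\kappa_1+\kappa_2)$, the slope computation of the previous step gives the width $v_0^{\max}-v_0^{\min}$ as an explicit multiple of $\xi$ determined by $\kappa_1$ and $\kappa_2$; adding the two contributions and collecting terms collapses the expression to $\big(1-\tfrac{1}{\kappa_1}-\tfrac{1}{\kappa_2}\big)\xi$. Since both the re-entry value and the forgone value lie within $O(\xi)$ of the social optimum $b-c$, the gain is of order $\xi$, which is precisely the quantitative robustness asserted by the proposition.

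I expect the main obstacle to be the second step: reading off $v_0^{\min}$ correctly from the construction. The subtlety is that $\mathcal{W}(\bm{s})$ is not a translate of the common feasible region but a similar copy whose size is tied to $\epsilon_0$ and $\epsilon_1$, so one must verify that the worst-case low-rating continuation payoff is attained at the bottom (``selfish-side'') vertex rather than the top vertex, and that the edge slopes entering the calculation are exactly $-(\kappa_2-1)$ and $\kappa_1+1$ — i.e.\ that the combinatorial, state-dependent fair-plan payoffs have already been reduced to the binding states $s_0=1$ and $s_0=N-1$ in the proof of Theorem~\ref{theorem:AchieveSocialOptimum}. Once the extremal vertex is pinned down, the remaining algebra is routine and the cancellation to $\big(1-\tfrac{1}{\kappa_1}-\tfrac{1}{\kappa_2}\big)\xi$ is immediate.
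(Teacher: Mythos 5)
Your proposal follows essentially the same argument as the paper's proof: bound the whitewashing gain by the difference between the rating-$1$ target payoff $b-c-\epsilon_1$ and the lowest rating-$0$ continuation payoff in the self-generating set $\{\mathcal{W}(\bm{s})\}_{\bm{s}}$, then evaluate that difference from the triangle's geometry with $\epsilon_0=\xi$ and $\epsilon_1=\xi\kappa_1/(\kappa_1+\kappa_2)$. The paper compresses this into ``simple calculation tells us the difference is $\left(1-\frac{1}{\kappa_1}-\frac{1}{\kappa_2}\right)\xi$,'' so your added care in locating the extremal vertex of $\mathcal{W}(\bm{s})$ is a refinement of, not a departure from, the same route.
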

\begin{proof}
We illustrate the proof using Fig.~\ref{fig:WhitewashingProof}. In Fig.~\ref{fig:WhitewashingProof}, we show the self-generating set again, and point
out the target payoff of a rating-1 user and the lowest continuation payoff of a rating-0 user. The difference between these two payoffs is the
highest benefit that a rating-0 user can get by whitewashing. Simple calculation tells us that the difference is
$\left(1-\frac{1}{\kappa_1}-\frac{1}{\kappa_2}\right)\cdot\xi$, which completes the proof of Proposition~\ref{theorem:WhitewashingProof}.
\begin{figure}
\centering
\includegraphics[width =3.0in]{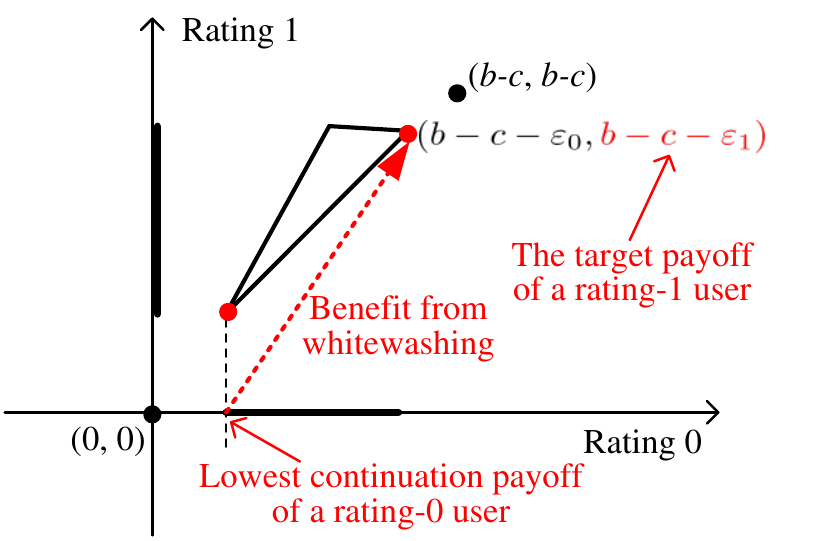}
\caption{Illustration of the target payoff of a rating-1 user and the lowest continuation payoff of a rating-0 user.} \label{fig:WhitewashingProof}
\end{figure}
\end{proof}

\section{Simulation Results}\label{sec:Simulation}

We compare against the rating mechanism with threshold-based stationary recommended strategies. In particular, we focus on threshold-based stationary
recommended strategies that use two plans. In other words, one plan is recommended when the number of rating-1 users is no smaller than the
threshold, and the other plan is recommended otherwise. In particular, we consider threshold-based stationary recommended strategies restricted on
$A^{\rm af}$, $A^{\rm as}$, and $A^{\rm fs}$, and call them ``Threshold AF'', ``Threshold AS'', and ``Threshold FS'', respectively. We focus on
threshold-based strategies because it is difficult to find the optimal stationary strategy in general when the number of users is large (the number
of stationary strategies grows exponentially with the number of users). In our experiments, we fix the following system parameters: $N=10, b=3, c=1$.

In Fig.~\ref{fig:Evolution}, we first illustrate the evolution of the states and the recommended plans taken under the proposed rating mechanism and
the rating mechanism with the Threshold AF strategy. The threshold is set to be 5. Hence, it recommends the altruistic plan when at least half of the
users have rating 1, and recommends the fair plan otherwise. We can see that in the proposed strategy, the plans taken can be different at the same
state. In particular, in ``bad'' states (6,4) and (7,3) at time slot 3 and 5, respectively, the proposed rating mechanism may recommend the fair plan
(as a punishment) and the altruistic plan (i.e. do not punish because the punishment happens in time slot 3), while the stationary mechanism always
recommends the fair plan to punish the low-rating users.

\begin{figure}
\centering
\includegraphics[width =4.5in]{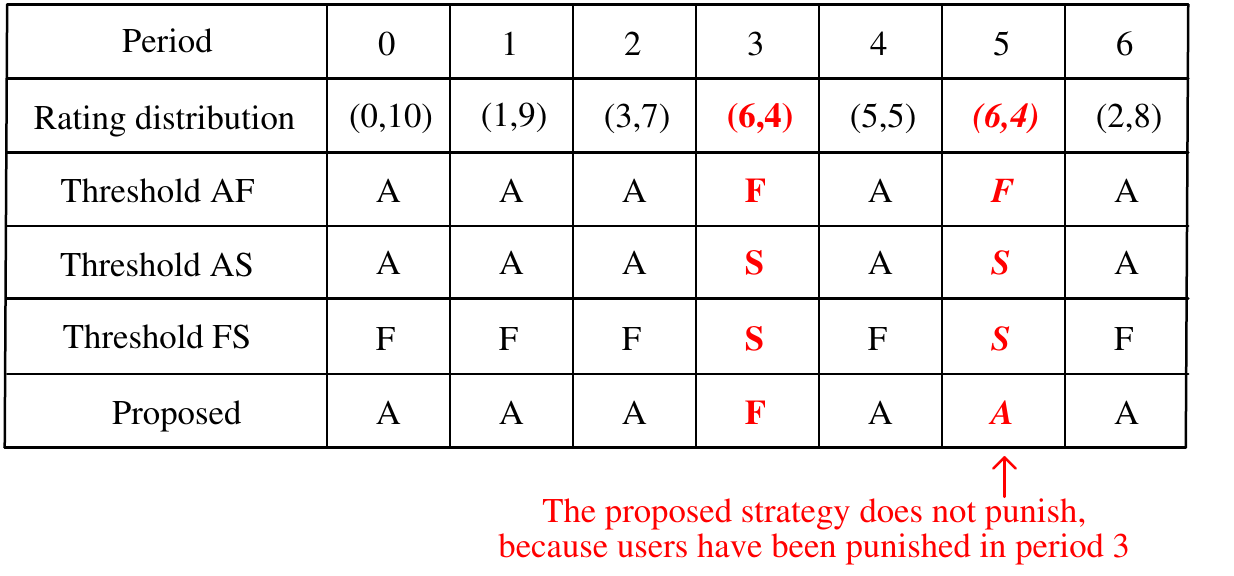}
\caption{Evolution of states and recommended plans taken in different rating mechanisms.} \label{fig:Evolution}
\end{figure}

\begin{figure}
\centering
\includegraphics[width =2.5in]{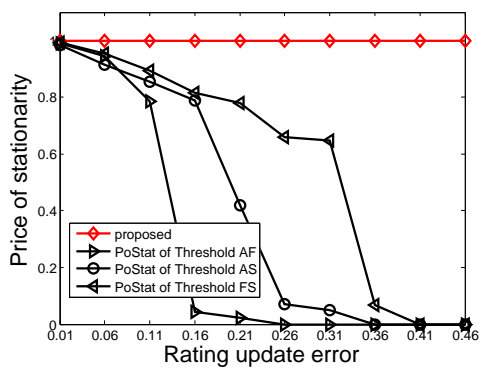}
\caption{Price of stationarity of different stationary rating mechanisms under different rating update errors.} \label{fig:PoStat}
\end{figure}

Then in Fig.~\ref{fig:PoStat}, we show the price of stationarity of three representative stationary rating mechanisms: the one with the optimal
Threshold AF strategy, the one with the optimal Threshold AS strategy, and the one with the optimal Threshold FS strategy. We can see from
Fig.~\ref{fig:PoStat} that as the rating update error increases, the efficiency of stationary rating mechanisms decreases, and drops to $0$ when the
error probability is large (e.g. when $\varepsilon>0.4$). In contrast, the proposed rating mechanism can achieve arbitrarily close to the social
optimum.

\begin{figure}
\centering
\includegraphics[width =3.5in]{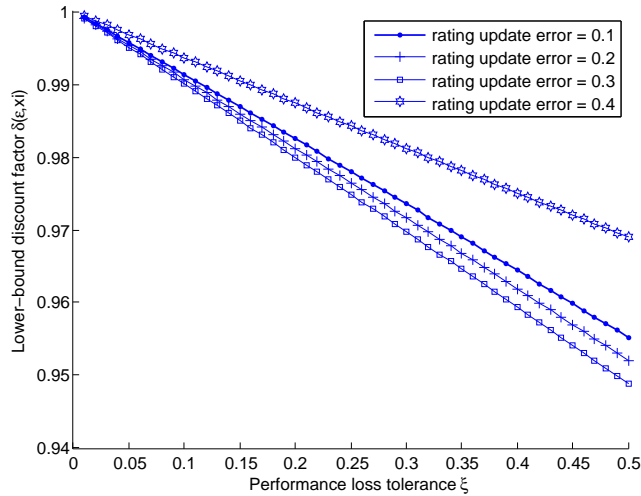}
\caption{Lower-bound discount factors under different performance loss tolerances and rating update errors.} \label{fig:tradeoffs}
\end{figure}

In Fig.~\ref{fig:tradeoffs}, we illustrate the lower-bound discount factors under different performance loss tolerances and rating update errors. As
expected, when the performance loss tolerance becomes larger, the lower-bound discount factor becomes smaller. What is unexpected is how the
lower-bound discount factor changes with the rating update error. Specifically, the lower-bound discount factor decreases initially with the increase
of the error, and then increases with the error. It is intuitive to see the discount factor increases with the rating update error, because the users
need to be more patient when the rating update is more erroneous. The initial decrease of the discount factor in the error can be explained as
follows. If the rating update error is extremely small, the punishment for the rating-$0$ users in the optimal rating update rule needs to be very
severe (i.e. a smaller $\beta_0^+$ and a larger $\beta_0^-$). Hence, once a user is assigned with rating $0$, it needs to be more patient to carry
out the severe punishment (i.e. weigh the future payoffs more).

Finally, we illustrate the robustness of the proposed mechanisms with respect to the estimation of rating update errors. Suppose that the rating
update error is $\varepsilon$. However, the designer cannot accurately measure this error. Under the estimated error $\hat{\varepsilon}$, the rating
mechanism will construct another recommended strategy. In Fig.~\ref{fig:robustness}, we illustrate the performance gain/loss in terms of social
welfare under the estimated error $\hat{\varepsilon}$, when the rating update error is $\varepsilon$. We can see that there is less than $5\%$
performance variance when the estimation inaccuracy is less than $50\%$. The performance variance is larger when the rating update error is larger.

\begin{figure}
\centering
\includegraphics[width =2.5in]{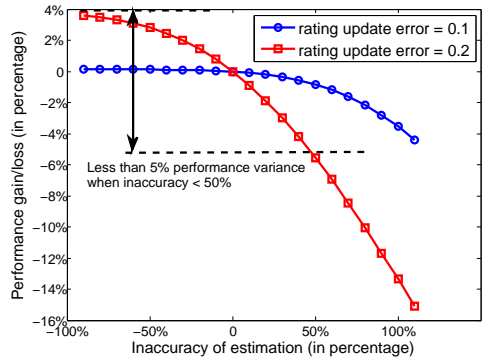}
\caption{Performance gain/loss (in percentage) under different inaccuracy of estimation (in percentage).} \label{fig:robustness}
\end{figure}

\section{Conclusion}\label{sec:Conclusion}
In this paper, we proposed a design framework for simple binary rating mechanisms that can achieve the social optimum in the presence of rating
update errors. We provided design guidelines for the optimal rating update rules, and an algorithm to construct the optimal nonstationary recommended
strategy. The key design principles that enable the rating mechanism to achieve the social optimum are the differential punishments, and the
nonstationary strategies that reduce the performance loss while providing enough incentives. We also reduced the complexity of computing the
recommended strategy by proving that using three recommended plans is enough to achieve the social optimum. The proposed rating mechanism is the
first one that can achieve the social optimum even when the rating update errors are large. Simulation results demonstrated the significant
performance gain of the proposed rating mechanism over the state-of-the-art mechanisms, especially when the rating update error is large.




\appendices

\medskip


\section{Proof of Proposition~\ref{proposition:RestrictedStrategies}}\label{Proof:RestrictedStrategies}
\subsection{The Claim to Prove}
In order to prove Proposition~\ref{proposition:RestrictedStrategies}, we quantify the performance loss of strategies restricted to $A^{\rm as}$. The
performance loss is determined in the following claim:

{\bf \emph{Claim}:} Starting from any initial rating profile $\bm{\theta}$, the maximum social welfare achievable at the PAE by
$(\pi_0,\pi\cdot\bm{1}_N)\in \Pi(A^{\rm as})\times\Pi^N(A^{\rm as})$ is at most
\begin{eqnarray}\label{eqn:BoundedAway_StronglySymmetric}
b-c-c\cdot \rho(\bm{\theta},\alpha_0^*,S_B^*) \sum_{\bm{s}^\prime\in S_B^*} q(\bm{s}^\prime|\bm{\theta},\alpha_0^*,\alpha^{\rm a}\cdot \bm{1}_N),
\end{eqnarray}
where $\alpha_0^*$, the optimal recommended plan, and $S_B^*$, the optimal subset of rating distributions, are the solutions to the following
optimization problem:
\begin{eqnarray}\label{eqn:StronglySymmetric_Optimization}
& \min_{\alpha_0} \min_{S_B\subset S} & \left\{\rho(\bm{\theta},\alpha_0,S_B) \sum_{\bm{s}^\prime\in S_B}
q(\bm{s}^\prime|\bm{\theta},\alpha_0,\alpha^{\rm a}\cdot \bm{1}_N)\right\} \\
& s.t. & \!\!\!\!\sum_{\bm{s}^\prime\in S\setminus S_B} q(\bm{s}^\prime|\bm{\theta},\alpha_0,\alpha^{\rm a}\cdot
\bm{1}_N) > \sum_{\bm{s}^\prime\in S\setminus S_B} q(\bm{s}^\prime|\bm{\theta},\alpha_0,\alpha_i=\alpha^{0},\alpha^{\rm a}\cdot \bm{1}_N),~\forall i\in\mathcal{N}, \nonumber \\
& & \!\!\!\!\sum_{\bm{s}^\prime\in S\setminus S_B} q(\bm{s}^\prime|\bm{\theta},\alpha_0,\alpha^{\rm a}\cdot
\bm{1}_N) > \sum_{\bm{s}^\prime\in S\setminus S_B} q(\bm{s}^\prime|\bm{\theta},\alpha_0,\alpha_i=\alpha^{1},\alpha^{\rm a}\cdot \bm{1}_N),~\forall i\in\mathcal{N}, \nonumber \\
& & \!\!\!\!\sum_{\bm{s}^\prime\in S\setminus S_B} q(\bm{s}^\prime|\bm{\theta},\alpha_0,\alpha^{\rm a}\cdot \bm{1}_N) > \sum_{\bm{s}^\prime\in
S\setminus S_B} q(\bm{s}^\prime|\bm{\theta},\alpha_0,\alpha_i=\alpha^{01},\alpha^{\rm a}\cdot \bm{1}_N),~\forall i\in\mathcal{N}, \nonumber
\end{eqnarray}
where $\rho(\bm{\theta},\alpha_0,S_B)$ is defined as
\begin{eqnarray}\label{eqn:StronglySymmetric_rho}
\rho(\bm{\theta},\alpha_0,S_B) \triangleq \max_{i\in\mathcal{N}} \max\left\{\frac{\frac{s_{\theta_i}-1}{N-1}}{\sum_{\bm{s}^\prime\in S\setminus S_B}
q(\bm{s}^\prime|\bm{\theta},\alpha_0,\alpha^{\rm a}\cdot \bm{1}_N)-q(\bm{s}^\prime|\bm{\theta},\alpha_0,\alpha_i=\alpha^{0},\alpha^{\rm a}\cdot
\bm{1}_N)}\right., \\
\frac{\frac{s_{1-\theta_i}}{N-1}}{\sum_{\bm{s}^\prime\in S\setminus S_B} q(\bm{s}^\prime|\bm{\theta},\alpha_0,\alpha^{\rm a}\cdot
\bm{1}_N)-q(\bm{s}^\prime|\bm{\theta},\alpha_0,\alpha_i=\alpha^{1},\alpha^{\rm a}\cdot \bm{1}_N)}, \nonumber\\
\left.\frac{1}{\sum_{\bm{s}^\prime\in S\setminus S_B} q(\bm{s}^\prime|\bm{\theta},\alpha_0,\alpha^{\rm a}\cdot
\bm{1}_N)-q(\bm{s}^\prime|\bm{\theta},\alpha_0,\alpha_i=\alpha^{01},\alpha^{\rm a}\cdot \bm{1}_N)}\right\}, \nonumber
\end{eqnarray}
where $\alpha^0$ (resp. $\alpha^1$) is the plan in which the user does not serve rating-0 (resp. rating-1) users, and $\alpha^{01}$ is the plan in
which the user does not serve anyone.

The above claim shows that
$$
W(\varepsilon,\delta,A^{\rm as})\leq b-c-c\cdot \rho(\bm{\theta},\alpha_0^*,S_B^*) \sum_{\bm{s}^\prime\in S_B^*}
q(\bm{s}^\prime|\bm{\theta},\alpha_0^*,\alpha^{\rm a}\cdot \bm{1}_N)
$$
for any $\delta$. By defining
$$\zeta(\varepsilon)\triangleq c\cdot
\rho(\bm{\theta},\alpha_0^*,S_B^*) \sum_{\bm{s}^\prime\in S_B^*} q(\bm{s}^\prime|\bm{\theta},\alpha_0^*,\alpha^{\rm a}\cdot \bm{1}_N),
$$
we obtain the result in Proposition~1, namely $\lim_{\delta\rightarrow1} W(\varepsilon,\delta,A^{\rm as})\leq b-c-\zeta(\varepsilon)$. Note that
$\zeta(\varepsilon)$ is indeed a function of the rating update error $\varepsilon$, because $\varepsilon$ determines the state transition function
$q(\bm{s}^\prime|\bm{\theta},\alpha_0^*,\alpha^{\rm a}\cdot \bm{1}_N)$, and thus affects $\rho(\bm{\theta},\alpha_0,S_B)$. Note also that
$\zeta(\varepsilon)$ is independent of the discount factor $\delta$.

In the expression of $\zeta(\varepsilon)$, $\rho(\bm{\theta},\alpha_0,S_B)$ represents the normalized benefit from deviation (normalized by $b-c$).
The numerator of $\rho(\bm{\theta},\alpha_0,S_B)$ is the probability of a player matched to the type of clients whom it deviates to not serve. The
higher this probability, the larger benefit from deviation a player can get. The denominator of $\rho(\bm{\theta},\alpha_0,S_B)$ is the difference
between the two state transition probabilities when the player does and does not deviate, respectively. When the above two transition probabilities
are closer, it is less likely to detect the deviation, which results in a larger $\rho(\bm{\theta},\alpha_0,S_B)$. Hence, we can expect that a larger
$\rho(\bm{\theta},\alpha_0,S_B)$ (i.e. a larger benefit from deviation) will result in a larger performance loss, which is indeed true as will be
proved later.

We can also see that $\zeta(\varepsilon)>0$ as long as $\varepsilon>0$. The reason is as follows. Suppose that $\varepsilon>0$. First, from
\eqref{eqn:StronglySymmetric_rho}, we know that $\rho(\bm{\theta},\alpha_0,S_B)>0$ for any $\bm{\theta}$, $\alpha$, and $S_B\neq\emptyset$. Second,
we can see that $\sum_{\bm{s}^\prime\in S_B^*} q(\bm{s}^\prime|\bm{\theta},\alpha_0^*,\alpha^{\rm a}\cdot \bm{1}_N)>0$ as long as
$S_B^*\neq\emptyset$. Since $S_B^*=\emptyset$ cannot be the solution to the optimization problem \eqref{eqn:StronglySymmetric_Optimization} (because
$S_B^*=\emptyset$ violates the constraints), we know that $\zeta(\varepsilon)>0$.

\subsection{Proof of the Claim}
We prove that for any self-generating set $(\mathcal{W}^{\bm{\theta}})_{\bm{\theta}\in\Theta^N}$, the maximum payoff in
$(\mathcal{W}^{\bm{\theta}})_{\bm{\theta}\in\Theta^N}$, namely $\max_{\bm{\theta}\in\Theta^N}\max_{\bm{v}\in\mathcal{W}^{\bm{\theta}}}
\max_{i\in\mathcal{N}} v_i$, is bounded away from the social optimum $b-c$, regardless of the discount factor. In this way, we can prove that any
equilibrium payoff is bounded away from the social optimum. In addition, we analytically quantify the efficiency loss, which is independent of the
discount factor.

Since the strategies are restricted on the subset of plans $A^{\rm as}$, in each period, all the users will receive the same stage-game payoff,
either $(b-c)$ or $0$, regardless of the matching rule and the rating profile. Hence, the expected discounted average payoff for each user is the
same. More precisely, at any given history $\bm{h}^t=(\bm{\theta}^0,\ldots,\bm{\theta}^t)$, we have
\begin{eqnarray}
U_i(\bm{\theta}^t,\pi_0|_{\bm{h}^t},\pi|_{\bm{h}^t}\cdot\bm{1}_N) = U_j(\bm{\theta}^t,\pi_0|_{\bm{h}^t},\pi|_{\bm{h}^t}\cdot\bm{1}_N),~\forall
i,j\in\mathcal{N},
\end{eqnarray}
for any $(\pi_0,\pi\cdot\bm{1}_N)\in\Pi(A^{\rm as})\times\Pi^N(A^{\rm as})$. As a result, when we restrict to the plan set $A^{\rm as}$, the
self-generating set $(\mathcal{W}^{\bm{\theta}})_{\bm{\theta}\in\Theta^N}$ satisfies for any $\bm{\theta}$ and any
$\bm{v}\in\mathcal{W}^{\bm{\theta}}$
\begin{eqnarray}
v_i=v_j,~\forall i,j\in\mathcal{N}.
\end{eqnarray}

Given any self-generating set $(\mathcal{W}^{\bm{\theta}})_{\bm{\theta}\in\Theta^N}$, define the maximum payoff $\bar{v}$ as
\begin{eqnarray}
\bar{v}\triangleq\max_{\bm{\theta}\in\Theta^N}\max_{\bm{v}\in\mathcal{W}^{\bm{\theta}}} \max_{i\in\mathcal{N}} v_i.
\end{eqnarray}
Now we derive the upper bound of $\bar{v}$ by looking at the decomposability constraints.

To decompose the payoff profile $\bar{v}\cdot\bm{1}_N$, we must find a recommended plan $\alpha_0\in A^{\rm as}$, a plan profile
$\alpha\cdot\bm{1}_N$ with $\alpha\in A^{\rm as}$, and a continuation payoff function $\bm{\gamma}:\Theta^N\rightarrow
\cup_{\bm{\theta}^\prime\in\Theta^N} \mathcal{W}^{\bm{\theta}^\prime}$ with $\bm{\gamma}(\bm{\theta}^\prime)\in\mathcal{W}^{\bm{\theta}^\prime}$,
such that for all $i\in\mathcal{N}$ and for all $\alpha_i\in A$,
\begin{eqnarray}
\bar{v} &=& (1-\delta)u_i(\bm{\theta},\alpha_0,\alpha\cdot \bm{1}_N) + \delta \sum_{\bm{\theta}^\prime} \gamma_i(\bm{\theta}^\prime)
q(\bm{\theta}^\prime|\bm{\theta},\alpha_0,\alpha\cdot \bm{1}_N) \\
&\geq& (1-\delta)u_i(\bm{\theta},\alpha_0,\alpha_i,\alpha\cdot \bm{1}_{N-1}) + \delta \sum_{\bm{\theta}^\prime} \gamma_i(\bm{\theta}^\prime)
q(\bm{\theta}^\prime|\bm{\theta},\alpha_0,\alpha_i,\alpha\cdot \bm{1}_{N-1}). \nonumber
\end{eqnarray}
Note that we do \emph{not} require the users' plan $\alpha$ to be the same as the recommended plan $\alpha_0$, and that we also do \emph{not} require
the continuation payoff function $\bm{\gamma}$ to be a simple continuation payoff function.

First, the payoff profile $\bar{v}\cdot\bm{1}_N$ cannot be decomposed by a recommended plan $\alpha_0$ and the selfish plan $\alpha^{\rm s}$.
Otherwise, since $\bm{\gamma}(\bm{\theta}^\prime)\in\mathcal{W}^{\bm{\theta}^\prime}$, we have
\begin{eqnarray}
\bar{v} = (1-\delta)\cdot 0 + \delta \sum_{\bm{\theta}^\prime} \gamma_i(\bm{\theta}^\prime) q(\bm{\theta}^\prime|\bm{\theta},\alpha_0,\alpha^{\rm
a}\cdot \bm{1}_N) \leq \delta \sum_{\bm{\theta}^\prime} \bar{v}_i \cdot q(\bm{\theta}^\prime|\bm{\theta},\alpha_0,\alpha^{\rm a}\cdot \bm{1}_N)
=\delta \cdot \bar{v} < \bar{v}, \nonumber
\end{eqnarray}
which is a contradiction.


Since we must use a recommended plan $\alpha_0$ and the altruistic plan $\alpha^{\rm a}$ to decompose $\bar{v}\cdot\bm{1}_N$, we can rewrite the
decomposability constraint as
\begin{eqnarray}
\bar{v} &=& (1-\delta)(b-c) + \delta \sum_{\bm{\theta}^\prime} \gamma_i(\bm{\theta}^\prime)
q(\bm{\theta}^\prime|\bm{\theta},\alpha_0,\alpha^{\rm a}\cdot \bm{1}_N) \\
&\geq& (1-\delta)u_i(\bm{\theta},\alpha_0,\alpha_i,\alpha^{\rm a}\cdot \bm{1}_{N-1}) + \delta \sum_{\bm{\theta}^\prime} \gamma_i(\bm{\theta}^\prime)
q(\bm{\theta}^\prime|\bm{\theta},\alpha_0,\alpha_i,\alpha^{\rm a}\cdot \bm{1}_{N-1}). \nonumber
\end{eqnarray}

Since the continuation payoffs under different rating profiles $\bm{\theta},\bm{\theta}^\prime$ that have the same rating distribution
$\bm{s}(\bm{\theta})=\bm{s}(\bm{\theta}^\prime)$ are the same, namely $\bm{\gamma}(\bm{\theta})=\bm{\gamma}(\bm{\theta}^\prime)$, the continuation
payoff depends only on the rating distribution. For notational simplicity, with some abuse of notation, we write $\bm{\gamma}(\bm{s})$ as the
continuation payoff when the rating distribution is $\bm{s}$, write $q(\bm{s}^\prime|\bm{\theta},\alpha_0,\alpha_i,\alpha^{\rm a}\cdot \bm{1}_{N-1})$
as the probability that the next state has a rating distribution $\bm{s}^\prime$, and write $u_i(\bm{s},\alpha^{\rm a},\alpha_i,\alpha^{\rm a}\cdot
\bm{1}_{N-1})$ as the stage-game payoff when the next state has a rating distribution $\bm{s}$. Then the decomposability constraint can be rewritten
as
\begin{eqnarray}
\bar{v} &=& (1-\delta)(b-c) + \delta \sum_{\bm{s}^\prime} \gamma_i(\bm{s}^\prime)
q(\bm{s}^\prime|\bm{\theta},\alpha_0,\alpha^{\rm a}\cdot \bm{1}_N) \\
&\geq& (1-\delta)u_i(\bm{s},\alpha_0,\alpha_i,\alpha^{\rm a}\cdot \bm{1}_{N-1}) + \delta \sum_{\bm{s}^\prime} \gamma_i(\bm{s}^\prime)
q(\bm{s}^\prime|\bm{\theta},\alpha_0,\alpha_i,\alpha^{\rm a}\cdot \bm{1}_{N-1}). \nonumber
\end{eqnarray}

Now we focus on a subclass of continuation payoff functions, and derive the maximum payoff $\bar{v}$ achievable under this subclass of continuation
payoff functions. Later, we will prove that we cannot increase $\bar{v}$ by choosing other continuation payoff functions. Specifically, we focus on a
subclass of continuation payoff functions that satisfy
\begin{eqnarray}
\gamma_i(\bm{s}) &=& x_A,~\forall i\in\mathcal{N},~\forall \bm{s}\in S_A \subset S, \\
\gamma_i(\bm{s}) &=& x_B,~\forall i\in\mathcal{N},~\forall \bm{s}\in S_B \subset S,
\end{eqnarray}
where $S_A$ and $S_B$ are subsets of the set of rating distributions $S$ that have no intersection, namely $S_A\cap S_B=\emptyset$. In other words,
we assign the two continuation payoff values to two subsets of rating distributions, respectively. Without loss of generality, we assume $x_A\geq
x_B$.

Now we derive the incentive compatibility constraints. There are three plans to deviate to, the plan $\alpha^0$ in which the user does not serve
users with rating $0$, the plan $\alpha^1$ in which the user does not serve users with rating $1$, and the plan $\alpha^{01}$ in which the user does
not serve anyone. The corresponding incentive compatibility constraints for a user $i$ with rating $\theta_i=1$ are
\begin{eqnarray}
\left[\sum_{\bm{s}^\prime\in S_A} q(\bm{s}^\prime|\bm{\theta},\alpha_0,\alpha^{\rm a}\cdot
\bm{1}_N)-q(\bm{s}^\prime|\bm{\theta},\alpha_0,\alpha_i=\alpha^{0},\alpha^{\rm a}\cdot \bm{1}_N)\right](x_A-x_B) &\geq& \frac{1-\delta}{\delta} \frac{s_0}{N-1} c, \nonumber \\
\left[\sum_{\bm{s}^\prime\in S_A} q(\bm{s}^\prime|\bm{\theta},\alpha_0,\alpha^{\rm a}\cdot
\bm{1}_N)-q(\bm{s}^\prime|\bm{\theta},\alpha_0,\alpha_i=\alpha^{1},\alpha^{\rm a}\cdot \bm{1}_N)\right](x_A-x_B) &\geq& \frac{1-\delta}{\delta} \frac{s_1-1}{N-1} c, \nonumber \\
\left[\sum_{\bm{s}^\prime\in S_A} q(\bm{s}^\prime|\bm{\theta},\alpha_0,\alpha^{\rm a}\cdot
\bm{1}_N)-q(\bm{s}^\prime|\bm{\theta},\alpha_0,\alpha_i=\alpha^{01},\alpha^{\rm a}\cdot \bm{1}_N)\right](x_A-x_B) &\geq& \frac{1-\delta}{\delta} c.
\end{eqnarray}

Similarly, the corresponding incentive compatibility constraints for a user $j$ with rating $\theta_j=0$ are
\begin{eqnarray}
\left[\sum_{\bm{s}^\prime\in S_A} q(\bm{s}^\prime|\bm{\theta},\alpha_0,\alpha^{\rm a}\cdot
\bm{1}_N)-q(\bm{s}^\prime|\bm{\theta},\alpha_0,\alpha_j=\alpha^{0},\alpha^{\rm a}\cdot \bm{1}_N)\right](x_A-x_B) &\geq& \frac{1-\delta}{\delta} \frac{s_0-1}{N-1} c, \nonumber \\
\left[\sum_{\bm{s}^\prime\in S_A} q(\bm{s}^\prime|\bm{\theta},\alpha_0,\alpha^{\rm a}\cdot
\bm{1}_N)-q(\bm{s}^\prime|\bm{\theta},\alpha_0,\alpha_j=\alpha^{1},\alpha^{\rm a}\cdot \bm{1}_N)\right](x_A-x_B) &\geq& \frac{1-\delta}{\delta} \frac{s_1}{N-1} c, \nonumber \\
\left[\sum_{\bm{s}^\prime\in S_A} q(\bm{s}^\prime|\bm{\theta},\alpha_0,\alpha^{\rm a}\cdot
\bm{1}_N)-q(\bm{s}^\prime|\bm{\theta},\alpha_0,\alpha_j=\alpha^{01},\alpha^{\rm a}\cdot \bm{1}_N)\right](x_A-x_B) &\geq& \frac{1-\delta}{\delta} c.
\end{eqnarray}

We can summarize the above incentive compatibility constraints as
\begin{eqnarray}
x_A-x_B \geq \frac{1-\delta}{\delta} c \cdot \rho(\bm{\theta},\alpha_0,S_A),
\end{eqnarray}
where
\begin{eqnarray}
\rho(\bm{\theta},\alpha_0,S_B) \triangleq \max_{i\in\mathcal{N}} \max\left\{\frac{\frac{s_{\theta_i}-1}{N-1}}{\sum_{\bm{s}^\prime\in S\setminus S_B}
q(\bm{s}^\prime|\bm{\theta},\alpha_0,\alpha^{\rm a}\cdot \bm{1}_N)-q(\bm{s}^\prime|\bm{\theta},\alpha_0,\alpha_i=\alpha^{0},\alpha^{\rm a}\cdot
\bm{1}_N)}\right., \\
\frac{\frac{s_{1-\theta_i}}{N-1}}{\sum_{\bm{s}^\prime\in S\setminus S_B} q(\bm{s}^\prime|\bm{\theta},\alpha_0,\alpha^{\rm a}\cdot
\bm{1}_N)-q(\bm{s}^\prime|\bm{\theta},\alpha_0,\alpha_i=\alpha^{1},\alpha^{\rm a}\cdot \bm{1}_N)}, \\
\left.\frac{1}{\sum_{\bm{s}^\prime\in S\setminus S_B} q(\bm{s}^\prime|\bm{\theta},\alpha_0,\alpha^{\rm a}\cdot
\bm{1}_N)-q(\bm{s}^\prime|\bm{\theta},\alpha_0,\alpha_i=\alpha^{01},\alpha^{\rm a}\cdot \bm{1}_N)}\right\}.
\end{eqnarray}

Since the maximum payoff $\bar{v}$ satisfies
\begin{eqnarray}
\bar{v} = (1-\delta)(b-c) + \delta \left( x_A \sum_{\bm{s}^\prime\in S\setminus S_B} q(\bm{s}^\prime|\bm{\theta},\alpha_0,\alpha^{\rm a}\cdot
\bm{1}_N) + x_B \sum_{\bm{s}^\prime\in S_B} q(\bm{s}^\prime|\bm{\theta},\alpha_0,\alpha^{\rm a}\cdot \bm{1}_N) \right),
\end{eqnarray}
to maximize $\bar{v}$, we choose $x_B=x_A - \frac{1-\delta}{\delta} c \cdot \rho(\bm{\theta},\alpha_0,S_B)$. Since $x_A\geq\bar{v}$, we have
\begin{eqnarray}
\bar{v} &=& (1-\delta)(b-c) + \delta \left( x_A  - \frac{1-\delta}{\delta} c \cdot \rho(\bm{\theta},\alpha_0,S_B) \sum_{\bm{s}^\prime\in S_B}
q(\bm{s}^\prime|\bm{\theta},\alpha_0,\alpha^{\rm a}\cdot \bm{1}_N) \right) \\
&\leq& (1-\delta)(b-c) + \delta \left( \bar{v}  - \frac{1-\delta}{\delta} c \cdot \rho(\bm{\theta},\alpha_0,S_B) \sum_{\bm{s}^\prime\in S_B}
q(\bm{s}^\prime|\bm{\theta},\alpha_0,\alpha^{\rm a}\cdot \bm{1}_N) \right),
\end{eqnarray}
which leads to
\begin{eqnarray}
\bar{v} \leq b-c - c  \cdot \rho(\bm{\theta},\alpha_0,S_B) \sum_{\bm{s}^\prime\in S_B} q(\bm{s}^\prime|\bm{\theta},\alpha_0,\alpha^{\rm a}\cdot
\bm{1}_N).
\end{eqnarray}

Hence, the maximum payoff $\bar{v}$ satisfies
\begin{eqnarray}
\bar{v} \leq b-c - c  \cdot \min_{S_B\subset S} \left\{\rho(\bm{\theta},\alpha_0,S_B) \sum_{\bm{s}^\prime\in S_B}
q(\bm{s}^\prime|\bm{\theta},\alpha_0,\alpha^{\rm a}\cdot \bm{1}_N)\right\},
\end{eqnarray}
where $S_B$ satisfies for all $i\in\mathcal{N}$,
\begin{eqnarray}
\sum_{\bm{s}^\prime\in S\setminus S_B} q(\bm{s}^\prime|\bm{\theta},\alpha_0,\alpha^{\rm a}\cdot
\bm{1}_N) &>& \sum_{\bm{s}^\prime\in S\setminus S_B} q(\bm{s}^\prime|\bm{\theta},\alpha_0,\alpha_i=\alpha^{0},\alpha^{\rm a}\cdot \bm{1}_N), \nonumber \\
\sum_{\bm{s}^\prime\in S\setminus S_B} q(\bm{s}^\prime|\bm{\theta},\alpha_0,\alpha^{\rm a}\cdot
\bm{1}_N) &>& \sum_{\bm{s}^\prime\in S\setminus S_B} q(\bm{s}^\prime|\bm{\theta},\alpha_0,\alpha_i=\alpha^{1},\alpha^{\rm a}\cdot \bm{1}_N), \nonumber \\
\sum_{\bm{s}^\prime\in S\setminus S_B} q(\bm{s}^\prime|\bm{\theta},\alpha_0,\alpha^{\rm a}\cdot \bm{1}_N) &>& \sum_{\bm{s}^\prime\in S\setminus S_B}
q(\bm{s}^\prime|\bm{\theta},\alpha_0,\alpha_i=\alpha^{01},\alpha^{\rm a}\cdot \bm{1}_N).
\end{eqnarray}

Following the same logic as in the proof of Proposition~6 in \cite{Dellarocas}, we can prove that we cannot achieve a higher maximum payoff by other
continuation payoff functions.

\section{Analytical Expression of $\underline{\delta}(\varepsilon,\xi)$}\label{appendix:LowerBoundDiscountFactor}
The lower-bound discount factor $\underline{\delta}(\varepsilon,\xi)$ is the maximum of three critical discount factors, namely
$\underline{\delta}(\varepsilon,\xi)\triangleq\max\{\delta_1(\varepsilon,\xi),\delta_2(\varepsilon,\xi),\delta_3(\varepsilon,\xi)\}$, where
\begin{eqnarray}
\delta_1(\varepsilon,\xi) \triangleq \max_{\theta\in\{0,1\}}
\frac{c}{c+(1-2\varepsilon)(\beta_{\theta}^+-(1-\beta_{\theta}^-))(\xi\frac{\kappa_2}{\kappa_1+\kappa_2})}, \nonumber
\end{eqnarray}
\begin{eqnarray}
\delta_2(\varepsilon,\xi) \triangleq \max_{s_1\in\{1,\ldots,N-1\}: \frac{s_1}{N-1}b+\frac{N-s_1}{N-1}c>\xi\frac{\kappa_2}{\kappa_1+\kappa_2}}
\left\{\frac{\xi\frac{\kappa_2}{\kappa_1+\kappa_2}-\left(\frac{s_1}{N-1}b+\frac{N-s_1}{N-1}c\right)}{(\xi\frac{\kappa_2}{\kappa_1+\kappa_2})\left(x_{s_1}^+-x_0^+\right)-\left(\frac{s_1}{N-1}b+\frac{N-s_1}{N-1}c\right)}\right\},\nonumber
\end{eqnarray}
and
\begin{eqnarray}
\delta_3(\varepsilon,\xi) \triangleq \max_{\theta\in\{0,1\}}
\frac{b-c+c\frac{x_{\theta}^+}{(1-2\varepsilon)\left[\beta_{\theta}^+-(1-\beta_{\theta}^-)\right]}}{b-c+\frac{c\cdot
x_{\theta}^+}{(1-2\varepsilon)\left[\beta_{\theta}^+-(1-\beta_{\theta}^-)\right]} -
\frac{(1+\kappa_1)(\xi\frac{\kappa_2}{\kappa_1+\kappa_2})-z_2}{\kappa_1} -z_3},
\end{eqnarray}
where $z_2 \triangleq -\kappa_1(b-c)+\kappa_1(1-1/\kappa_2)\xi+\xi\frac{\kappa_1}{\kappa_2}/(1+\frac{\kappa_2}{\kappa_1})$, and $z_3 \triangleq
z_2/(\kappa_1+\kappa_2)$. Note that $\frac{(1+\kappa_1)(\xi\frac{\kappa_2}{\kappa_1+\kappa_2})-z_2}{\kappa_1} + z_3<0$. We can see from the above
expressions that $\delta_1(\varepsilon,\xi)<1$ and $\delta_2(\varepsilon,\xi)<1$ as long as $\xi>0$. For $\delta_3(\varepsilon,\xi)$, simple
calculations tell us that $\xi$ appears in the denominator in the form of $-\frac{(2\kappa_1+\kappa_2)\kappa_2}{(\kappa_1+\kappa_2)^2 \kappa_1}\cdot
\xi$. Since $\kappa_1>0$ and $\kappa_2>0$, we know that $-\frac{(2\kappa_1+\kappa_2)\kappa_2}{(\kappa_1+\kappa_2)^2 \kappa_1}<0$. Hence,
$\delta_3(\varepsilon,\xi)$ is increasing in $\xi$. As a result, $\delta_3(\varepsilon,\xi)<1$ as long as $\xi$ is small enough.

Note that all the critical discount factors can be calculated analytically. Specifically, $\delta_1(\varepsilon,\xi)$ and $\delta_3(\varepsilon,\xi)$
are the maximum of two analytically-computed numbers, and $\delta_2(\varepsilon,\xi)$ is the maximum of at most $N-1$ analytically-computed numbers.

\section{Proof of Theorem~\ref{theorem:AchieveSocialOptimum}}\label{Proof:AchieveSocialOptimum}
\subsection{Outline of the proof}

We derive the conditions under which the set $(\mathcal{W}^{\bm{\theta}})_{\bm{\theta}\in\Theta^N}$ is a self-generating set. Specifically, we derive
the conditions under which any payoff profile $\bm{v}\in\mathcal{W}^{\bm{\theta}}$ is decomposable on
$(\mathcal{W}^{\bm{\theta}^\prime})_{\bm{\theta}^\prime\in\Theta^N}$ given $\bm{\theta}$, for all $\bm{\theta}\in\Theta^N$.

\subsection{When users have different ratings}
\subsubsection{Preliminaries}
We first focus on the states $\bm{\theta}$ with $1\leq s_1(\bm{\theta})\leq N-1$, and derive the conditions under which any payoff profile
$\bm{v}\in\mathcal{W}^{\bm{\theta}}$ can be decomposed by $(\alpha_0=\alpha^{\rm a},\alpha^{\rm a}\cdot \bm{1}_N)$ or $(\alpha_0=\alpha^{\rm
f},\alpha^{\rm f}\cdot \bm{1}_N)$. First, $\bm{v}$ could be decomposed by $(\alpha^{\rm a},\alpha^{\rm a}\cdot \bm{1}_N)$, if there exists a
continuation payoff function $\bm{\gamma}:\Theta^N\rightarrow \cup_{\bm{\theta}^\prime\in\Theta^N} \mathcal{W}^{\bm{\theta}^\prime}$ with
$\bm{\gamma}(\bm{\theta}^\prime)\in\mathcal{W}^{\bm{\theta}^\prime}$, such that for all $i\in\mathcal{N}$ and for all $\alpha_i\in A$,
\begin{eqnarray}
v_i &=& (1-\delta)u_i(\bm{\theta},\alpha^{\rm a},\alpha^{\rm a}\cdot \bm{1}_N) + \delta \sum_{\bm{\theta}^\prime} \gamma_i(\bm{\theta}^\prime)
q(\bm{\theta}^\prime|\bm{\theta},\alpha^{\rm a},\alpha^{\rm a}\cdot \bm{1}_N) \\
&\geq& (1-\delta)u_i(\bm{\theta},\alpha^{\rm a},\alpha_i,\alpha^{\rm a}\cdot \bm{1}_{N-1}) + \delta \sum_{\bm{\theta}^\prime}
\gamma_i(\bm{\theta}^\prime) q(\bm{\theta}^\prime|\bm{\theta},\alpha^{\rm a},\alpha_i,\alpha^{\rm a}\cdot \bm{1}_{N-1}). \nonumber
\end{eqnarray}
Since we focus on simple continuation payoff functions, all the users with the same future rating will have the same continuation payoff regardless
of the recommended plan $\alpha_0$, the plan profile $(\alpha_i,\alpha\cdot \bm{1}_{N-1})$, and the future state $\bm{\theta}^\prime$. Hence, we
write the continuation payoffs for the users with future rating $1$ and $0$ as $\gamma^1$ and $\gamma^0$, respectively. Consequently, the above
conditions on decomposability can be simplified to
\begin{eqnarray}\label{eqn:Decomposability_Altruistic_DifferentReputations}
v_i &=& (1-\delta) \cdot u_i(\bm{\theta},\alpha^{\rm a},\alpha^{\rm a}\cdot \bm{1}_N) \\
&+& \delta \left( \gamma^1 \sum_{\bm{\theta}^\prime:\theta_i^\prime=1} q(\bm{\theta}^\prime|\bm{\theta},\alpha^{\rm a},\alpha^{\rm a}\cdot \bm{1}_N)
+ \gamma^0 \sum_{\bm{\theta}^\prime:\theta_i^\prime=0}
q(\bm{\theta}^\prime|\bm{\theta},\alpha^{\rm a},\alpha^{\rm a}\cdot \bm{1}_N) \right) \nonumber \\
&\geq& (1-\delta) \cdot u_i(\bm{\theta},\alpha^{\rm a},\alpha_i,\alpha^{\rm a}\cdot \bm{1}_{N-1}) \nonumber \\
&+& \delta \left( \gamma^1 \sum_{\bm{\theta}^\prime:\theta_i^\prime=1} q(\bm{\theta}^\prime|\bm{\theta},\alpha^{\rm a},\alpha_i,\alpha^{\rm a}\cdot
\bm{1}_{N-1}) + \gamma^0 \sum_{\bm{\theta}^\prime:\theta_i^\prime=0} q(\bm{\theta}^\prime|\bm{\theta},\alpha^{\rm a},\alpha_i,\alpha^{\rm a}\cdot
\bm{1}_{N-1}) \right). \nonumber
\end{eqnarray}

First, consider the case when user $i$ has rating 1 (i.e. $\theta_i=1$). Based on \eqref{eqn:StageGamePayoff}, we can calculate the stage-game payoff
as $u_i(\bm{\theta},\alpha^{\rm a},\alpha^{\rm a}\cdot \bm{1}_N)=b-c$. The term $\sum_{\bm{\theta}^\prime:\theta_i^\prime=1}
q(\bm{\theta}^\prime|\bm{\theta},\alpha^{\rm a},\alpha^{\rm a}\cdot \bm{1}_N)$ is the probability that user $i$ has rating 1 in the next state. Since
user $i$'s rating update is independent of the other users' rating update, we can calculate this probability as
\begin{eqnarray}
\sum_{\bm{\theta}^\prime:\theta_i^\prime=1} q(\bm{\theta}^\prime|\bm{\theta},\alpha^{\rm a},\alpha^{\rm a}\cdot \bm{1}_N) &=&
[(1-\varepsilon)\beta_1^+ + \varepsilon(1-\beta_1^-)] \sum_{m\in M:\theta_{m(i)}=1} \mu(m) \\
&+& [(1-\varepsilon)\beta_1^+ + \varepsilon(1-\beta_1^-)] \sum_{m\in M:\theta_{m(i)}=0} \mu(m) \\
&=& (1-\varepsilon)\beta_1^+ + \varepsilon(1-\beta_1^-) = x_1^+.
\end{eqnarray}
Similarly, we can calculate $\sum_{\bm{\theta}^\prime:\theta_i^\prime=0} q(\bm{\theta}^\prime|\bm{\theta},\alpha^{\rm a},\alpha^{\rm a}\cdot
\bm{1}_N)$, the probability that user $i$ has rating 0 in the next state, as
\begin{eqnarray}
\sum_{\bm{\theta}^\prime:\theta_i^\prime=0} q(\bm{\theta}^\prime|\bm{\theta},\alpha^{\rm a},\alpha^{\rm a}\cdot \bm{1}_N) &=&
[(1-\varepsilon)(1-\beta_1^+) + \varepsilon\beta_1^-] \sum_{m\in M:\theta_{m(i)}=1} \mu(m) \\
&+& [(1-\varepsilon)(1-\beta_1^+) + \varepsilon\beta_1^-] \sum_{m\in M:\theta_{m(i)}=0} \mu(m) \\
&=& (1-\varepsilon)(1-\beta_1^+) + \varepsilon\beta_1^- = 1-x_1^+.
\end{eqnarray}
Now we discuss what happens if user $i$ deviates. Since the recommended plan $\alpha^{\rm a}$ is to exert high effort for all the users, user $i$ can
deviate to the other three plans, namely ``exert high effort for rating-1 users only'', ``exert high effort for rating-0 users only'', ``exert low
effort for all the users''. We can calculate the corresponding stage-game payoff and state transition probabilities under each deviation.
\begin{itemize}
\item ``exert high effort for rating-1 users only'' ($\alpha_i(1,\theta_i)=1,\alpha_i(0,\theta_i)=0$):\\
\begin{eqnarray}
u_i(\bm{\theta},\alpha^{\rm a},\alpha_i,\alpha^{\rm a}\cdot \bm{1}_{N-1}) &=& b - c\cdot \sum_{m\in M:\theta_{m(i)}=1} \mu(m) = b - c\cdot
\frac{s_1(\bm{\theta})-1}{N-1}
\end{eqnarray}
\begin{eqnarray}
&&\sum_{\bm{\theta}^\prime:\theta_i^\prime=1} q(\bm{\theta}^\prime|\bm{\theta},\alpha^{\rm a},\alpha_i,\alpha^{\rm a}\cdot \bm{1}_{N-1}) \\
&=& [(1-\varepsilon)\beta_1^+ + \varepsilon(1-\beta_1^-)] \sum_{m\in M:\theta_{m(i)}=1} \mu(m) + [(1-\varepsilon)(1-\beta_1^-) + \varepsilon \beta_1^+] \sum_{m\in M:\theta_{m(i)}=0} \mu(m) \nonumber \\
&=& [(1-\varepsilon)\beta_1^+ + \varepsilon(1-\beta_1^-)] \frac{s_1(\bm{\theta})-1}{N-1} + [(1-\varepsilon)(1-\beta_1^-) + \varepsilon \beta_1^+]
\frac{s_0(\bm{\theta})}{N-1} \nonumber.
\end{eqnarray}
\begin{eqnarray}
&&\sum_{\bm{\theta}^\prime:\theta_i^\prime=0} q(\bm{\theta}^\prime|\bm{\theta},\alpha^{\rm a},\alpha_i,\alpha^{\rm a}\cdot \bm{1}_{N-1}) \\
&=& [(1-\varepsilon)(1-\beta_1^+) + \varepsilon \beta_1^-] \sum_{m\in M:\theta_{m(i)}=1} \mu(m) + [(1-\varepsilon) \beta_1^- + \varepsilon (1-\beta_1^+)] \sum_{m\in M:\theta_{m(i)}=0} \mu(m) \nonumber \\
&=& [(1-\varepsilon)(1-\beta_1^+) + \varepsilon \beta_1^-] \frac{s_1(\bm{\theta})-1}{N-1} + [(1-\varepsilon) \beta_1^- + \varepsilon (1-\beta_1^+)]
\frac{s_0(\bm{\theta})}{N-1} \nonumber.
\end{eqnarray}
\item ``exert high effort for rating-0 users only'' ($\alpha_i(1,\theta_i)=0,\alpha_i(0,\theta_i)=1$):\\
\begin{eqnarray}
u_i(\bm{\theta},\alpha^{\rm a},\alpha_i,\alpha^{\rm a}\cdot \bm{1}_{N-1}) &=& b - c\cdot \sum_{m\in M:\theta_{m(i)}=0} \mu(m) = b - c\cdot
\frac{s_0(\bm{\theta})}{N-1}
\end{eqnarray}
\begin{eqnarray}
&&\sum_{\bm{\theta}^\prime:\theta_i^\prime=1} q(\bm{\theta}^\prime|\bm{\theta},\alpha^{\rm a},\alpha_i,\alpha^{\rm a}\cdot \bm{1}_{N-1}) \\
&=& [(1-\varepsilon)(1-\beta_1^-) + \varepsilon \beta_1^+] \sum_{m\in M:\theta_{m(i)}=1} \mu(m) + [(1-\varepsilon) \beta_1^+ + \varepsilon (1-\beta_1^-)] \sum_{m\in M:\theta_{m(i)}=0} \mu(m) \nonumber \\
&=& [(1-\varepsilon)(1-\beta_1^-) + \varepsilon \beta_1^+] \frac{s_1(\bm{\theta})-1}{N-1} + [(1-\varepsilon) \beta_1^+ + \varepsilon (1-\beta_1^-)]
\frac{s_0(\bm{\theta})}{N-1} \nonumber.
\end{eqnarray}
\begin{eqnarray}
&&\sum_{\bm{\theta}^\prime:\theta_i^\prime=0} q(\bm{\theta}^\prime|\bm{\theta},\alpha^{\rm a},\alpha_i,\alpha^{\rm a}\cdot \bm{1}_{N-1}) \\
&=& [(1-\varepsilon) \beta_1^- + \varepsilon (1-\beta_1^+)] \sum_{m\in M:\theta_{m(i)}=1} \mu(m) + [(1-\varepsilon) (1-\beta_1^+) + \varepsilon \beta_1^-] \sum_{m\in M:\theta_{m(i)}=0} \mu(m) \nonumber \\
&=& [(1-\varepsilon) \beta_1^- + \varepsilon (1-\beta_1^+)] \frac{s_1(\bm{\theta})-1}{N-1} + [(1-\varepsilon) (1-\beta_1^+) + \varepsilon \beta_1^-]
\frac{s_0(\bm{\theta})}{N-1} \nonumber.
\end{eqnarray}
\item ``exert low effort for all the users'' ($\alpha_i(1,\theta_i)=0,\alpha_i(0,\theta_i)=0$):\\
\begin{eqnarray}
u_i(\bm{\theta},\alpha^{\rm a},\alpha_i,\alpha^{\rm a}\cdot \bm{1}_{N-1}) &=& b
\end{eqnarray}
\begin{eqnarray}
&&\sum_{\bm{\theta}^\prime:\theta_i^\prime=1} q(\bm{\theta}^\prime|\bm{\theta},\alpha^{\rm a},\alpha_i,\alpha^{\rm a}\cdot \bm{1}_{N-1}) \\
&=& [(1-\varepsilon)(1-\beta_1^-) + \varepsilon \beta_1^+] \sum_{m\in M:\theta_{m(i)}=1} \mu(m) + [(1-\varepsilon)(1-\beta_1^-) + \varepsilon \beta_1^+] \sum_{m\in M:\theta_{m(i)}=0} \mu(m) \nonumber \\
&=& (1-\varepsilon)(1-\beta_1^-) + \varepsilon \beta_1^+ \nonumber.
\end{eqnarray}
\begin{eqnarray}
&&\sum_{\bm{\theta}^\prime:\theta_i^\prime=0} q(\bm{\theta}^\prime|\bm{\theta},\alpha^{\rm a},\alpha_i,\alpha^{\rm a}\cdot \bm{1}_{N-1}) \\
&=& [(1-\varepsilon) \beta_1^- + \varepsilon (1-\beta_1^+)] \sum_{m\in M:\theta_{m(i)}=1} \mu(m) + [(1-\varepsilon) \beta_1^- + \varepsilon (1-\beta_1^+)] \sum_{m\in M:\theta_{m(i)}=0} \mu(m) \nonumber \\
&=& (1-\varepsilon) \beta_1^- + \varepsilon (1-\beta_1^+) \nonumber.
\end{eqnarray}
\end{itemize}

Plugging the above expressions into \eqref{eqn:Decomposability_Altruistic_DifferentReputations}, we can simplify the incentive compatibility
constraints (i.e. the inequality constraints) to
\begin{eqnarray}
(1-2\varepsilon) \left[\beta_1^+-(1-\beta_1^-)\right] (\gamma^1-\gamma^0) \geq \frac{1-\delta}{\delta} \cdot c,
\end{eqnarray}
under all three deviating plans.

Hence, if user $i$ has rating $1$, the decomposability constraints \eqref{eqn:Decomposability_Altruistic_DifferentReputations} reduces to
\begin{eqnarray}
v^1 = (1-\delta) \cdot (b-c) + \delta \cdot \left[x_1^+ \gamma^1 + (1-x_1^+) \gamma^0\right],
\end{eqnarray}
where $v^1$ is the payoff of the users with rating $1$, and
\begin{eqnarray}
(1-2\varepsilon) \left[\beta_1^+-(1-\beta_1^-)\right] (\gamma^1-\gamma^0) \geq \frac{1-\delta}{\delta} \cdot c.
\end{eqnarray}

Similarly, if user $i$ has rating $0$, we can reduce the decomposability constraints \eqref{eqn:Decomposability_Altruistic_DifferentReputations} to
\begin{eqnarray}
v^0 = (1-\delta) \cdot (b-c) + \delta \cdot \left[x_0^+ \gamma^1 + (1-x_0^+) \gamma^0\right],
\end{eqnarray}
and
\begin{eqnarray}
(1-2\varepsilon) \left[\beta_0^+-(1-\beta_0^-)\right] (\gamma^1-\gamma^0) \geq \frac{1-\delta}{\delta} \cdot c.
\end{eqnarray}

For the above incentive compatibility constraints (the above two inequalities) to hold, we need to have $\beta_1^+-(1-\beta_1^-)>0$ and
$\beta_0^+-(1-\beta_0^-)>0$, which are part of Condition 1 and Condition 2. Now we will derive the rest of the sufficient conditions in
Theorem~\ref{theorem:AchieveSocialOptimum}.

The above two equalities determine the continuation payoff $\gamma^1$ and $\gamma^0$ as below
\begin{eqnarray}\label{eqn:ContinuationPayoff_Altruistic_DifferentReputations}
\left\{\begin{array}{l}\gamma^1 = \frac{1}{\delta} \cdot \frac{(1-x_0^+)v^1-(1-x_1^+)v^0}{x_1^+ - x_0^+}-\frac{1-\delta}{\delta} \cdot (b-c) \\
\gamma^0 = \frac{1}{\delta} \cdot \frac{x_1^+v^0-x_0^+v^1}{x_1^+ - x_0^+}-\frac{1-\delta}{\delta} \cdot (b-c)\end{array}.\right.
\end{eqnarray}

Now we consider the decomposability constraints if we want to decompose a payoff profile $\bm{v}\in\mathcal{W}^{\bm{\theta}}$ using the fair plan
$\alpha^{\rm f}$. Since we focus on decomposition by simple continuation payoff functions, we write the decomposition constraints as
\begin{eqnarray}\label{eqn:Decomposability_Fair_DifferentReputations}
v_i &=& (1-\delta) \cdot u_i(\bm{\theta},\alpha^{\rm f},\alpha^{\rm f}\cdot \bm{1}_N) \\
&+& \delta \left( \gamma^1 \sum_{\bm{\theta}^\prime:\theta_i^\prime=1} q(\bm{\theta}^\prime|\bm{\theta},\alpha^{\rm f},\alpha^{\rm f}\cdot \bm{1}_N)
+ \gamma^0 \sum_{\bm{\theta}^\prime:\theta_i^\prime=0}
q(\bm{\theta}^\prime|\bm{\theta},\alpha^{\rm f},\alpha^{\rm f}\cdot \bm{1}_N) \right) \nonumber \\
&\geq& (1-\delta) \cdot u_i(\bm{\theta},\alpha^{\rm f},\alpha_i,\alpha^{\rm f}\cdot \bm{1}_{N-1}) \nonumber \\
&+& \delta \left( \gamma^1 \sum_{\bm{\theta}^\prime:\theta_i^\prime=1} q(\bm{\theta}^\prime|\bm{\theta},\alpha^{\rm f},\alpha_i,\alpha^{\rm f}\cdot
\bm{1}_{N-1}) + \gamma^0 \sum_{\bm{\theta}^\prime:\theta_i^\prime=0} q(\bm{\theta}^\prime|\bm{\theta},\alpha^{\rm f},\alpha_i,\alpha^{\rm f}\cdot
\bm{1}_{N-1}) \right). \nonumber
\end{eqnarray}

Due to space limitation, we omit the details and directly give the simplification of the above decomposability constraints as follows. First, the
incentive compatibility constraints (i.e. the inequality constraints) are simplified to
\begin{eqnarray}
(1-2\varepsilon) \left[\beta_1^+-(1-\beta_1^-)\right] (\gamma^1-\gamma^0) \geq \frac{1-\delta}{\delta} \cdot c,
\end{eqnarray}
and
\begin{eqnarray}
(1-2\varepsilon) \left[\beta_0^+-(1-\beta_0^-)\right] (\gamma^1-\gamma^0) \geq \frac{1-\delta}{\delta} \cdot c,
\end{eqnarray}
under all three deviating plans. Note that the above incentive compatibility constraints are the same as the ones when we want to decompose the
payoffs using the altruistic plan $\alpha^{\rm a}$.

Then, the equality constraints in \eqref{eqn:Decomposability_Fair_DifferentReputations} can be simplified as follows. For the users with rating $1$,
we have
\begin{eqnarray}
v^1 = (1-\delta) \cdot \left(b-\frac{s_1(\bm{\theta})-1}{N-1}c\right) + \delta \cdot \left[x_{s_1(\bm{\theta})}^+ \cdot \gamma^1 +
(1-x_{s_1(\bm{\theta})}^+) \cdot \gamma^0\right],
\end{eqnarray}
where
\begin{eqnarray}
x_{s_1(\bm{\theta})}\triangleq \left[(1-\varepsilon)\frac{s_1(\bm{\theta})-1}{N-1}+\frac{s_0(\bm{\theta})}{N-1}\right] \beta_1^+ +
\left(\varepsilon\frac{s_1(\bm{\theta})-1}{N-1}\right) (1-\beta_1^-).
\end{eqnarray}
For the users with rating $0$, we have
\begin{eqnarray}
v^0 = (1-\delta) \cdot \left(\frac{s_0(\bm{\theta})-1}{N-1}b-c\right) + \delta \cdot \left[x_0^+ \gamma^1 + (1-x_0^+) \gamma^0\right].
\end{eqnarray}

The above two equalities determine the continuation payoff $\gamma^1$ and $\gamma^0$ as below
\begin{eqnarray}\label{eqn:ContinuationPayoff_Fair_DifferentReputations}
\left\{\begin{array}{l}\gamma^1 = \frac{1}{\delta} \cdot \frac{(1-x_0^+)v^1-(1-x_{s_1(\bm{\theta})}^+)v^0}{x_{s_1(\bm{\theta})}^+ - x_0^+}-\frac{1-\delta}{\delta} \cdot \frac{\left(b-\frac{s_1(\bm{\theta})-1}{N-1}c\right)(1-x_0^+)-\left(\frac{s_0(\bm{\theta})-1}{N-1}b-c\right)(1-x_{s_1(\bm{\theta})}^+)}{x_{s_1(\bm{\theta})}^+ - x_0^+} \\
\gamma^0 = \frac{1}{\delta} \cdot \frac{x_{s_1(\bm{\theta})}^+ v^0-x_0^+ v^1}{x_{s_1(\bm{\theta})}^+ - x_0^+}-\frac{1-\delta}{\delta} \cdot \frac{\left(b-\frac{s_1(\bm{\theta})-1}{N-1}c\right) x_0^+-\left(\frac{s_0(\bm{\theta})-1}{N-1}b-c\right) x_{s_1(\bm{\theta})}^+}{x_{s_1(\bm{\theta})}^+ - x_0^+} \\
\end{array}.\right.
\end{eqnarray}

\subsubsection{Sufficient conditions}
Now we derive the sufficient conditions under which any payoff profile $\bm{v}\in\mathcal{W}^{\bm{\theta}}$ can be decomposed by
$(\alpha_0=\alpha^{\rm a},\alpha^{\rm a}\cdot \bm{1}_N)$ or $(\alpha_0=\alpha^{\rm f},\alpha^{\rm f}\cdot \bm{1}_N)$. Specifically, we will derive
the conditions such that for any payoff profile $\bm{v}\in\mathcal{W}^{\bm{\theta}}$, at least one of the two decomposability constraints
\eqref{eqn:Decomposability_Altruistic_DifferentReputations} and \eqref{eqn:Decomposability_Fair_DifferentReputations} is satisfied. From the
preliminaries, we know that the incentive compatibility constraints in \eqref{eqn:Decomposability_Altruistic_DifferentReputations} and
\eqref{eqn:Decomposability_Fair_DifferentReputations} can be simplified into the same constraints:
\begin{eqnarray}
(1-2\varepsilon) \left[\beta_1^+-(1-\beta_1^-)\right] (\gamma^1-\gamma^0) \geq \frac{1-\delta}{\delta} \cdot c,
\end{eqnarray}
and
\begin{eqnarray}
(1-2\varepsilon) \left[\beta_0^+-(1-\beta_0^-)\right] (\gamma^1-\gamma^0) \geq \frac{1-\delta}{\delta} \cdot c.
\end{eqnarray}
The above constraints impose the constraint on the discount factor, namely
\begin{eqnarray}
\delta\geq \max_{\theta\in\Theta} \frac{c}{c+(1-2\varepsilon) \left[\beta_{\theta}^+-(1-\beta_{\theta}^-)\right] (\gamma^1-\gamma^0)}.
\end{eqnarray}
Since $\gamma_1$ and $\gamma_0$ should satisfy $\gamma^1 - \gamma^0 \geq \epsilon_0-\epsilon_1$, the above constraints can be rewritten as
\begin{eqnarray}
\delta\geq \max_{\theta\in\Theta} \frac{c}{c+(1-2\varepsilon) \left[\beta_{\theta}^+-(1-\beta_{\theta}^-)\right] (\epsilon_0-\epsilon_1)},
\end{eqnarray}
where is part of Condition 3 in Theorem~\ref{theorem:AchieveSocialOptimum}.

In addition, the continuation payoffs $\gamma_1$ and $\gamma_0$ should satisfy the constraints of the self-generating set, namely
\begin{eqnarray}
\gamma^1 - \gamma^0 &\geq& \epsilon_0-\epsilon_1, \\
\gamma^1 + \frac{c}{(N-1)b} \cdot \gamma^0 &\leq& z_2 \triangleq (1+\frac{c}{(N-1)b})(b-c)-\frac{c}{(N-1)b}\epsilon_0-\epsilon_1, \\
\gamma^1 - \frac{b}{\frac{N-2}{N-1}b-c}\cdot \gamma^0 &\leq& z_3 \triangleq -\frac{\frac{b}{\frac{N-2}{N-1}b-c}-1}{1+\frac{c}{(N-1)b}}\cdot z_2.
\end{eqnarray}

We can plug the expressions of the continuation payoffs $\gamma_1$ and $\gamma_0$ in \eqref{eqn:ContinuationPayoff_Altruistic_DifferentReputations}
and \eqref{eqn:ContinuationPayoff_Fair_DifferentReputations} into the above constraints. Specifically, if a payoff profile $\bm{v}$ is decomposed by
the altruistic plan, the following constraints should be satisfied for the continuation payoff profile to be in the self-generating set: (for
notational simplicity, we define $\kappa_1\triangleq\frac{b}{\frac{N-2}{N-1}b-c}-1$ and $\kappa_2\triangleq 1+\frac{c}{(N-1)b}$)
\begin{align}
&\frac{1}{\delta}\cdot\frac{v^1-v^0}{x_1^+ - x_0^+} \geq \epsilon_0-\epsilon_1, \tag{$\alpha^{\rm a}$-1}\label{eqn:Altruistic_1} \\
&\frac{1}{\delta}\cdot\left\{\frac{(1-\kappa_2 x_0^+)v^1 - (1-\kappa_2 x_1^+)v^0}{x_1^+-x_0^+} - \kappa_2 \cdot (b-c)\right\} \leq z_2 - \kappa_2
\cdot (b-c), \tag{$\alpha^{\rm a}$-2}\label{eqn:Altruistic_2} \\
&\frac{1}{\delta}\cdot\left\{\frac{(1+\kappa_1 x_0^+)v^1 - (1+\kappa_1 x_1^+)v^0}{x_1^+-x_0^+} + \kappa_1 \cdot (b-c)\right\} \leq z_3 + \kappa_1
\cdot (b-c). \tag{$\alpha^{\rm a}$-3}\label{eqn:Altruistic_3}
\end{align}

The constraint \eqref{eqn:Altruistic_1} is satisfied for all $v^1$ and $v^0$ as long as $x_1^+>x_0^+$, because $v^1-v^0>\epsilon_0-\epsilon_1$,
$|x_1^+>x_0^+|<1$, and $\delta<1$.

Since both the left-hand side (LHS) and the right-hand side (RHS) of \eqref{eqn:Altruistic_2} are smaller than $0$, we have
\begin{eqnarray}
\eqref{eqn:Altruistic_2} \Leftrightarrow \delta\leq \frac{\frac{(1-\kappa_2 x_0^+)v^1 - (1-\kappa_2 x_1^+)v^0}{x_1^+-x_0^+} - \kappa_2 \cdot
(b-c)}{z_2 - \kappa_2 \cdot (b-c)}
\end{eqnarray}

The RHS of \eqref{eqn:Altruistic_3} is larger than $0$. Hence, we have
\begin{eqnarray}
\eqref{eqn:Altruistic_3} \Leftrightarrow \delta\geq \frac{\frac{(1+\kappa_1 x_0^+)v^1 - (1+\kappa_1 x_1^+)v^0}{x_1^+-x_0^+} + \kappa_1 \cdot
(b-c)}{z_3 + \kappa_1 \cdot (b-c)}.
\end{eqnarray}

If a payoff profile $\bm{v}$ is decomposed by the fair plan, the following constraints should be satisfied for the continuation payoff profile to be
in the self-generating set:
\begin{align}
&\frac{1}{\delta}\cdot\left\{\frac{v^1-v^0}{x_{s_1(\bm{\theta})}^+ - x_0^+}-\frac{\frac{s_1(\bm{\theta})}{N-1}b+\frac{s_0(\bm{\theta})}{N-1}c}{x_{s_1(\bm{\theta})}^+ - x_0^+}\right\} \geq \epsilon_0-\epsilon_1-\frac{\frac{s_1(\bm{\theta})}{N-1}b+\frac{s_0(\bm{\theta})}{N-1}c}{x_{s_1(\bm{\theta})}^+ - x_0^+}, \tag{$\alpha^{\rm f}$-1}\label{eqn:Fair_1} \\
&\frac{1}{\delta}\cdot\left\{\frac{(1-\kappa_2 x_0^+)v^1 - (1-\kappa_2 x_{s_1(\bm{\theta})}^+)v^0}{x_{s_1(\bm{\theta})}^+-x_0^+} - \frac{(1-\kappa_2
x_0^+)\left(b-\frac{s_1(\bm{\theta})-1}{N-1}c\right) - (1-\kappa_2
x_{s_1(\bm{\theta})}^+)\left(\frac{s_0(\bm{\theta})-1}{N-1}b-c\right)}{x_{s_1(\bm{\theta})}^+-x_0^+}\right\} \nonumber \\
&\leq z_2 - \frac{(1-\kappa_2 x_0^+)\left(b-\frac{s_1(\bm{\theta})-1}{N-1}c\right) - (1-\kappa_2
x_{s_1(\bm{\theta})}^+)\left(\frac{s_0(\bm{\theta})-1}{N-1}b-c\right)}{x_{s_1(\bm{\theta})}^+-x_0^+}, \tag{$\alpha^{\rm f}$-2}\label{eqn:Fair_2} \\
&\frac{1}{\delta}\cdot\left\{\frac{(1+\kappa_1 x_0^+)v^1 - (1+\kappa_1 x_{s_1(\bm{\theta})}^+)v^0}{x_{s_1(\bm{\theta})}^+-x_0^+} - \frac{(1+\kappa_1
x_0^+)\left(b-\frac{s_1(\bm{\theta})-1}{N-1}c\right) - (1+\kappa_1
x_{s_1(\bm{\theta})}^+)\left(\frac{s_0(\bm{\theta})-1}{N-1}b-c\right)}{x_{s_1(\bm{\theta})}^+-x_0^+}\right\} \nonumber \\
&\leq z_3 - \frac{(1+\kappa_1 x_0^+)\left(b-\frac{s_1(\bm{\theta})-1}{N-1}c\right) - (1+\kappa_1
x_{s_1(\bm{\theta})}^+)\left(\frac{s_0(\bm{\theta})-1}{N-1}b-c\right)}{x_{s_1(\bm{\theta})}^+-x_0^+}. \tag{$\alpha^{\rm f}$-3}\label{eqn:Fair_3}
\end{align}

Since $\frac{v^1-v^0}{x_{s_1(\bm{\theta})}^+ - x_0^+}>\epsilon_0-\epsilon_1$, the constraint \eqref{eqn:Fair_1} is satisfied for all $v^1$ and $v^0$
if $v^1-v^0\geq\frac{s_1(\bm{\theta})}{N-1}b+\frac{s_0(\bm{\theta})}{N-1}c$. Hence, the constraint \eqref{eqn:Fair_1} is equivalent to
\begin{eqnarray}
\delta\geq \frac{v^1-v^0-\left(\frac{s_1(\bm{\theta})}{N-1}b+\frac{s_0(\bm{\theta})}{N-1}c\right)}{(\epsilon^0-\epsilon^1)(x_{s_1(\bm{\theta})}^+ -
x_0^+)-\left(\frac{s_1(\bm{\theta})}{N-1}b+\frac{s_0(\bm{\theta})}{N-1}c\right)},~{\rm
for}~\bm{\theta}~s.t.~\frac{s_1(\bm{\theta})}{N-1}b+\frac{s_0(\bm{\theta})}{N-1}c\geq v^1-v^0.
\end{eqnarray}

For \eqref{eqn:Fair_2}, we want to make the RHS have the same (minus) sign under any state $\bm{\theta}$, which is true if
\begin{eqnarray}
1-\kappa_2 x_0^+>0,~1-\kappa_2 x_{s_1(\bm{\theta})}^+<0,~\frac{1-\kappa_2 x_{s_1(\bm{\theta})}^+}{1-\kappa_2 x_0^+}\geq
-(\kappa_2-1),~s_1(\bm{\theta})=1,\ldots,N-1,
\end{eqnarray}
which leads to
\begin{eqnarray}
&&x_{s_1(\bm{\theta})}^+>\frac{1}{\kappa_2},~x_0^+<\frac{1}{\kappa_2},~x_0^+<\frac{1-x_{s_1(\bm{\theta})}^+}{1-\kappa_2},~s_1(\bm{\theta})=1,\ldots,N-1,\\
&\Leftrightarrow& \frac{N-2}{N-1}x_1^+ +
\frac{1}{N-1}\beta_1^+>\frac{1}{\kappa_2},~x_0^+<\min\left\{\frac{1}{\kappa_2},~\frac{1-\beta_1^+}{1-\kappa_2}\right\}.
\end{eqnarray}
Since the RHS of \eqref{eqn:Fair_2} is smaller than $0$, we have
\begin{eqnarray}
\eqref{eqn:Fair_2} \Leftrightarrow \delta\leq \frac{\frac{(1-\kappa_2 x_0^+)v^1 - (1-\kappa_2
x_{s_1(\bm{\theta})}^+)v^0}{x_{s_1(\bm{\theta})}^+-x_0^+} - \frac{(1-\kappa_2 x_0^+)\left(b-\frac{s_1(\bm{\theta})-1}{N-1}c\right) - (1-\kappa_2
x_{s_1(\bm{\theta})}^+)\left(\frac{s_0(\bm{\theta})-1}{N-1}b-c\right)}{x_{s_1(\bm{\theta})}^+-x_0^+}}{z_2 - \frac{(1-\kappa_2
x_0^+)\left(b-\frac{s_1(\bm{\theta})-1}{N-1}c\right) - (1-\kappa_2
x_{s_1(\bm{\theta})}^+)\left(\frac{s_0(\bm{\theta})-1}{N-1}b-c\right)}{x_{s_1(\bm{\theta})}^+-x_0^+}}.
\end{eqnarray}

For \eqref{eqn:Fair_3}, since $\frac{1+\kappa_1 x_{s_1(\bm{\theta})}^+}{1+\kappa_1 x_0^+}<1+\kappa_1$, the RHS is always smaller than $0$. Hence, we
have
\begin{eqnarray}
\eqref{eqn:Fair_3} \Leftrightarrow \delta\leq \frac{\frac{(1+\kappa_1 x_0^+)v^1 - (1+\kappa_1
x_{s_1(\bm{\theta})}^+)v^0}{x_{s_1(\bm{\theta})}^+-x_0^+} - \frac{(1+\kappa_1 x_0^+)\left(b-\frac{s_1(\bm{\theta})-1}{N-1}c\right) - (1+\kappa_1
x_{s_1(\bm{\theta})}^+)\left(\frac{s_0(\bm{\theta})-1}{N-1}b-c\right)}{x_{s_1(\bm{\theta})}^+-x_0^+}}{z_3 - \frac{(1+\kappa_1
x_0^+)\left(b-\frac{s_1(\bm{\theta})-1}{N-1}c\right) - (1+\kappa_1
x_{s_1(\bm{\theta})}^+)\left(\frac{s_0(\bm{\theta})-1}{N-1}b-c\right)}{x_{s_1(\bm{\theta})}^+-x_0^+}}.
\end{eqnarray}

We briefly summarize what requirements on $\delta$ we have obtained now. To make the continuation payoff profile in the self-generating under the
decomposition of $\alpha^{\rm a}$, we have one upper bound on $\delta$ resulting from \eqref{eqn:Altruistic_2} and one lower bound on $\delta$
resulting from \eqref{eqn:Altruistic_3}. To make the continuation payoff profile in the self-generating under the decomposition of $\alpha^{\rm f}$,
we have two upper bounds on $\delta$ resulting from \eqref{eqn:Fair_2} and \eqref{eqn:Fair_3}, and one lower bound on $\delta$ resulting from
\eqref{eqn:Fair_1}. First, we want to eliminate the upper bounds, namely make the upper bounds larger than $1$, such that $\delta$ can be arbitrarily
close to $1$.

To eliminate the following upper bound resulting from \eqref{eqn:Altruistic_2}
\begin{eqnarray}
\delta\leq \frac{\frac{(1-\kappa_2 x_0^+)v^1 - (1-\kappa_2 x_1^+)v^0}{x_1^+-x_0^+} - \kappa_2 \cdot (b-c)}{z_2 - \kappa_2 \cdot (b-c)},
\end{eqnarray}
we need to have (since $z_2 - \kappa_2 \cdot (b-c)<0$)
\begin{eqnarray}
\frac{(1-\kappa_2 x_0^+)v^1 - (1-\kappa_2 x_1^+)v^0}{x_1^+-x_0^+} \leq z_2,~\forall v^1,v^0.
\end{eqnarray}
The LHS of the above inequality is maximized when $v^0=\frac{z_2-z_3}{\kappa_1+\kappa_2}$ and $v^1=v^0+\frac{\kappa_1 z_2+\kappa_2
z_3}{\kappa_1+\kappa_2}$. Hence, the above inequality is satisfied if
\begin{eqnarray}
&& \frac{(1-\kappa_2 x_0^+)\left(\frac{z_2-z_3}{\kappa_1+\kappa_2}+\frac{\kappa_1 z_2+\kappa_2 z_3}{\kappa_1+\kappa_2}\right) - (1-\kappa_2
x_1^+)\frac{z_2-z_3}{\kappa_1+\kappa_2}}{x_1^+-x_0^+} \leq z_2 \\
&\Leftrightarrow& \left(\frac{1-x_1^+ - x_0^+(\kappa_2-1)}{x_1^+ - x_0^+}\frac{\kappa_1}{\kappa_1+\kappa_2}\right) z_2 \leq -\frac{1-x_1^+ -
x_0^+(\kappa_2-1)}{x_1^+ - x_0^+}\frac{\kappa_2}{\kappa_1+\kappa_2} z_3.
\end{eqnarray}
Since $x_0^+<\frac{1-\beta_1^+}{1-\kappa_2}<\frac{1-x_1^+}{1-\kappa_2}$, we have
\begin{eqnarray}
z_2 \leq -\frac{\kappa_2}{\kappa_1} z_3.
\end{eqnarray}

To eliminate the following upper bound resulting from \eqref{eqn:Fair_2}
\begin{eqnarray}
\delta\leq \frac{\frac{(1-\kappa_2 x_0^+)v^1 - (1-\kappa_2 x_{s_1(\bm{\theta})}^+)v^0}{x_{s_1(\bm{\theta})}^+-x_0^+} - \frac{(1-\kappa_2
x_0^+)\left(b-\frac{s_1(\bm{\theta})-1}{N-1}c\right) - (1-\kappa_2
x_{s_1(\bm{\theta})}^+)\left(\frac{s_0(\bm{\theta})-1}{N-1}b-c\right)}{x_{s_1(\bm{\theta})}^+-x_0^+}}{z_2 - \frac{(1-\kappa_2
x_0^+)\left(b-\frac{s_1(\bm{\theta})-1}{N-1}c\right) - (1-\kappa_2
x_{s_1(\bm{\theta})}^+)\left(\frac{s_0(\bm{\theta})-1}{N-1}b-c\right)}{x_{s_1(\bm{\theta})}^+-x_0^+}},
\end{eqnarray}
we need to have (since $z_2 - \frac{(1-\kappa_2 x_0^+)\left(b-\frac{s_1(\bm{\theta})-1}{N-1}c\right) - (1-\kappa_2
x_{s_1(\bm{\theta})}^+)\left(\frac{s_0(\bm{\theta})-1}{N-1}b-c\right)}{x_{s_1(\bm{\theta})}^+-x_0^+}<0$)
\begin{eqnarray}
\frac{(1-\kappa_2 x_0^+)v^1 - (1-\kappa_2 x_{s_1(\bm{\theta})}^+)v^0}{x_{s_1(\bm{\theta})}^+-x_0^+} \leq z_2,~\forall v^1,v^0.
\end{eqnarray}
Similarly, the LHS of the above inequality is maximized when $v^0=\frac{z_2-z_3}{\kappa_1+\kappa_2}$ and $v^1=v^0+\frac{\kappa_1 z_2+\kappa_2
z_3}{\kappa_1+\kappa_2}$. Hence, the above inequality is satisfied if
\begin{eqnarray}
&& \frac{(1-\kappa_2 x_0^+)\left(\frac{z_2-z_3}{\kappa_1+\kappa_2}+\frac{\kappa_1 z_2+\kappa_2 z_3}{\kappa_1+\kappa_2}\right) - (1-\kappa_2
x_{s_1(\bm{\theta})}^+)\frac{z_2-z_3}{\kappa_1+\kappa_2}}{x_{s_1(\bm{\theta})}^+-x_0^+} \leq z_2 \\
&\Leftrightarrow& \left(\frac{1-x_{s_1(\bm{\theta})}^+ - x_0^+(\kappa_2-1)}{x_{s_1(\bm{\theta})}^+ - x_0^+}\frac{\kappa_1}{\kappa_1+\kappa_2}\right)
z_2 \leq -\frac{1-x_{s_1(\bm{\theta})}^+ - x_0^+(\kappa_2-1)}{x_{s_1(\bm{\theta})}^+ - x_0^+}\frac{\kappa_2}{\kappa_1+\kappa_2} z_3.
\end{eqnarray}
Since $x_0^+<\frac{1-\beta_1^+}{1-\kappa_2}<\frac{1-x_{s_1(\bm{\theta})}^+}{1-\kappa_2}$, we have
\begin{eqnarray}
z_2 \leq -\frac{\kappa_2}{\kappa_1} z_3.
\end{eqnarray}

To eliminate the following upper bound resulting from \eqref{eqn:Fair_3}
\begin{eqnarray}
\delta\leq \frac{\frac{(1+\kappa_1 x_0^+)v^1 - (1+\kappa_1 x_{s_1(\bm{\theta})}^+)v^0}{x_{s_1(\bm{\theta})}^+-x_0^+} - \frac{(1+\kappa_1
x_0^+)\left(b-\frac{s_1(\bm{\theta})-1}{N-1}c\right) - (1+\kappa_1
x_{s_1(\bm{\theta})}^+)\left(\frac{s_0(\bm{\theta})-1}{N-1}b-c\right)}{x_{s_1(\bm{\theta})}^+-x_0^+}}{z_3 - \frac{(1+\kappa_1
x_0^+)\left(b-\frac{s_1(\bm{\theta})-1}{N-1}c\right) - (1+\kappa_1
x_{s_1(\bm{\theta})}^+)\left(\frac{s_0(\bm{\theta})-1}{N-1}b-c\right)}{x_{s_1(\bm{\theta})}^+-x_0^+}},
\end{eqnarray}
we need to have (since $z_3 - \frac{(1+\kappa_1 x_0^+)\left(b-\frac{s_1(\bm{\theta})-1}{N-1}c\right) - (1+\kappa_1
x_{s_1(\bm{\theta})}^+)\left(\frac{s_0(\bm{\theta})-1}{N-1}b-c\right)}{x_{s_1(\bm{\theta})}^+-x_0^+}<0$)
\begin{eqnarray}
\frac{(1+\kappa_1 x_0^+)v^1 - (1+\kappa_1 x_{s_1(\bm{\theta})}^+)v^0}{x_{s_1(\bm{\theta})}^+-x_0^+} \leq z_3,~\forall v^1,v^0.
\end{eqnarray}
Again, the LHS of the above inequality is maximized when $v^0=\frac{z_2-z_3}{\kappa_1+\kappa_2}$ and $v^1=v^0+\frac{\kappa_1 z_2+\kappa_2
z_3}{\kappa_1+\kappa_2}$. Hence, the above inequality is satisfied if
\begin{eqnarray}
&& \frac{(1+\kappa_1 x_0^+)\left(\frac{z_2-z_3}{\kappa_1+\kappa_2}+\frac{\kappa_1 z_2+\kappa_2 z_3}{\kappa_1+\kappa_2}\right) - (1+\kappa_1 x_{s_1(\bm{\theta})}^+)\frac{z_2-z_3}{\kappa_1+\kappa_2}}{x_{s_1(\bm{\theta})}^+-x_0^+} \leq z_3 \\
&\Leftrightarrow& \left(\frac{1-x_{s_1(\bm{\theta})}^+ + x_0^+(\kappa_1+1)}{x_{s_1(\bm{\theta})}^+ - x_0^+}\frac{\kappa_1}{\kappa_1+\kappa_2}\right)
z_2 \leq -\frac{1-x_{s_1(\bm{\theta})}^+ + x_0^+(\kappa_1+1)}{x_{s_1(\bm{\theta})}^+ - x_0^+}\frac{\kappa_2}{\kappa_1+\kappa_2} z_3.
\end{eqnarray}
Since $1-x_{s_1(\bm{\theta})}^+ + x_0^+(\kappa_1+1)>0$, we have
\begin{eqnarray}
z_2 \leq -\frac{\kappa_2}{\kappa_1} z_3.
\end{eqnarray}

In summary, to eliminate the upper bounds on $\delta$, we only need to have $z_2 \leq -\frac{\kappa_2}{\kappa_1} z_3$, which is satisfied since we
define $z_3 \triangleq -\frac{\kappa_1}{\kappa_2} z_2$.

Now we derive the analytical lower bound on $\delta$ based on the lower bounds resulting from \eqref{eqn:Altruistic_3} and \eqref{eqn:Fair_1}:
\begin{eqnarray}
\eqref{eqn:Altruistic_3}\Leftrightarrow\delta\geq \frac{\frac{(1+\kappa_1 x_0^+)v^1 - (1+\kappa_1 x_1^+)v^0}{x_1^+-x_0^+} + \kappa_1 \cdot (b-c)}{z_3
+ \kappa_1 \cdot (b-c)},
\end{eqnarray}
and
\begin{eqnarray}
\delta\geq \frac{v^1-v^0-\left(\frac{s_1(\bm{\theta})}{N-1}b+\frac{s_0(\bm{\theta})}{N-1}c\right)}{(\epsilon^0-\epsilon^1)(x_{s_1(\bm{\theta})}^+ -
x_0^+)-\left(\frac{s_1(\bm{\theta})}{N-1}b+\frac{s_0(\bm{\theta})}{N-1}c\right)},~{\rm
for}~\bm{\theta}~s.t.~\frac{s_1(\bm{\theta})}{N-1}b+\frac{s_0(\bm{\theta})}{N-1}c\geq v^1-v^0.
\end{eqnarray}
We define an intermediate lower bound based on the latter inequality along with the inequalities resulting from the incentive compatibility
constraints:
\begin{eqnarray}
\underline{\delta}^\prime=\max\left\{\max_{s_1\in\{1,\ldots,N-1\}: \frac{s_1}{N-1}b+\frac{N-s_1}{N-1}c>\epsilon_0-\epsilon_1}\frac{\epsilon_0-\epsilon_1-\left(\frac{s_1}{N-1}b+\frac{N-s_1}{N-1}c\right)}{(\epsilon_0-\epsilon_1)\left(\frac{N-s_1}{N-1}\beta_1^++\frac{s_1-1}{N-1}x_1^+\right)-\left(\frac{s_1}{N-1}b+\frac{N-s_1}{N-1}c\right)},\right. \nonumber \\
\left.\max_{\theta\in\{0,1\}}\frac{c}{c+(1-2\varepsilon)(\beta_{\theta}^+-(1-\beta_{\theta}^-))(\epsilon_0-\epsilon_1)}\right\}.
\end{eqnarray}
Then the lower bound can be written as $\underline{\delta}=\max\left\{\underline{\delta}^\prime,\underline{\delta}^{\prime\prime}\right\}$, where
$\underline{\delta}^{\prime\prime}$ is the lower bound that we will derive for the case when the users have the same rating. If the payoffs $v^1$ and
$v^0$ satisfy the constraint resulting from \eqref{eqn:Altruistic_3}, namely satisfy
\begin{eqnarray}
\frac{(1+\kappa_1 x_0^+)v^1 - (1+\kappa_1 x_1^+)v^0}{x_1^+-x_0^+} \leq \underline{\delta} z_3 - (1-\underline{\delta}) \kappa_1 \cdot (b-c),
\end{eqnarray}
then we use $\alpha^{\rm a}$ to decompose $v^1$ and $v^0$. Otherwise, we use $\alpha^{\rm f}$ to decompose $v^1$ and $v^0$

\subsection{When the users have the same rating}
Now we derive the conditions under which any payoff profile in $\mathcal{W}^{\bm{1}_N}$ and $\mathcal{W}^{\bm{0}_N}$ can be decomposed.

If all the users have rating $1$, namely $\bm{\theta}=\bm{1}_N$, to decompose $\bm{v}\in\mathcal{W}^{\bm{1}_N}$, we need to find a recommended plan
$\alpha_0$ and a simple continuation payoff function $\bm{\gamma}$ such that for all $i\in\mathcal{N}$ and for all $\alpha_i\in A$,
\begin{eqnarray}
v_i &=& (1-\delta)u_i(\bm{\theta},\alpha_0,\alpha_0\cdot \bm{1}_N) + \delta \sum_{\bm{\theta}^\prime} \gamma_i(\bm{\theta}^\prime)
q(\bm{\theta}^\prime|\bm{\theta},\alpha_0,\alpha_0\cdot \bm{1}_N) \\
&\geq& (1-\delta)u_i(\bm{\theta},\alpha_0,\alpha_i,\alpha_0\cdot \bm{1}_{N-1}) + \delta \sum_{\bm{\theta}^\prime} \gamma_i(\bm{\theta}^\prime)
q(\bm{\theta}^\prime|\bm{\theta},\alpha_0,\alpha_i,\alpha_0\cdot \bm{1}_{N-1}). \nonumber
\end{eqnarray}
When all the users have the same rating, the altruistic plan $\alpha^{\rm a}$ is equivalent to the fair plan $\alpha^{\rm f}$. Hence, we use the
altruistic plan and the selfish plan to decompose the payoff profiles.

If we use the altruistic plan $\alpha^{\rm a}$ to decompose a payoff profile $\bm{v}$, we have
\begin{eqnarray}
v^1 = (1-\delta)(b-c) + \delta (x_1^+ \gamma^1 + (1-x_1^+) \gamma^0),
\end{eqnarray}
and the incentive compatibility constraint
\begin{eqnarray}
(1-2\varepsilon)\left[\beta_1^+-(1-\beta_1^-)\right](\gamma^1-\gamma^0)\geq \frac{1-\delta}{\delta} c.
\end{eqnarray}
Setting $\gamma^1=\gamma^0+\frac{1-\delta}{\delta} \frac{c}{(1-2\varepsilon)\left[\beta_1^+-(1-\beta_1^-)\right]}$ and noticing that
$\gamma^0\in\left[\frac{(1+\kappa_1)(\epsilon_0-\epsilon_1)-z_3}{\kappa_1}, \frac{\kappa_1 z_2+(\kappa_2-1)z_3}{\kappa_1+\kappa_2}\right]$, we get an
lower bound on $v^1$ that can be decomposed by $\alpha^{\rm a}$
\begin{eqnarray}
v^1 &=& (1-\delta)(b-c) + \delta \left(\gamma^0+x_1^+\frac{1-\delta}{\delta} \frac{c}{(1-2\varepsilon)\left[\beta_1^+-(1-\beta_1^-)\right]}\right) \\
&\geq& (1-\delta)\left(b-c+c\frac{x_1^+}{(1-2\varepsilon)\left[\beta_1^+-(1-\beta_1^-)\right]}\right) + \delta
\frac{(1+\kappa_1)(\epsilon_0-\epsilon_1)-z_3}{\kappa_1}
\end{eqnarray}

If we use the selfish plan $\alpha^{\rm s}$ to decompose a payoff profile $\bm{v}$, we have
\begin{eqnarray}
v^1 = \delta (x_1^+ \gamma^1 + (1-x_1^+) \gamma^0).
\end{eqnarray}
Since the selfish plan is NE of the stage game, the incentive compatibility constraint is satisfied as long as we set $\gamma^1=\gamma^0$. Hence, we
have $v^1=\delta \gamma^0$. Again, noticing that $\gamma^0\in\left[\frac{(1+\kappa_1)(\epsilon_0-\epsilon_1)-z_3}{\kappa_1}, \frac{\kappa_1
z_2+(\kappa_2-1)z_3}{\kappa_1+\kappa_2}\right]$, we get an upper bound on $v^1$ that can be decomposed by $\alpha^{\rm s}$
\begin{eqnarray}
v^1 = \delta \gamma^0 \leq \delta \frac{\kappa_1 z_2+(\kappa_2-1)z_3}{\kappa_1+\kappa_2}.
\end{eqnarray}

In order to decompose any payoff profile $\bm{v}\in\mathcal{W}^{\bm{1}_N}$, the lower bound on $v^1$ that can be decomposed by $\alpha^{\rm a}$ must
be smaller than the upper bound on $v^1$ that can be decomposed by $\alpha^{\rm s}$, which leads to
\begin{eqnarray}
&(1-\delta)\left(b-c+c\frac{x_1^+}{(1-2\varepsilon)\left[\beta_1^+-(1-\beta_1^-)\right]}\right) + \delta
\frac{(1+\kappa_1)(\epsilon_0-\epsilon_1)-z_3}{\kappa_1} \leq \delta \frac{\kappa_1 z_2+(\kappa_2-1)z_3}{\kappa_1+\kappa_2} \nonumber \\
&\Rightarrow \delta \geq
\frac{b-c+c\frac{x_1^+}{(1-2\varepsilon)\left[\beta_1^+-(1-\beta_1^-)\right]}}{b-c+c\frac{x_1^+}{(1-2\varepsilon)\left[\beta_1^+-(1-\beta_1^-)\right]}
+ \frac{\kappa_1 z_2+(\kappa_2-1)z_3}{\kappa_1+\kappa_2} - \frac{(1+\kappa_1)(\epsilon_0-\epsilon_1)-z_3}{\kappa_1}}.
\end{eqnarray}

Finally, following the same procedure, we derive the lower bound on $\delta$ when all the users have rating $0$, namely $\bm{\theta}=\bm{0}_N$.
Similarly, in this case, the altruistic plan $\alpha^{\rm a}$ is equivalent to the fair plan $\alpha^{\rm f}$. Hence, we use the altruistic plan and
the selfish plan to decompose the payoff profiles.

If we use the altruistic plan $\alpha^{\rm a}$ to decompose a payoff profile $\bm{v}$, we have
\begin{eqnarray}
v^0 = (1-\delta)(b-c) + \delta (x_0^+ \gamma^1 + (1-x_0^+) \gamma^0),
\end{eqnarray}
and the incentive compatibility constraint
\begin{eqnarray}
(1-2\varepsilon)\left[\beta_0^+-(1-\beta_0^-)\right](\gamma^1-\gamma^0)\geq \frac{1-\delta}{\delta} c.
\end{eqnarray}

If we use the selfish plan $\alpha^{\rm s}$ to decompose a payoff profile $\bm{v}$, we have
\begin{eqnarray}
v^1 = \delta (x_0^+ \gamma^1 + (1-x_0^+) \gamma^0).
\end{eqnarray}

Note that when $\bm{\theta}=\bm{0}_N$, if we substitute $\beta_0^+$, $\beta_0^-$, $x_0^-$ with $\beta_1^+$, $\beta_1^-$, $x_1^-$, respectively, the
decomposability constraints become the same as those when $\bm{\theta}=\bm{1}_N$. Hence, we derive a similar lower bound on $\delta$
\begin{eqnarray}
\delta \geq
\frac{b-c+c\frac{x_0^+}{(1-2\varepsilon)\left[\beta_0^+-(1-\beta_0^-)\right]}}{b-c+c\frac{x_0^+}{(1-2\varepsilon)\left[\beta_0^+-(1-\beta_0^-)\right]}
+ \frac{\kappa_1 z_2+(\kappa_2-1)z_3}{\kappa_1+\kappa_2} - \frac{(1+\kappa_1)(\epsilon_0-\epsilon_1)-z_3}{\kappa_1}}.
\end{eqnarray}

Finally, we can obtain the lower bound on $\delta$ when the users have the same rating as
\begin{eqnarray}
\underline{\delta}^{\prime\prime} = \max_{\theta\in\{0,1\}}
\frac{b-c+c\frac{x_{\theta}^+}{(1-2\varepsilon)\left[\beta_{\theta}^+-(1-\beta_{\theta}^-)\right]}}{b-c+c\frac{x_{\theta}^+}{(1-2\varepsilon)\left[\beta_{\theta}^+-(1-\beta_{\theta}^-)\right]}
+ \frac{\kappa_1 z_2+(\kappa_2-1)z_3}{\kappa_1+\kappa_2} - \frac{(1+\kappa_1)(\epsilon_0-\epsilon_1)-z_3}{\kappa_1}}.
\end{eqnarray}

Together with the lower bound $\underline{\delta}^\prime$ derived for the case when the users have different ratings, we can get the lower bound
$\underline{\delta}$ specified in Condition~3 of Theorem~\ref{theorem:AchieveSocialOptimum}.

\section{Complete Description of the Algorithm}\label{EquilibriumStrategyComplete}

\begin{table}
\renewcommand{\arraystretch}{1.1}
\caption{The algorithm of constructing the equilibrium strategy by the rating mechanism.}{
\begin{tabular}{l}
\hline \hline
\textbf{Require:} $b$, $c$, $\varepsilon$, $\xi$; $\tau(\varepsilon)$, $\delta\geq\underline{\delta}(\varepsilon,\xi)$; $\bm{\theta}^0$ \hfill \emph{(inputs to the algorithm)}\\
\hline
\textbf{Initialization:} $t=0$, $\epsilon_0=\xi$, $\epsilon_1=\epsilon_0/(1+\frac{\kappa_2}{\kappa_1})$, $v^\theta=b-c-\epsilon_\theta$, $\bm{\theta}=\bm{\theta}^0$. \hfill \emph{(set the target payoffs)}\\
\hline
\textbf{repeat} \\
~~~~\textbf{if} $s_1(\bm{\theta})=0$ \textbf{then}  \\
~~~~~~~~\textbf{if} $v^0\geq (1-\delta)\left[b-c+\frac{(1-\varepsilon)\beta_0^++\varepsilon(1-\beta_0^-)}{(1-2\varepsilon)(\beta_0^+-(1-\beta_0^-)}c\right]+\delta\frac{\epsilon_0-\epsilon_1-z_3}{\kappa_1}$ \textbf{then} \\
~~~~~~~~~~~~$\alpha_0^t=\alpha^{\rm a}$ \hfill \emph{(determine the recommended plan)}\\
~~~~~~~~~~~~$v^0\leftarrow \frac{v^0}{\delta}-\frac{1-\delta}{\delta}\left[b-c+\frac{(1-\varepsilon)\beta_0^++\varepsilon(1-\beta_0^-)}{(1-2\varepsilon)(\beta_0^+-(1-\beta_0^-)}c\right]$,$v^1\leftarrow v^0+\frac{1-\delta}{\delta}\left[\frac{1}{(1-2\varepsilon)(\beta_0^+-(1-\beta_0^-)}c\right]$ \hfill \emph{(update the continuation payoff)}\\
~~~~~~~~\textbf{else} \\
~~~~~~~~~~~~$\alpha_0^t=\alpha^{\rm s}$ \hfill \emph{(determine the recommended plan)} \\
~~~~~~~~~~~~$v^0\leftarrow \frac{v^0}{\delta},~v^1\leftarrow v^0$ \hfill \emph{(update the continuation payoff)} \\
~~~~~~~~\textbf{end} \\
~~~~\textbf{elseif} $s_1(\bm{\theta})=N$ \textbf{then}  \\
~~~~~~~~\textbf{if} $v^1\geq (1-\delta)\left[b-c+\frac{(1-\varepsilon)\beta_1^++\varepsilon(1-\beta_1^-)}{(1-2\varepsilon)(\beta_1^+-(1-\beta_1^-)}c\right]+\delta\frac{\epsilon_0-\epsilon_1-z_3}{\kappa_1}$ \textbf{then} \\
~~~~~~~~~~~~$\alpha_0^t=\alpha^{\rm a}$ \hfill \emph{(determine the recommended plan)} \\
~~~~~~~~~~~~$v^1\leftarrow \frac{v^1}{\delta}-\frac{1-\delta}{\delta}\left[b-c+\frac{(1-\varepsilon)\beta_1^++\varepsilon(1-\beta_1^-)}{(1-2\varepsilon)(\beta_1^+-(1-\beta_1^-)}c\right]$, $v^0\leftarrow v^1-\frac{1-\delta}{\delta}\left[\frac{1}{(1-2\varepsilon)(\beta_1^+-(1-\beta_1^-)}c\right]$ \hfill \emph{(update the continuation payoff)} \\
~~~~~~~~\textbf{else} \\
~~~~~~~~~~~~$\alpha_0^t=\alpha^{\rm s}$ \hfill \emph{(determine the recommended plan)} \\
~~~~~~~~~~~~$v^1\leftarrow \frac{v^1}{\delta},~v^0\leftarrow v^1$ \hfill \emph{(update the continuation payoff)} \\
~~~~~~~~\textbf{end} \\
~~~~\textbf{else}  \\
~~~~~~~~\textbf{if} $\frac{1+\kappa_1 x_0^+}{x_1^+-x_0^+}v^1-\frac{1+\kappa_1 x_1^+}{x_1^+-x_0^+}v^0 \leq \delta z_3 - (1-\delta)\kappa_1(b-c)$ \textbf{then} \\
~~~~~~~~~~~~$\alpha_0^t=\alpha^{\rm a}$ \hfill \emph{(determine the recommended plan)} \\
~~~~~~~~~~~~$v^{1\prime}\leftarrow \frac{1}{\delta}\frac{(1-x_0^+)v^1-(1-x_1^+)v^0}{x_1^+ - x_0^+}-\frac{1-\delta}{\delta}(b-c)$, $v^{0\prime}\leftarrow \frac{1}{\delta}\frac{x_1^+v^0-x_0^+v^1}{x_1^+ - x_0^+}-\frac{1-\delta}{\delta}(b-c)$ \hfill \emph{(update the continuation payoff)} \\
~~~~~~~~~~~~$v^1\leftarrow v^{1\prime},~v^0\leftarrow v^{0\prime}$ \\
~~~~~~~~\textbf{else} \\
~~~~~~~~~~~~$\alpha_0^t=\alpha^{\rm f}$ \hfill \emph{(determine the recommended plan)} \\
~~~~~~~~~~~~$v^{1\prime}\leftarrow \frac{1}{\delta}\frac{(1-x_0^+)v^1-(1-x_{s_1(\bm{\theta})}^+)v^0}{x_{s_1(\bm{\theta})}^+-x_0^+}-\frac{1-\delta}{\delta}\frac{(b-\frac{s_1(\bm{\theta})-1}{N-1}c)(1-x_0^+)-(\frac{s_0(\bm{\theta})-1}{N-1}b-c)(1-x_{s_1(\bm{\theta})}^+)}{x_{s_1(\bm{\theta})}^+-x_0^+}$ \hfill \emph{(update the continuation payoff)} \\
~~~~~~~~~~~~$v^{1\prime}\leftarrow \frac{1}{\delta}\frac{x_0^+v^1-x_{s_1(\bm{\theta})}^+v^0}{x_0^+-x_{s_1(\bm{\theta})}^+}-\frac{1-\delta}{\delta}\frac{(b-\frac{s_1(\bm{\theta})-1}{N-1}c)x_0^+-(\frac{s_0(\bm{\theta})-1}{N-1}b-c)x_{s_1(\bm{\theta})}^+}{x_0^+-x_{s_1(\bm{\theta})}^+}$ \\
~~~~~~~~~~~~$v^1\leftarrow v^{1\prime},~v^0\leftarrow v^{0\prime}$ \\
~~~~~~~~\textbf{end} \\
~~~~\textbf{end} \\
~~~~$t\leftarrow t+1$, determine the rating profile $\bm{\theta}^t$, set $\bm{\theta}\leftarrow\bm{\theta}^t$ \\
\textbf{until} $\varnothing$ \\
\hline \hline
\end{tabular}}
\label{table:EquilibriumStrategyComplete}
\end{table}

%

\end{document}